\newcommand{\g}{\mathfrak{g}}
\newcommand{\p}{\partial}
\newcommand{\bc}{\bar{c}}
\newcommand{\RR}{\mathbb{R}}
\newcommand{\CC}{\mathbb{C}}
\newcommand{\ZZ}{\mathbb{Z}}
\newcommand{\LL}{{\mathcal L}}
\newcommand{\M}{{\mathcal M}}
\newcommand{\ep}{\epsilon}
\newcommand{\pa}{\partial}
\newcommand{\Ga}{\Gamma}
\newcommand\mat[1]{\bigl(\begin{smallmatrix} #1 \end{smallmatrix}\bigr)}
\newcommand{\Pf}{\mbox{Pf}}
\newcommand{\cA}{\mathcal A}
\newcommand{\im}{\mbox{Im}}
\newcommand{\tr}{\operatorname{tr}}
\newcommand{\Ker}{\operatorname{Ker}}
\newcommand{\Tr}{\operatorname{Tr}}
\newcommand{\End}{\operatorname{End}}
\newcommand{\Aut}{\operatorname{Aut}}
\newcommand{\Div}{\operatorname{div}}
\newcommand{\sign}{\operatorname{sign}}
\begin{document}

\title*{Lectures on quantization of gauge systems}

\author{Nicolai Reshetikhin}

\institute{Department of Mathematics, University
of California, Berkeley, CA 94720-3840, U.S.A. \at
KdV Institute for Mathematics, University of Amsterdam,
             Science Park 904
1098 XH Amsterdam, The Netherlands\at Niels Bohr Visiting Professor,
Department of Mathematics, University of Aarhus, Denmark\at
\email{reshetik@math.berkeley.edu}, \today}

\maketitle

\abstract{A gauge system is a classical field theory where among the
fields there are connections in a principal $G$-bundle over the
space-time manifold and the classical action is either invariant or transforms appropriately with respect to the action of the gauge
group.  The lectures are focused on the path integral
quantization of such systems.  Here two main examples of gauge
systems are Yang-Mills and Chern-Simons.}

\section{Introduction}

Gauge field theories are examples of classical field theories
with a degenerate action functional. The degeneration is due to
the action of an infinite-dimensional gauge group. Among most known examples are the Einstein gravity and Yang-Mills theory. The Faddeev-Popov (FP) method gives a recipe how to construct a quantization of a classical gauge field theory in terms of Feynman diagrams. Such quantizations are known as perturbative or semiclassical quantizations. The appearance of so-called ghost fermion fields is one of the important aspects of the FP method
\cite{FP}.

The ghost fermions appear in the FP approach as a certain technical
tool. Their natural algebraic meaning is clarified in the BRST
approach (the letters stand for Becchi, Rouet, Stora, and,
independently, Tyutin who discovered this formalism). In the BRST
setting, fields and ghost fermions are considered together as
coordinates on a super-manifold. Functions on this super-manifold
are interpreted as elements of the Chevalley complex of the Lie
algebra of gauge transformations. In this setting the FP action is a
specific cocycle and the fact that the integral with the FP action
is equal to the original integral with the degenerate action is a
version of the Lefschetz fixed point formula.

Among all gauge systems the Yang-Mills theory is most interesting
for physics because of its role in the standard model in high energy physics \cite{Wein}. At the moment there is a mathematically acceptable semiclassical (perturbative) definition of the Yang-Mills theory where the partition functions (amplitudes) are defined
as formal power series of Feynman integrals. The ultraviolet
divergencies in Feynman diagrams involving FP ghost fields
can be removed by renormalization \cite{tH}, and the corresponding
renormalization is asymptotically free \cite{Gr}. All these
properties make the Yang-Mills theory so important for
high energy physics.

A mathematically acceptable definition of the path integral in the $4$-dimensional Yang-Mills theory which goes beyond  perturbation theory is still an open problem. One possible
direction which may give such a definition is constructive
field theory, where the path integral is treated as a limit
of finite-dimensional approximations.

Nevertheless, even mathematically loosely defined, the path integral remains a powerful tool for {\it phenomenological} mathematical and physical research in quantum field theory. It predicted many interesting conjectures, many of which were proven later by rigorous methods.

The main goal of these notes is a survey of semiclassical quantization of the Yang-Mills and of the Chern-Simons theories.
These lectures can be considered as a brief introduction to the framework of quantum field theory (along the lines outlined
by Atiyah and Segal for topological and conformal field theories).
The emphases are given to the semiclassical quantization of classical field theories.

In the Einstein gravity the metric on space-time is a
field. It is well known in dimension four
that the semiclassical (perturbative) quantization of Einstein gravity fails to produces
renormalizable quantum field theory. It is also known that three-dimensional quantum gravity is related to the Chern-Simons theory for the non-compact Lie group $SL_2$. In this lectures we will not go as far as to discuss this theory, but will focus on the
quantum Chern-Simons field theory for compact Lie groups.

We start with a sketch of classical field theory, with some examples
such as a non-linear sigma model, the Yang-Mills theory, and the
Chern-Simons theory. Then we outline the framework of quantum field
theory following Atiyah's and Segal's descriptions of basic
structures in topological and in conformal field theories. The emphasis
is given to the semiclassical quantization. Then Feynman diagrams
are introduced in the example of finite-dimensional oscillatory
integrals. The Faddeev-Popov and BRST methods are first introduced
in the finite-dimensional setting.

The last two sections contain the definition of the semiclassical
quantization of the Yang-Mills and of the Chern-Simons theories. The
partition functions in such theories are given by formal power
series, where the coefficients are determined by Feynman diagrams.

These notes are based on lectures given when the author visited
Aarhus University in 2006-2007 and on lectures given during the
summer school at Roskilde University's S{\o}minestation Holb{\ae}k,
Denmark, May 2008. The notes benefited from the author's discussions
with many people. Special thanks to J. Andersen, A. Cattaneo, L. Faddeev, V.
Fock, D. Freed, T. Johnson-Freyd, D. Kazhdan, A. Losev, J. Lott, P. Mnev, A. Schwarz, and L. Takhtajan. The author is also grateful to the referees and to B. Boo{\ss}-Bavnbek for many corrections and suggestions which improved the original draft. Finally, the author thanks A. W. Budtz for helping with the figures.

This work was supported by the Danish National
Research Foundation through the
Niels Bohr initiative. The author is
grateful to Aarhus University for the
hospitality. The work was also supported by the NSF grant DMS-0601912 and by DARPA.

\section{Local Lagrangian classical field theory}

\subsection{Space-time categories}
Here we will focus on Lagrangian quantization of Lagrangian classical field theories.

In most general terms {\it objects} of a {\it $d$-dimensional
space-time category} \index{space-time category} are
$(d-1)$-dimensional manifolds (space manifolds). In specific
examples of space-time categories space manifolds are equipped with
a structure (orientation, symplectic structure, Riemannian metric, etc.).

A {\it morphism} between two space manifolds $\Sigma_1$ and
$\Sigma_2$ is a  $d$-dimensional manifold $M$, possibly with a
structure (orientation, symplectic, Riemannian metric, etc.), together with the identification of $\Sigma_1\sqcup \overline{\Sigma_2}$ with the boundary of $M$. Here $\overline{\Sigma}$ is the manifold
$\Sigma$ with reversed orientation.

{\it Composition} of morphisms is the gluing along the common boundary. Here are examples of space-time categories.

\vspace{0.5cm}

{\bf The $d$-dimensional topological category}. \index{topological
category} Objects are smooth, compact, oriented $(d-1)$-dimensional
manifolds. A morphism between $\Sigma_1$ and $\Sigma_2$ is the
homeomorphism class of $d$-dimensional compact oriented manifolds
with $\p M=\Sigma_1\sqcup \overline{\Sigma_2}$ with respect to
homeomorphisms  connected to the identity and constant at the boundary. The orientation on $M$
should agree with the orientations of $\Sigma_i$ in a natural way.

The composition consists of gluing two morphisms along the common boundary and then taking the homeomorphism class of the result
with respect to homeomorphisms constant at the remaining boundary.

\vspace{0.5cm}

{\bf The $d$-dimensional Riemannian category}. \index{Riemannian
category} Objects are $(d-1)$ Riemannian manifolds.  Morphisms
between two oriented $(d-1)$-dimensional Riemannian manifolds $N_1$
and $N_2$ are isometry classes of oriented $d$-dimensional Riemannian manifolds $M$ (with respect to the isometries constant
at the boundary),
such that $\pa M =N_1\sqcup \overline{N_2}$. The orientation on all
three manifolds should naturally agree, and the metric on $M$ agrees
with the metric on $N_1$ and $N_2$ on a collar of the boundary. The
composition is the gluing of such Riemannian cobordisms. For 
details see \cite{T-S}.

This category is important for many reasons. One of them is that
it is the underlying structure for statistical quantum field theories \cite{ID}.

\vspace{0.5cm}

{\bf The $d$-dimensional metrized cell complexes}. Objects are $(d-1)$-dimensional oriented metrized cell complexes (edges have length,
2-cells have area, etc.). A morphism between two such complexes
$C_1$ and $C_2$ is a metrized complex $C$ together with two
embeddings of metrized cell complexes $i: C_1\hookrightarrow C$,
$j: \overline{C_2}\hookrightarrow C$ where $i$ is orientation reversing and $j$ is orientation preserving. The composition is the gluing of
such triples along the common $(d-1)$-dimensional subcomplex.

This category has a natural subcategory which consists of
metrized cell approximations of Riemannian manifolds.

It is the underlying category for all lattice models
in statistical mechanics.

\vspace{0.5cm}

{\bf The Pseudo-Riemannian category} The difference between this category and the Riemannian category is that morphisms
are pseudo-Riemannian with the signature $(d-1,1)$.
This is the most interesting category for physics.
When $d=4$ it represents the space-time structure of our universe.

\subsection{Local Lagrangian classical field theory}

The basic ingredients of a $d$-dimensional local Lagrangian
classical field theory \index{local Lagrangian classical field
theory} are:
\begin{itemize}

\item For each space-time we assign the space of fields. Fields can be sections of a fiber bundle on a space-time, connections on a fiber bundle over a space-time, etc.

\item The dynamics of the theory is determined by a local Lagrangian. It assigns to a field a top volume form on $M$
    which depends locally on the field. Without giving a general definition we will give illustrating examples of local actions.
Assume that fields are functions $\phi: M\to F$, and that $F$ is a Riemannian manifold. An example of an local Lagrangian for a field theory in a Riemannian
category with such fields is
\begin{equation}\label{nl-sigma}
\LL(\phi(x), d\phi(x))=({1 \over 2} (d\phi(x), d\phi(x))_F-V(\phi(x)))dx,
\end{equation}
where $(.,.)_F$ is the metric on $F$, the scalar product on forms is induced by the metric on $M$, and $dx$ denotes
the Riemannian volume form on $M$.

 The action functional is the integral
    \[
    S_M[\phi]= \int_M \LL (\phi, d\phi).
    \]
Solutions to the Euler-Lagrange equations for $S_M$
form a (typically infinite-dimensional) manifold
$X_M$.

 \item A boundary condition is a constraint on boundary values
of fields which in ''good cases`` intersects with $X_M$ over a discrete set. In other words, there is a discrete set of
solutions to the Euler-Lagrange equations with given boundary conditions.

\end{itemize}

A $d$-dimensional classical field theory can be regarded as a functor
from the space-time category to the category of sets. It
assigns to a $(d-1)$-dimensional space the set of possible boundary values of fields, and
to a space-time the set of possible solutions to the
Euler-Lagrange equations with
these boundary values.

Some examples of local classical field theories are outlined
in the next sections.

\subsection{Classical mechanics}\label{cMech}
In classical mechanics space-time is a Riemannian 1-dimensional
manifold with flat metric, that is an interval. Fields in classical
Lagrangian mechanics \index{classical Lagrangian mechanics} are
smooth mappings of an interval of the real line to a smooth
finite-dimensional manifold $N$, called the {\it configuration
space} (parametrized paths).

The action in classical mechanics is determined by a choice of the Lagrangian function $\LL: TN\to \RR$ and is
\[
S_{[t_2,t_1]}[\gamma]=\int_0^t \LL(\dot \gamma(\tau), \gamma(\tau))d\tau,
\]
where $\gamma=\{\gamma(t)\}_{t_1}^{t_2}$ is a parametrized path in
$N$.

The Euler-Lagrange equations in terms of local coordinates $q=(q^1,\dots, q^n)\in N$ and $\xi=(\xi^1,\dots, \xi^n)\in T_qN$
are
\[
-\sum_{i=1}^n\frac{d}{dt} \frac{\pa \LL}{\pa \xi^i}(\dot\gamma(t), \gamma(t))+\frac{\pa \LL}{\pa q^i}(\dot\gamma(t), \gamma(t))=0
\]
where $\LL(\xi, q)$ is the value of the Lagrangian at the point $(\xi, q)\in TN$.

The Euler-Lagrange equations are a non-degenerate system of second order differential equations, if $\frac{\pa^2 \LL}{\pa \xi^i\pa \xi^j}(\xi,q)$ is non-degenerate for all $(\xi, q)$. In realistic
systems it is assumed to be positive.

Even when the Euler-Lagrange equations are satisfied, the variation of the action is still not necessarily vanishing. It is given by boundary terms:
\begin{equation}\label{b-var}
\delta S_{[t_2,t_1]}[\gamma]=\frac{\pa \LL}{\pa \xi^i}(\dot\gamma(t), \gamma(t))\delta\gamma(t)^i|_{t_1}^{t_2}\, .
\end{equation}

Imposing Dirichlet boundary conditions means fixing boundary points of the path: $\gamma(t_1)=q_1\in N$, and $\gamma(t_2)=q_2\in N$.
With these conditions the variation of $\gamma$ at
the boundary of the interval is zero and the boundary terms
in the variation of the action vanish.

A concrete example of  classical Lagrangian mechanics is the motion
of a point particle on a Riemannian manifold in the potential
force field. In this case
\begin{equation}\label{Lpp}
\LL(\xi, q)=\frac{m}{2}(\xi,\xi)+V(q),
\end{equation}
where $(.,.)$ is the metric on $N$ and $V(q)$ is the potential.

\subsection{First order classical mechanics}\label{1-ord-Lagmech}
The non-degeneracy condition \index{non-degeneracy condition} of
$\frac{\pa^2 \LL}{\pa \xi^i\pa \xi^j}(\xi,q)$ is violated in an
important class of first order Lagrangians.

Let $\alpha$ be a $1$-form on $N$ and $b$ be a function on $N$.
Define the action
\[
S_{[t_2,t_1]}[\gamma]=\int_{t_1}^{t_2} (\left< \alpha(\gamma(t)),\dot\gamma(t)\right> + b(\gamma(t)))dt,
\]
where $\gamma$ is a parametrized path.

The Euler-Lagrange equations for this action are:
\[
\omega(\dot\gamma(t))+db(\gamma(t))=0,
\]
where $\omega=d\alpha$. Naturally, the first order Lagrangian system
\index{first order Lagrangian system} is called {\it
non-degenerate}, if the form $\omega$ is non-degenerate. It is clear
that a non-degenerate first order Lagrangian system defines a
symplectic structure on a manifold $N$. The Euler-Lagrange equations
for such system are equations for flow lines of the Hamiltonian on
the symplectic manifold $(N, \omega)$ generated by the Hamiltonian
$H=-b$. 

Assuming that $\gamma$ satisfies the Euler-Lagrange equations
the variation of the action does not yet vanish. It is
given by the boundary terms (\ref{b-var}):
\begin{equation*}
\delta  S_{[t_2,t_1]}[\gamma]=\left< \alpha(\gamma(t)),\delta\gamma(t)\right> |_{t_1}^{t_2}\, .
\end{equation*}
If $\gamma(t_1)$ and $\gamma(t_2)$ are constrained to Lagrangian
submanifolds in $L_{1,2}\subset N$ with
$TL_{1,2}\subset\ker(\alpha)$, these terms vanish.

Thus, constraining boundary points of $\gamma$ to such a Lagrangian
submanifold is a natural boundary condition for non-degenerate first
order Lagrangian systems. As we will see, this is a part of the more
general concept where Lagrangian submanifolds define natural
boundary conditions for Hamiltonian systems.

\subsection{Scalar fields}\label{cBose} The space-time in such theory is a Riemannian category. Fields are smooth mappings from a space-time to $\RR$ (sections of the trivial fiber bundle $M\times \RR$). The action functional is
\[
S_M[\phi]=\int_M \left({1 \over 2} (d\phi(x), d\phi(x))-V(\phi(x))\right)dx,
\]
where the first term is determined by the metric on $M$
and $dx$ is the Riemannian volume form. The Euler-Lagrange equations are:
\begin{equation}\label{EUB}
\Delta \phi +V'(\phi)=0.
\end{equation}

The Dirichlet boundary conditions \index{Dirichlet boundary
conditions} fix the value of the field at the boundary $\phi|_{\pa
M}=\eta$ for some $\eta: \pa M\to \RR$. The normal derivative of the
field at the boundary varies for these boundary conditions.

\subsection{Pure Euclidean $d$-dimensional Yang-Mills}\label{cYM}
\index{Yang-Mills}
\subsubsection{Fields, the classical action, and the gauge invariance}
The space-time is a Riemannian $d$-dimensional manifold. Fields are connections on a principle $G$-bundle $P$ over $M$, where
$G$ is a compact Lie group (see for example \cite{Fr}) for basic definitions). Usually it is a simple (or Abelian) Lie group.

The action functional is given by the integral
\[
S_M[A]=\int_M \frac{1}{2} \tr \left< F(A),F(A)\right> dx,
\]
where $\left< .,.\right> $ is the scalar product of two-forms on $M$ induced by the metric, $\tr (AB)$ is the Killing form on the Lie algebra $\g=Lie(G)$, $F(A)$ is the curvature of $A$, and $dx$ is the volume form.

The Euler-Lagrange equations for the Yang-Mills action
are:
\[
d_A^* F(A)=0.
\]

The Yang-Mills action is invariant with respect to gauge
transformations. \index{gauge transformations} Recall that gauge
transformations are bundle automorphisms (see for example
\cite{Fr}). Locally, a gauge transformation acts on a connection as
\[
A\mapsto A^g=g^{-1}Ag+g^{-1}dg.
\]
Here we assume that $G$ is a matrix group and $g^{-1}dg$ is the
Maurer-Cartan form on $G$. Now let us describe the Dirichlet
boundary conditions \index{Dirichlet boundary conditions} for the
Yang-Mills theory. Fix a connection $A^b$ on $P|_{\pa M}$. The
Dirichlet boundary conditions on the connection $A$ for the
Yang-Mills theory require that $A^b$ is the pull-back of $A$ to the
boundary induced by the embedding $i: \pa M\to M$, i.e.
$i^*(A)=A^b$. Gauge classes of Dirichlet boundary conditions define
gauge classes of solutions to the Yang-Mills equations.
See \cite{FU} for more details about classical Yang-Mills theory.

\subsection{Yang-Mills field theory with matter} Let $V$ be a finite-dimensional representation of the Lie group $G$, and $V_P=P\times_G V$ be the
vector bundle over $M$ associated to  a principal $G$-bundle $P$. Assume that $V$ has
an invariant scalar product $(.,.)$.

The classical Yang-Mills theory with matter fields, which are sections of $V_P$, has the action functional
\[
S[\Phi, A]=\int_M \left(\frac{1}{2} \tr \left< F(A),F(A)\right> + \frac{1}{2}(\left< d_A\Phi, d_A\Phi\right> )+U(\Phi)\right)dx,
\]
where $U$ is a $G$-invariant function on $V$ and $\left< .,.\right> $ is the
scalar product on forms defined by the metric on $M$. The function $U$ describes the self-interaction of the scalar field $\Phi$.

The Euler-Lagrange equations in this theory are
\[
*d_A F(A) + j_A=0, \ \ \  d_A^*d_A\Phi-U'(\Phi)=0,
\]
where $j_A\in \Omega^1(M,\g)$ is the one-form defined as
$\tr \left< \omega, j_A\right> =\left< \omega \Phi, d_A \Phi\right> $.

Dirichlet boundary conditions in this theory are determined by the
gauge class of the boundary values of the connection $A$ and of the scalar field $\Phi$.

\subsection{$3$-dimensional Chern-Simons theory}\label{cCS}

In this case the space-time category is the category of
3-dimensional topological cobordisms. Fix a smooth 3-dimensional
manifold $M$. The space of fields of the Chern-Simons theory is the
space of connections on a trivial principal $G$-bundle $P$ over $M$
(just as in the Yang-Mills theory). The choice of a simple compact
Lie group $G$ is part of the data.

The Chern-Simons form is the 3-form on $P$:
\[
\alpha(A)=\tr \left(A\wedge dA-\frac{2}{3} A\wedge [A\wedge A]\right).
\]
Because the bundle is trivial, $\alpha(A)$ defines a 3-form on $M$
which we will also denote by $\alpha(A)$. The Chern-Simons action
\index{Chern-Simons action} is
\[
CS_M(A)=\int_M \alpha(A).
\]

This action is of the first order (in derivatives of $A$). It is
very different from the Yang-Mills theory
where the action is of the second order.

The variation of the Chern-Simons action is
\[
\delta CS_M(A)=\int_M \tr \left( F(A)\wedge \delta A\right)
+\int_{\pa M} \tr \left(A_\tau\wedge \delta A_\tau\right),
\]
where $A_\tau, \delta A_\tau$ are pull-backs to the boundary of
$A$ and $\delta A$.

The Euler-Lagrange equations for this Lagrangian are
\[
F(A)=0.
\]
They guarantee that the first term (the bulk) in the variation vanishes. Solutions to the Euler-Lagrange equations are flat connections in $P$ over $M$. On the space of solutions to the Euler-Lagrange equations we have
\begin{equation}\label{theta}
\delta CS_M(A)=(\Theta, \delta A_\tau),
\end{equation}
where $\Theta$ is a one form on the space $C_{\pa M}$ of connections
on $P|_{\pa M}\to \pa M$ defined by (\ref{theta}). Let $D$ be the differential
acting on forms on the space $C_{\pa M}$.
The form $\omega=D\Theta$ is non-degenerate and defines
a symplectic structure on $C_{\pa M}$ \cite{AB}:
\begin{equation}\label{s-conn}
\omega(\delta A, \delta B)=\int_{\pa M} \tr \delta A\wedge \delta B.
\end{equation}

The Chern-Simons action is invariant with respect to the action 
of the Lie algebra of gauge transformations ( for details 
on gauge invariane with respect to global gauge transformations see \cite{Fr}). The action of the gauge group is Hamiltonian on
$(C_{\pa M}, \omega)$. The result of the Hamiltonian reduction
of this symplectic space with respect to the action of
the gauge group is
the finite-dimensional moduli space $F(\pa M)$
of gauge classes of flat connections together with reduced
symplectic structure.

Gauge orbits through flat connections from $C_{\pa M}$ which
continue to flat connections on $P$ over $M$ form a Lagrangian
submanifold  $L_M\subset F(\pa M)$. The corresponding first order
Hamiltonian system describes the reduced Chern-Simons theory as a
classical Hamiltonian field theory. 

\section{Hamiltonian local classical field theory}
\index{Hamiltonian local classical field theory}
\subsection{The framework} An $n$-dimensional Hamiltonian field theory in a category of
space-time is an assignment of the following data to manifolds
which are the objects and morphisms of this category:
\begin{itemize}

\item A symplectic manifold $S(M_{n-1})$ to an $(n-1)$-dimensional
manifold $M_{n-1}$.

\item A Lagrangian submanifold $L(M_n)\subset S(\partial M_n)$ to
each $n$-dimensional manifold $M_n$.

\end{itemize}

These data shall satisfy the following axioms:
\begin{enumerate}

\item $S(\O)=\{0\}$.

\item $S(M_1\sqcup  M_2)= S(M_1)\times S(M_2)$.

\item $L(M_1\sqcup M_2)=L(M_1)\times L(M_2)$ with $L(M_i)\subset
S(\partial M_i)$.

\item $(S(\overline{M}),\omega)=(S(M),-\omega)$.

\item An orientation preserving diffeomorphism $f:M_1\to M_2$  of
$(n-1)$-dimensional manifolds lifts to a symplectomorphism $s(f):
S(M_1)\to S(M_2)$.

\item Assume that $\partial M=(\partial M)_1\sqcup (\partial
M)_2\sqcup (\partial M)'$ and that there is an orientation reversing
diffeomorphism $f:(\partial M)_1\to \overline{(\partial M)_2}$.
Denote by $M_f$ the result of gluing $M$ along $(\partial M)_1\simeq
\overline{(\partial M)_2}$ via $f$:
\[
M_f=M/\left< (\partial M)_1\simeq \overline{(\partial M)_2}\right> .
\]
The Lagrangian submanifold corresponding to the result of the gluing should be
\begin{multline}
L(M_f)=\{x\in S((\partial M)')|\mbox{ such that there exists }
y\in S(\partial M)_1 \\ \mbox{ with } (y, s(f)(y), x)\in L(M)\}.
\end{multline}
Notice that $\partial M_f=(\partial M)'$ by definition. This axiom
is known as the gluing axiom. In classical mechanics the gluing
axiom is the composition of the evolution at consecutive intervals
of time.\footnote{I am grateful to V. Fock for many illuminating
discussions of Hamiltonian aspects of field theory, see also
\cite{Fo}.}
\end{enumerate}

Note, that $M$ does not have to be connected.

A {\it boundary condition} in the Hamiltonian formulation is a
Lagrangian submanifold $L^b(\pa M)$ in the symplectic manifold
$S(\pa M)$, assigned to the boundary $\pa M$ of the manifold $M$,
$L^b(\pa M)\subset S(\pa M)$. It factorizes into the product of
Lagrangian submanifolds corresponding to connected components of the
boundary:
\[
L^b((\pa M)_1\sqcup (\pa M)_2 )=L^b((\pa M)_1)\times L^b((\pa M)_2).
\]
Classical solutions with given boundary conditions are intersection points $L^b(\pa M)\cap L(M)$.

In order to glue classical solutions along the common boundary
smoothly (composition of classical trajectories in classical mechanics) we assume that boundary Lagrangian submanifolds are fibers of Lagrangian fiber bundles. That is, we assume that for each connected
component $(\pa M)_i$ of the boundary a symplectic manifold $S((\pa
M)_i)$ is given together with a Lagrangian fiber bundle $\pi_i:
S((\pa M)_i)\to B((\pa M)_i)$ over some base space $B((\pa M)_i)$
with fibers defining the boundary conditions.

\subsection{Hamiltonian formulation of local Lagrangian
field theory}

Here again, instead of giving general definitions we will give a few illustrating examples.

\subsubsection{Classical Hamiltonian mechanics}\label{hMec}
\index{classical Hamiltonian mechanics}

{\bf 1.} Let $H\in C^\infty(M)$ be the Hamiltonian function generating
Hamiltonian dynamics on a symplectic manifold $M$\footnote{
Recall that Hamiltonian mechanics is a dynamical system on
a symplectic manifold $(M,\omega)$ with trajectories being flow lines of the Hamiltonian vector field $v_H$ generated by a function $H\in C^\infty(M)$, $v_H=\omega^{-1}(dH)$. Here $\omega^{-1}:T^*M\to TM$ is the isomorphism induced by the
symplectic structure on $M$.}. Here is
how such a system can be reformulated in the framework of a Hamiltonian field theory.

Objects of the corresponding space-time category are points;
morphisms are intervals $I=[t_1,t_2]\subset \RR$ with the flat
metric. The symplectic manifold assigned to the boundary of the
space-time is
\[
S(t_1,t_2)=\overline{M}\times M,
\]
where $M$ is the phase space of the Hamiltonian system and
$\overline{M}$ is the phase space with the opposite sign of the symplectic form.

The Lagrangian subspace $L(I)$ in $S(t_1,t_2)$ is the set of pairs
of points $(x,y)$ where $x$ is the initial point of a classical
trajectory generated by $H$ and $y$ is the target point of this
trajectory.

A pair of Lagrangian fiber bundles $\pi_1: M\to B_1$ and $\pi_2:
M\to B_2$ with suitable base spaces $B_1, B_2$ defines a ''complete``
family of boundary conditions corresponding to the two components
of the boundary of $I$.

Classical trajectories with such boundary conditions are
intersection points in $\bigl(\pi_1^{-1}(b_1)\times
\pi_2^{-1}(b_2)\bigr)\cap L(I)$, where $b_1\in B_1, b_2\in B_2$.

{\bf 2.} The Lagrangian mechanics on $N$ (see Section \ref{cMech})
is equivalent (for non-degenerate Lagrangians) \index{non-degenerate
Lagrangians} to the Hamiltonian mechanics on $M=T^*N$ with the
canonical symplectic form. The Hamiltonian functions are given by
the Legendre transform \index{Legendre transform} of the Lagrangian:
\[
H(p,q)=\max_{\xi\in T_qN}(p(\xi)-L(\xi,q)).
\]

The boundary conditions $q(t_1)=q_1, q(t_2)=q_2$ correspond
to Lagrangian fiber bundles $T^*N\to N$ for each component of the boundary of the interval.

The Hamiltonian of a point particle on a Riemannian
manifold is:
\[
H(p,q)=\frac{m}{2} (p,p)+V(q),
\]
where $(p,p)$ is uniquely determined by the metric on $N$.

{\bf 3.} A non-degenerate first order Lagrangian defines a symplectic structure on the configuration space $M$ given by $\omega=d\alpha$. Solutions to the Euler-Lagrange equations in such
system are flow lines of the Hamiltonian vector field generated by the function  $b(q)$, see Section \ref{1-ord-Lagmech}. So, first order non-degenerate Lagrangian systems are simply Hamiltonian systems on exact symplectic manifolds (i.e. on symplectic manifolds where the form $\omega$ is exact).

\subsubsection{Bose field theory} In this case the symplectic manifold $S(N)$ assigned to a $(d-1)$-dimensional manifold $N$ is an infinite-dimensional linear symplectic manifold which is the cotangent bundle to the space of real-valued smooth functions on $N$.

Since $C^\infty(N)$ is a linear space its tangent space at any point (can be thought as the space of infinitesimal
variations of functions on $N$) can be naturally identified
with the $C^\infty(N)$ itself. For the cotangent bundle to 
$C^\infty(N)$ we have $T^*C^\infty(N)=C^\infty(N)\oplus \Omega^{top}(N)$ with the symplectic form
\[
\omega((\delta\eta_1,\delta f_1),(\delta\eta_2, \delta f_2))=\int_N(\delta\eta_1 \delta f_2-\delta\eta_2 \delta f_1),
\]
where  $(\delta\eta_i, \delta f_i)$ are tangent vectors to $T^*C^\infty(N)$.

The Lagrangian fibration corresponding to the Dirichlet boundary conditions is the standard projection $\pi: T^*C^\infty(N)\to C^\infty(N)$.

The Lagrangian submanifold $L(M)\subset S(\pa M)$ is the space of
pairs $(f, \eta)$ where $f$ is the boundary values of a solution
$\phi$ to the Euler-Lagrange equations and $\eta=f_n dx$, $f_n$ is the normal derivative of $\phi$ at the boundary and $dx$ is the Riemannian volume form. Solutions to the
Euler-Lagrange equations with given Dirichlet boundary condition
$\phi|_{\pa M}=f$ are intersection points of $L(M)$ with the
Lagrangian fiber $\pi^{-1}(f)$.

\subsubsection{Yang-Mills theory} Here we will discuss only the Yang-Mills theory where fields are connections in a trivial
principal $G$-bundle. The natural symplectic manifold $\tilde S(N)$ assigned to the $(d-1)$-dimensional manifold $N$ in such field theory is the cotangent bundle to the space $\cA(N)$ of connections in a trivial principal
$G$-bundle over $N$ with the natural symplectic structure.
It can be identified naturally with $\Omega^1(N)\oplus \Omega^{d-1}(N)$ with the symplectic form
\[
\omega((\delta\eta_1,\delta A_1), (\delta\eta_2,\delta A_2))=\int_N
(\tr(\delta\eta_1\wedge\delta A_2) - \tr( \delta\eta_2\wedge
\delta A_1 ).
\]
Here we assume that $G$ is a matrix group and $tr(ab)$ is the Killing form, $(\delta\eta_i,\delta A_i)$ are pairs of $1$ and $d-1)$ $\g$-valued forms on $N$ which are tangent vectors to
the cotangent bundle to the space of connections. This is the so-called non-reduced phase space.

In the Hamiltonian formulation of the Yang-Mills theory, the
symplectic manifold $S(\pa M)$ is the Hamiltonian reduction of
$T^*\cA(\pa M)$ with respect to the action of the gauge group. 
This manifolds is also the cotangent bundle to the space of
gauge classes of connections on $G\times \pa M$. The
Lagrangian submanifold $L(M)\subset S(\pa M)$ is the 
image (with respect to the Hamiltonian reduction 
of the subspace of pairs $(\eta, a)$ where $a$ is the pull-back to the boundary of a solution $A$ to the Yang-Mills equations, and $\eta$ is the (d-2)-form which is the pull-back to the boundary of
$*dA$, here $*$ is the Hodge operation corresponding to the metric
on $M$ .

\subsubsection{Chern-Simons} The main difference between the Yang-Mills theory and the Chern-Simons theory is
that the YM theory is a second order theory while the
CS is a first order theory. Solutions to the Euler-Lagrange equations are flat connections on
$M$, and their pull-backs to the boundary are  flat connections on the boundary $\pa M$.

In the Hamiltonian formulation of the Chern-Simons theory,
\index{Hamiltonian formulation of the Chern-Simons theory} the
symplectic manifold assigned to the boundary is the moduli space of
flat connections on $P_{\pa M}$ with the symplectic structure
described in \cite{AB}. The Lagrangian submanifold
$L(M)$ is the space of gauge classes of flat connections on $P_{\pa
M}$ which continue to flat connections on $P$.

\section{Quantum field theory framework}

\subsection{General framework of quantum field theory}

We will follow the framework of local quantum field theory
\index{framework of local quantum field theory} which was outlined
by  Atiyah and Segal for topological and conformal field theories.
In a nut-shell it is a functor from a category of cobordisms to the
category of vector spaces (or, more generally to some ''known``
category).

All known local quantum field theories can be formulated in
this way at some very basic level. It does not mean that this is a final destination of our understanding of quantum dynamics at the microscopical scale. But at the moment this general setting includes the standard model, which agrees with most of the experimental data in high energy physics. In this sense this is the accepted framework at the moment, just as at different points of history classical mechanics, classical electro-magnetism, and quantum mechanics were playing such a role \footnote{The string theory goes beyond such framework and beyond scales of present
experiments. It is a necessary step further, and it already produced a number of outstanding mathematical ideas and
results. One of the differences between the string theory
and the quantum field theory is that the concept of non-perturbative string theory is still  developing.}.

A {\it quantum field theory} in a given space-time category can be
defined as a functor from this category to the category of vector
spaces (or to another `standard', `known' category). It assigns a
vector space to the boundary and a vector in this vector space to
the manifold:
\[
N\mapsto H(N), \ \ M\mapsto Z_M\in H(\pa M).
\]
The vector space assigned to the boundary is the space of pure
states of the system on $M$. It may depend on the extra structure at
the boundary (it can be a vector bundle over the moduli space of
such structures). The vector $Z(M)$ is called the {\it partition
function} or {\it the amplitude}.

These data should satisfy natural axioms, such as
\begin{eqnarray}
H(\O)=\CC\,,\quad H(N_1\sqcup N_2)&=&H(N_1)\otimes H(N_2), \text{ and}\\
Z_{M_1\sqcup M_2}&=&Z_{M_1}\otimes Z_{M_2}\in H(\pa M_1)\otimes
H(\pa M_2).
\end{eqnarray}
An isomorphism (in the relevant space-time category)
$f:N_1\to N_2$ lifts to a linear isomorphism
\[
\sigma(f): H(N_1)\to H(N_2).
\]
The pairing
\[
\left< .,.\right> _N: H(\overline{N})\otimes H(N)\to \CC
\]
is defined for each $N$. This pairing should agree with
partition functions in the following sense. Let $\pa M=N\sqcup\overline{N}\sqcup{N'}$, then
\begin{equation}\label{gluing-prop}
(\langle.,.\rangle\otimes id)Z_{M}=Z_{M_N}\in H(N')
\end{equation}
where $M_N$ is the result of gluing of $M$ along $N$.
The operation is known as the gluing axiom. We outlined its
structure. The precise definition involves more details (see
\cite{At1}, \cite{Se}). 

The gluing axiom in particular implies that $Z$ can be regarded as a functor from the corresponding space time category to the category of vector spaces. If $\pa M=\overline{N_2}\sqcup N_1$,
the corresponding partition function is a linear mapping
$Z(M): H(N_2)\to H(N_1)$. If $\pa M'=\overline{N_3}\sqcup N_2$
the gluing axiom implies: 
\[
Z(M'\cup_{N_2} M)=Z(M) Z(M')\,.
\]

Originally this framework was formulated by Atiyah and Segal for
topological and conformal field theories, but it is natural  to
extend it to more general and more realistic quantum field theories,
including the standard model.

This framework is very natural in models of statistical mechanics on cell complexes with open boundary conditions, also known as lattice models.

The main physical concept behind this framework is the locality
of the interaction. Indeed, we can cut our space-time manifold in
small pieces and the resulting partition function $Z_M$ in such framework will be the composition of partition
functions of small pieces. Thus, the theory is determined by
its structure on `small' space-time manifolds, or at `short
distances'. This is the concept of {\it locality}.

\subsection{Constructions of quantum field theory}
\subsubsection{Quantum mechanics}  Quantum mechanics
\index{quantum mechanics} fits into the framework of quantum field
theory as a one-dimensional example. One-dimensional space-time
category is the same as in classical Lagrangian mechanics.

In quantum mechanics of a point particle on a Riemannian manifold $N$ the vector space assigned to a point is $L_2(N)$ with the usual scalar product. The quantized Hamiltonian is
the second order differential operator acting in $L_2(N)$
\[
\hat{H}=-\frac{mh^2}{2}\Delta +V(q),
\]
where $\Delta$ is the Laplace operator on $N$, $V(q)$
is the potential, and $h$ is the Planck constant.

The operator
\begin{equation}\label{q-evol}
U_{t_2-t_1}=\exp(\frac{i}{h}\hat{H}(t_2-t_1))
\end{equation}
is known as the {\it propagator}, or {\it evolution operator}
in quantum mechanics. It is a unitary operator
in $L_2(N)$ (assume $N$ is compact and $V(q)$ is sufficiently good). It can be written as an integral operator:
\begin{equation}\label{evol-kernel}
U_{t_2-t_1}(f)(q)=\int_NU_{t_2-t_1}(q,q')f(q')dq',
\end{equation}
where $dq'$ is the volume measure on $N$ induced by the metric.

The kernel $U_{t}(q,q')$ is a solution to the Schr{\"o}dinger
equation
\begin{equation}\label{Sch-eqn}
(ih \frac{\pa }{\pa t} -\frac{h^2}{2m}\Delta  +V(q))U_{t}(q,q')=0
\end{equation}
for $t>0$ with the initial condition
\[
\lim_{t\to +0} U_{t}(q,q')=\delta(q,q').
\]

Quantum mechanics of a point particle on a Riemannian manifold $N$
viewed as a $1$-dimensional quantum field theory assigns the vector
space $L_2(N)$ to a point, and the vector
$Z(I)(q_1,q_2)=U_{t_2-t_1}(q_2,q_1)\in H(\pa I)=\overline{L_2(N)\otimes L_2(N)}$ to
the interval $[t_2,t_1]$. Here $\overline{L_2(N)\otimes L_2(N)}$
is a certain completion of the tensor product which can be identified with a space of operators in $L_2(N)$, for details see
any mathematically minded textbook on quantum mechanics, for example \cite{Takht}.
For a variety of reasons (see \cite{BW}) it is better to think about the space
attached to a point not as $L_2(N)$ but as the space of 1/2-densities on $N$. Given two 1/2-densities $a$ and $b$, their
scalar product is
\[
(a,b)=\int_N \bar{a}b,
\]
where $\bar{a}b$ is now a density and can be integrated over $N$
(for details see for example \cite{BW}). In terms of 1/2-densities
the kernel of the evolution operator is a 1/2-density on $N\times N$ and
\[
U_t(a)(q)=\int_NU_t(q,q')a(q'),
\]
where $U_t(q,q')a(q')$ is a density in $q'$ and can be integrated over $N$.

\subsubsection{Statistical mechanics}
{\it Lattice models} in statistical mechanics also fit naturally
in the framework of quantum field theory. The space-time category
corresponding to these models is a combinatorial space category
of cell complexes.

A simple combinatorial example of combinatorial quantum
field theory with the dimer partition function can be found
in \cite{CimResh}.

The combinatorial construction of the TQFT (Topological Quantum
Field Theory) based on representation theory of quantized
universal enveloping algebras at roots of unity is given
in \cite{ReshTur} or,  more generally, on any modular
category.

Another combinatorial construction of TQFT, based on triangulations is given in \cite{TV}. This TQFT is the double of the construction
from \cite{ReshTur}, for details see for example \cite{Tu}.

\subsubsection{Path integral and the semiclassical quantization}
\index{path integral} If we were able to integrate over the space of
fields in a Lagrangian classical field theory (as in lattice models
in statistical mechanics) we could construct a quantization of a
$d$-dimensional classical Lagrangian system as follows:
\begin{itemize}
\item To a $(d-1)$-dimensional manifold we assign the space of functionals on boundary values of fields. Here we assume that a choice
    of boundary conditions was made.
\item To a $d$-dimensional manifold we assign a functional on    boundary fields given by the integral
    \[
    Z_M(b)=\int_{\phi|_{\pa M}=b} \exp(\frac{iS[\phi]}{h}) D\phi.
    \]
    If one treats the integral as a formal symbol which
    satisfies Fubini's theorem (the iterated integral is equal to the double integral),
    such assignment satisfies all properties of QFT. The problem is that the integral
    is usually not defined, unless the space of fields is finite or finite-dimensional
    (as in statistical mechanics of cell complexes).
    Thus, one should either make sense
    of the integral and check whether the definition satisfies Fubini's theorem, or define the QFT by some other means.
\end{itemize}

There are two approaches on how to make sense of {\it path integrals}.
The approach of {\it constructive field theory}, is based on
approximating the path integral by a finite-dimensional
integral and then proving that the finite QFT has a limit, when the mesh of the
approximation goes to zero.
For details of this approach see for example in \cite{GJ}.

Another approach is known as {\it perturbation theory}, or {\it semiclassical
limit}. The main idea is to define the path integral in the way its
asymptotic expansion as $h\to 0$ would look like, if the integral
were defined. The coefficients of this asymptotic expansion are
given by Feynman diagrams. Under the right assumptions the first few
coefficients would approximate the desired quantity sufficiently
well. The numbers derived from this approach are the base for the
comparison of quantum field theoretical models of particles with the
experiment.

In the next sections we will outline this approach on several examples.

When $M$ is a cylinder $M=[t_1,t_2]\times N$,
the partition function $Z_M$ is an element of $H(N)\times H(N)^*$
\footnote{ In this rather general discussion of the basic structures of a local quantum field theory we are deliberately somewhat vague about such details as the completion of the tensor
product and similar topological questions. Such questions are better answered on a case-by-case basis.} and therefore can be regarded as an operator in $H(N)$. Classical observables become
operators acting in $H(N)$. Thus, a quantization of
classical field theories for space-time cylinders can be regarded as passing from classical commutative observables to quantum non-commutative observables. The partition function for the
torus has a natural interpretation as a trace of the
partition function for the cylinder
(see for example \cite{GJ} for more details).

\section{Feynman diagrams}
\index{Feynman diagrams}

\subsection{Formal asymptotic of oscillatory integrals}
\index{formal asymptotic} \index{oscillatory integrals}
 Let $\M$ be a compact smooth manifold with a volume form on
it. In this section we will recall the diagrammatic formula for the
asymptotic expansion of the integral
\begin{equation}\label{int-f}
I_h(f)=\int_\M \exp\left(i\frac{f(x)}{h}\right) dx,
\end{equation}
where $f$ is a smooth function on $\M$ with finitely many
isolated critical points.

\begin{lemma}\label{Wick} We have the following identity
\begin{multline}
\lim_{\epsilon\to 0}\int_{\RR^N} \exp({i(x,Bx)/2-\epsilon (x,x)}) x_{i_1}\cdots x_{i_n} d^N x\\
 =
 (2\pi)^{\frac{N}{2}}i^{\frac{n}{2}}\frac{1}{\sqrt{|\det(B)|}}\exp({\frac{i\pi}{4}\sign(B)})
\sum_m B^{-1}_{i_{m_1}i_{m_2}}B^{-1}_{i_{m_1}i_{m_2}}\dots
B^{-1}_{i_{m_{n-1}i_{m_n}}}\, .
\end{multline}
Here the sum is taken over perfect matchings $m$ on the set
$\{1,2,\dots, n\}$, $\sign(B)$ denotes the {\it signature} of the
real symmetric matrix $B$ (the number of positive eigenvalues minus
the number of negative eigenvalues).

Moreover, if $n$ is odd, this integral is zero.
\end{lemma}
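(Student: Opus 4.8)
The plan is to reduce everything to two ingredients: the pure Gaussian (Fresnel) integral $n=0$, and a source-generating-function trick for the polynomial insertions; the limit $\epsilon\to0$ is taken only at the end. \textbf{Stage 1 (Fresnel integral).} Since $B$ is real symmetric and, by the standing assumption, non-degenerate, I would diagonalize it by an orthogonal substitution $x=Oy$, $O^{T}BO=\operatorname{diag}(\lambda_1,\dots,\lambda_N)$ with all $\lambda_j\neq0$; the Jacobian is $1$ and $(x,x)=(y,y)$, so the regulator is unchanged and the integral factors as $\prod_j\int_{\RR}\exp(i\lambda_j y^2/2-\epsilon y^2)\,dy$. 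For $\epsilon>0$ each factor equals $\sqrt{\pi/(\epsilon-i\lambda_j/2)}$ (ordinary Gaussian integral, $\Re(\epsilon-i\lambda_j/2)>0$, principal branch of the root), and since $\epsilon-i\lambda_j/2$ stays off the cut $(-\infty,0]$ one may pass to $\epsilon\to0$ to get $\sqrt{2\pi/|\lambda_j|}\exp(i\tfrac{\pi}{4}\sign(\lambda_j))$. Multiplying over $j$ and using $\prod_j|\lambda_j|=|\det B|$, $\sum_j\sign(\lambda_j)=\sign(B)$ gives the prefactor $(2\pi)^{N/2}|\det B|^{-1/2}\exp(i\tfrac{\pi}{4}\sign(B))$, which is the claim for $n=0$.

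\textbf{Stage 2 (polynomial insertions).} Keep $\epsilon>0$, so the integrand is absolutely integrable, and introduce the source integral
\[
Z_\epsilon(J)=\int_{\RR^{N}}\exp\!\Big(\tfrac{i}{2}(x,B_\epsilon x)+i(J,x)\Big)\,d^{N}x,\qquad B_\epsilon:=B+2i\epsilon\,\mathbf{1},
\]
which is invertible for all $\epsilon>0$ and is an entire function of $J\in\CC^{N}$ whose value at $J=0$ is the integral of Stage 1 (observe $\tfrac{i}{2}(x,B_\epsilon x)=\tfrac{i}{2}(x,Bx)-\epsilon(x,x)$). Completing the square and shifting $x\mapsto x-B_\epsilon^{-1}J$ — a fixed complex translation of the contour, legitimate by Cauchy's theorem together with the Gaussian decay — gives $Z_\epsilon(J)=Z_\epsilon(0)\exp\!\big(-\tfrac{i}{2}(J,B_\epsilon^{-1}J)\big)$. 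On the other hand, differentiating under the integral sign,
\[
\int_{\RR^{N}}\exp\!\Big(\tfrac{i}{2}(x,B_\epsilon x)\Big)\,x_{i_1}\cdots x_{i_n}\,d^{N}x=\frac{1}{i^{n}}\,\frac{\partial^{n}Z_\epsilon}{\partial J_{i_1}\cdots\partial J_{i_n}}\Big|_{J=0}.
\]
Now apply the purely combinatorial Wick (Isserlis) identity: for any symmetric $Q$, the quantity $\partial_{J_{i_1}}\cdots\partial_{J_{i_n}}\exp(\tfrac12(J,QJ))|_{J=0}$ equals the sum over perfect matchings $m$ of $\{1,\dots,n\}$ of the products $\prod Q_{i_a i_b}$ over the pairs of $m$, and vanishes for odd $n$. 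This follows by expanding the exponential in its power series and reading off the degree-$n/2$ term, whose multinomial combinatorics is exactly that of matchings; for odd $n$ there is no term of the right parity. Taking $Q=-iB_\epsilon^{-1}$ turns the right-hand side into $(-i)^{n/2}\sum_m\prod(B_\epsilon^{-1})_{i_a i_b}$ for even $n$, and $0$ for odd $n$.

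\textbf{Stage 3 (assembling, and the limit).} Collecting constants, the integral equals $i^{-n}(-i)^{n/2}Z_\epsilon(0)\sum_m\prod(B_\epsilon^{-1})_{i_a i_b}$; since $n$ is even one checks directly that $i^{-n}(-i)^{n/2}=i^{n/2}$. Letting $\epsilon\to0$, one has $B_\epsilon^{-1}\to B^{-1}$ and, by Stage 1, $Z_\epsilon(0)\to(2\pi)^{N/2}|\det B|^{-1/2}\exp(i\tfrac{\pi}{4}\sign(B))$, which yields the stated identity. The genuinely delicate points — which I would write out carefully — are the analytic bookkeeping: fixing the branch of the square roots so the signature phase comes out with the correct sign, and justifying the contour shift $x\mapsto x-B_\epsilon^{-1}J$ together with the interchange of $\partial_J$, the integral, and $\lim_{\epsilon\to0}$; all of these are handled by keeping $\epsilon>0$ throughout and passing to the limit only at the very end.
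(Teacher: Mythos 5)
Your proposal is correct and follows essentially the same route as the paper: both introduce a linear source term into the regularized Gaussian, complete the square (justifying the complex contour shift), and read off the perfect-matching sum from the Taylor expansion of $\exp$ of the quadratic form in the source. The only cosmetic differences are that you write the source as $i(J,x)$ and absorb the regulator into $B_\epsilon=B+2i\epsilon\,\mathbf{1}$, evaluate the $n=0$ Fresnel prefactor by explicit diagonalization where the paper simply quotes it, and invoke the Isserlis identity where the paper carries out the multinomial bookkeeping by hand.
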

\begin{proof}
First notice that:
\[
\lim_{\epsilon\to 0}\int e^{\frac{i}{2}(x,Bx)-\epsilon (x,x)}
x_{i_1}\cdots x_{i_n} d^N x = \lim_{\epsilon\to 0}\frac{\pa}{\pa y_{i_1}}\cdots
\frac{\pa}{\pa y_{i_n}} \int e^{\frac{i}{2}(x,Bx)-\epsilon (x,x)+(y,x)}
d^Nx\Bigr|_{y=0}\, .
\]
After change of variables $z=x-iB^{-1}y$ in the Gaussian integral
\[
\lim_{\epsilon\to 0}\int_{\RR^N}\exp({\frac{i}{2}(x,Bx)}-\epsilon (x,x))
d^Nx=(2\pi)^{\frac{N}{2}}\frac{1}{\sqrt{|\det(B)|}}
\exp({\frac{i\pi}{4}\sign(B)})
\]
we have:
\begin{equation*}
\lim_{\epsilon\to 0}\int_{\RR^N}\exp({\frac{i}{2}(x,Bx)-\epsilon (x,x)+(y,x)})
d^Nx=(2\pi)^{\frac{N}{2}}\frac{1}{\sqrt{|det(B)|}}
\exp({\frac{i\pi}{4}\sign(B)}) \exp({\frac{i}{2}(B^{-1}y,y)})\, .
\end{equation*}
Expanding the right side in powers of $y$ we obtain the contribution of monomials of degree $2k$.
\begin{multline*}
\frac{i^k}{2^k!}\sum_{(i)(j)}(B^{-1})_{i_1j_1}\dots (B^{-1})_{i_{2k}j_{2k}}y_{i_1}\dots y_{i_k}y_{j_1}\dots y_{j_k}=
\frac{i^k}{2^k!}\sum_{i_1\leq \dots \leq
i_{2k}} \frac{y_{i_1}\dots y_{i_{2k}}}{m_1(i)!\dots m_{2k}(i)!}
\\ \sum_{\sigma\in S_{2k}}(B^{-1})_{\sigma(i_1)\sigma(i_2)}
(B^{-1})_{\sigma(i_3)\sigma(i_4)}\dots (B^{-1})_{\sigma(i_{2k-1})\sigma(i_{2k})}\, .
\end{multline*}
Here $m_1(i)$ is the number of the smallest entries in the
sequence $i_1,\dots, i_{2k}$, $m_2(i)$ is the number of the
smallest entries after the elimination of $i_1$ etc.

Taking derivatives with respect to $y$ and taking into account
that
\begin{multline*}
\frac{1}{2^k!}\sum_{\sigma\in
S_{2k}}(B^{-1})_{\sigma(i_1)\sigma(i_2)}
(B^{-1})_{\sigma(i_3)\sigma(i_4)}\dots (B^{-1})_{\sigma(i_{2k-1})\sigma(i_{2k})}=\\
\sum_m (B^{-1})_{i_{m_1}i_{m_2}}(B^{-1})_{i_{m_1}i_{m_2}}\dots
(B^{-1})_{i_{m_{2k-1}i_{m_{2k}}}}\, ,
\end{multline*}
where the sum is taken over perfect matchings on the set
$\{1,2,,\dots, 2k\}$, we obtain the desired formula.
\end{proof}

For example when $n=4$, then this integral is equal to
\[(B^{-1})_{12}(B^{-1})_{34}+(B^{-1})_{13}(B^{-1})_{24}+(B^{-1})_{14}(B^{-1})_{23}.\]
These three terms correspond to the perfect matching shown in Fig. \ref{4bosons}.

\begin{figure}[htb]
\includegraphics[height=2.5cm,width=8cm]{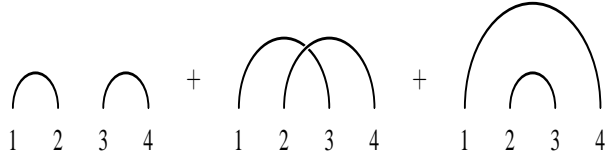}
\caption{Perfect matching for $n=4$}
\label{4bosons}
\end{figure}

\begin{theorem}
We have the following identity of power series
\begin{multline*}
\int_{\RR^N} \exp(i(x,Bx)/2-\sum_{n\ge 3}\frac{i}{n!} V^{(n)}(x)h^{n/2-1})\, d^Nx =\\
(2\pi)^{\frac{N}{2}}\frac{1}{\sqrt{|\det(B)|}}\exp({\frac{i\pi}{4}\sign(B)})
\sum_{\Ga}\frac{(ih)^{-\chi(\Ga)}F(\Ga)}{|\Aut (\Ga)|},
\end{multline*}
where the sum is taken over graphs with vertices of valency $\geq 3$, $F(\Ga)$ is the state sum corresponding to $\Ga$ described
below, $|\mbox{Aut} (\Ga)|$ is the number of elements in the
automorphism group of $\Ga$, $\chi(\Ga)=|V|-|E|$ is the Euler characteristic of the graph, $|E|$ is the number of edges of $\Ga$ and $|V|$ is the number of vertices of $\Ga$.
\end{theorem}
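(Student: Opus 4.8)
The plan is to expand the interaction term of the exponential in a formal power series in $h$, apply the Wick/Gaussian-moments identity of Lemma~\ref{Wick} term-by-term, and then reorganize the resulting sum over perfect matchings of ``half-edges'' into a sum over isomorphism classes of graphs. First I would write
$\exp\bigl(-\sum_{n\ge 3}\tfrac{i}{n!}V^{(n)}(x)h^{n/2-1}\bigr)$ as a formal sum over collections of vertices: each monomial in the expansion is indexed by a multiset of vertices $v_1,\dots,v_k$, the vertex $v_j$ carrying a valency $d_j\ge 3$, and contributing a factor $\bigl(-\tfrac{i}{d_j!}\bigr)V^{(d_j)}$ together with the power $h^{\sum_j(d_j/2-1)}$. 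Here $V^{(n)}(x)$ should be read as its $n$-th-order term, i.e. $V^{(n)}(x)=\sum V^{(n)}_{a_1\cdots a_n}x_{a_1}\cdots x_{a_n}$ with symmetric coefficients, so after expansion the integrand is a sum of monomials $x_{a_1}\cdots x_{a_L}$ with $L=\sum_j d_j$ total factors, each factor attached to a definite half-edge at a definite vertex. The total power of $h$ accompanying such a term, once Lemma~\ref{Wick} supplies an additional $i^{L/2}$ and $L/2$ copies of $B^{-1}$ from the matching, works out to $h^{\sum_j(d_j/2-1)}=h^{|E|-|V|}=h^{-\chi(\Ga)}$, since a matching on $L$ half-edges has $|E|=L/2$ edges and $|V|=k$ vertices; this is the bookkeeping that produces the exponent $-\chi(\Ga)$, and I would double-check the $i$-powers cancel so that only $(ih)^{-\chi(\Ga)}$ survives next to the graph amplitude.

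Next I would apply Lemma~\ref{Wick}: the Gaussian integral of the monomial $x_{a_1}\cdots x_{a_L}$ equals the universal prefactor $(2\pi)^{N/2}|\det B|^{-1/2}\exp(\tfrac{i\pi}{4}\sign B)$ times $i^{L/2}$ times $\sum_m \prod_{\text{pairs in }m}(B^{-1})_{a\,a'}$, the sum over perfect matchings $m$ of $\{1,\dots,L\}$. A perfect matching of the half-edges is exactly the data of a graph $\Ga$ with the prescribed vertex set and valencies: an edge of $\Ga$ is a matched pair of half-edges, and the product of $B^{-1}$-entries contracted against the vertex tensors $V^{(d_j)}$ is precisely the state sum $F(\Ga)$ described in the statement (each vertex of valency $d$ contributes the tensor $-V^{(d)}/d!$ — or rather $-V^{(d)}$ after the $1/d!$ is absorbed — and each edge contributes a propagator $B^{-1}$, all indices summed). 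So far the sum is over labeled configurations: a choice of how many vertices of each valency, a labeling of the vertices, a labeling of the half-edges at each vertex, and a matching.

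The last and genuinely combinatorial step is to pass from labeled configurations to isomorphism classes of graphs, which is where the factor $1/|\Aut(\Ga)|$ and the cancellation of the $1/d_j!$ vertex factors both come from. Given an unlabeled graph $\Ga$, the number of labeled configurations yielding it equals $\bigl(\prod_v d_v!\bigr)\cdot k!\big/\bigl(k!\cdot|\Aut(\Ga)|\bigr)$ after one accounts for: the $\prod_v d_v!$ orderings of half-edges at each vertex (which cancel the $\prod_v 1/d_v!$ from the vertex weights), the $1/k!$ from the exponential series' own $1/k!$ when vertices of equal type are permuted, and the residual overcounting by graph automorphisms. Carefully, the orbit-counting (Burnside) argument is: the group $\prod_v S_{d_v}\times(\text{permutations of like vertices})$ acts on half-edge-labeled graphs with underlying unlabeled graph $\Ga$, the stabilizer of a configuration is $\Aut(\Ga)$, so the number of configurations in the orbit is $|\text{group}|/|\Aut(\Ga)|$; multiplying by the per-configuration weight $\prod_v\bigl(\tfrac{1}{d_v!}\bigr)\cdot\tfrac{1}{k!}$ collapses everything to $F(\Ga)/|\Aut(\Ga)|$. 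I expect \emph{this} orbit-counting/automorphism step to be the main obstacle — not because it is deep, but because getting every factorial and every sign to cancel exactly requires meticulous tracking of conventions (whether $V^{(n)}$ already includes the $1/n!$, whether $F(\Ga)$ includes the vertex weights with or without the $1/d!$, and the Euler-characteristic sign convention $\chi=|V|-|E|$). Once the bookkeeping lemma ``number of half-edge-labelings of a graph with automorphism group $\Aut(\Ga)$ is $\prod_v d_v!\,/\,|\Aut(\Ga)|$, divided further by symmetries among equal vertices'' is stated and proved, assembling the series is routine, and the identity holds as an equality of formal power series in $h$ (both sides are manifestly formal series with no convergence claim intended).
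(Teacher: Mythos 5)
Your proposal is correct and follows essentially the same route as the paper's proof: expand the interaction exponential in powers of $h$, apply Lemma~\ref{Wick} to reduce each Gaussian moment to a sum over perfect matchings of half-edges, and reorganize the matchings into isomorphism classes of graphs. The paper simply asserts the key combinatorial identity $N(\Gamma)/\bigl(n_3!(3!)^{n_3}n_4!(4!)^{n_4}\cdots\bigr)=1/|\Aut(\Gamma)|$ that you derive more explicitly by the orbit-counting argument, so the two arguments coincide in substance.
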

\begin{proof} Expanding the integral in formal power series in $h$ we have:
\begin{multline}\label{int-1}
 \int_{\RR^N} e^{\frac{i}{2}(x,Bx)+\sum_{n\ge 3}\frac{i}{n!} V^{(n)}(x)h^{n/2-1}}\, dx =
 \sum_{n_3\ge 0, n_4\ge 0\cdots} \frac{h^{(3n_3+4n_4+\cdots)/2 -n_3-n_4-\cdots}i^{n_3+n_4+\dots}}{n_3!(3!)^{n_3} n_4! (4!)^{n_4}\cdots}\\
 \int_{\RR^N} e^{i(x,Bx)/2} (V^{(3)}(x))^{n_3}(V^{(4)}(x))^{n_4}\cdots d^Nx.
\end{multline}
Here
\[
V^{(n)}(x)=\sum_{i_1,\dots,i_n} V^{(n)}_{i_1,\dots,i_n}x^{i_1}\dots x^{i_n}\, .
\]

For a graph $\Ga$ define the state sum $F(\Ga)$ as follows.
\begin{itemize}
\item Enumerate vertices, for each vertex enumerate edges adjacent to it. This defines a total ordering on
    endpoints of edges (the ordering from left to right
    in Fig. \ref{diag-match}).
\item The graph $\Ga$ defines a perfect matching between
edges adjacent to vertices as it is shown in Fig. \ref{diag-match}. Denote by $\Ga_m$ the graph corresponding to the perfect matching $m$.

\item Assign indices $i_1, i_2, \dots $ to endpoints
of edges, $i_\alpha=1,2,\dots, N$.

\item Define $F(\Ga)$ as
\[
F(\Ga)=\sum_{\{i\}}\prod_{e\in E(\Ga_m)}(B^{-1})_{e_l,e_r}V^{(n_1)}_{i_1,\dots, i_{n_1}}
V^{(n_1)}_{i_{n_1+1},\dots, i_{n_1+n_2}}V^{(n_1+n_2+1)}_{i_1,\dots, i_{n_1+n_2+n_3}}\dots
\]
\end{itemize}
where $e_l$ is the index corresponding to the left end of the edge
$e$, $e_r$ corresponds to the right side. The state sum $F(\Ga)$ is
the sum over $\{i\}$ of the product of weights assigned to vertices
and edges according to the rules from Fig. \ref{weights} and Fig.
\ref{ex-match}.\footnote{ Equivalently $F(\Ga)$ can be defined as
follows. Assign elements $1,\dots, N$ to endpoints of edges of
$\Ga$. This defines an assignment of indices to endpoints of stars
of vertices. The state sum is defined as
\[
F(\Ga)=\sum_{\{i\}} \prod_{e\in E(\Ga)} (B^{-1})_{i_e,j_e}
\prod_{v\in V(\Ga)}(\mbox{ weight of } v)_{i}\, .
\]
Here weights of vertices are defined as in Fig. \ref{weights},
the indices $i_e,j_e$ correspond to two different endpoints
of $e$ (since $B$ is symmetric, it does not matter that
this pair is defined up to a permutation).}

\begin{figure}[htb]
\sidecaption
\includegraphics[height=3cm,width=7cm]{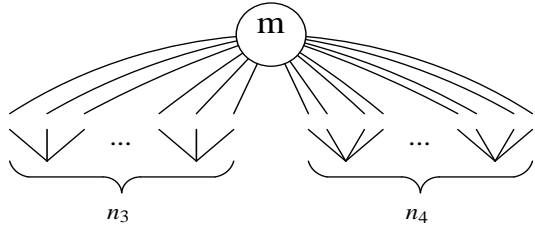}
\caption{Perfect matchings and Feynman diagrams}
\label{diag-match}
\end{figure}

\begin{figure}[htb]
\sidecaption
\includegraphics[height=3cm,width=7cm]{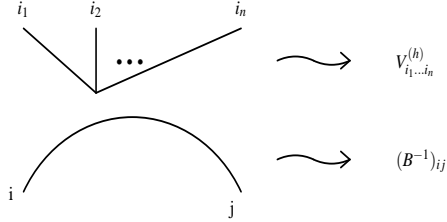}
\caption{Weights of vertices and edges in Feynman diagrams}
\label{weights}
\end{figure}

Lemma \ref{Wick} gives the following expression for
(\ref{int-1})
\begin{multline} \label{f-mat}
(2\pi)^{\frac{N}{2}}\frac{1}{\sqrt{|\det(B)|}}e^{\frac{i\pi}{4}\sign(B)}\,
i^{|V|} 
\sum_{n_3\ge 0, n_4\ge 0\cdots}
\frac{(ih)^{|E|-|V|}}{n_3!(3!)^{n_3} n_4! (4!)^{n_4}\cdots} \sum_m
F(\Gamma_m).
\end{multline}
Here the sum is taken over perfect matchings, and $\Gamma_m$ is the graph corresponding to the matching $m$, see Fig. \ref{diag-match}, $|E|$ is the number of edges and $|V|$ is the number of vertices of the graph $\Gamma_m$.

\begin{figure}[htb]
\includegraphics[height=2.5cm,width=8cm]{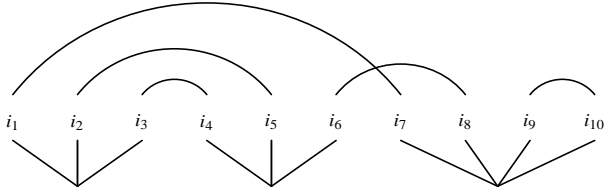}
\caption{An example of a perfect matching with a state $\{i\}$}
\label{ex-match}
\end{figure}

Some perfect matchings produce the same graphs. Denote by $N(\Ga)$ the number of perfect matchings corresponding to $\Ga$. In the formula (\ref{f-mat}) the contribution from the diagram $\Ga$
will have the combinatorial factor
\[
\frac{N(\Ga)}{n_3!(3!)^{n_3} n_4! (4!)^{n_4}\cdots}=\frac{1}{|\Aut(\Ga)|}.
\]
This finishes the proof.

There is a simple rule how to check powers of $i=\sqrt{-1}$.
These factors disappear, if we replace $B\mapsto iB$ and $V^{(n)}\mapsto iV^{(n)}$.
\end{proof}

A Feynman diagram {\it has order} $n$ if it appears as a coefficient
in $h^n$, i.e. when $n=|E|-|V|$ (or $n=-\chi(\Ga)$) in the expansion
above. As an example, order one Feynman diagrams are given in Fig.
\ref{ord-1} and Fig. \ref{ord-1-weights}.

\begin{figure}[htb]
\sidecaption
\includegraphics[height=1cm,width=7cm]{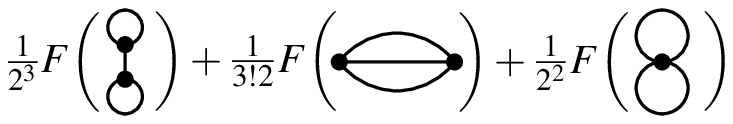}
\caption{Contributions from Feynman diagrams of order one}
\label{ord-1}
\end{figure}

\begin{figure}[htb]
\sidecaption
\includegraphics[height=2.5cm,width=8cm]{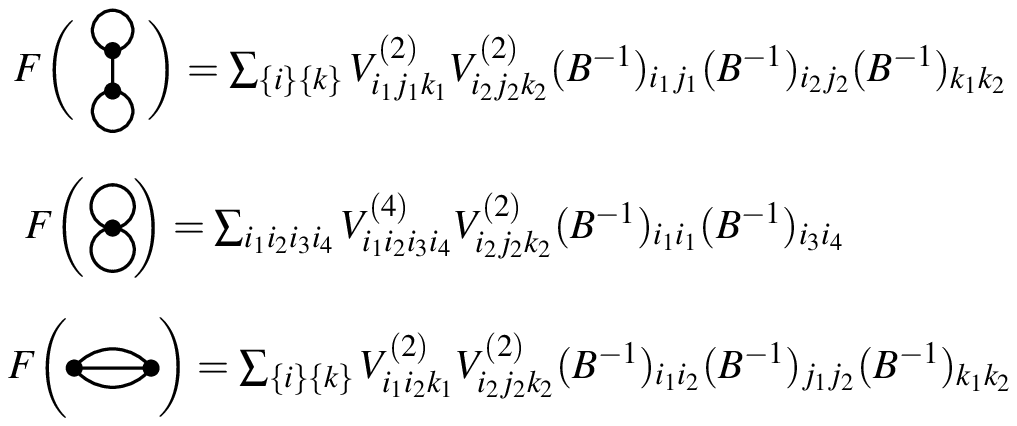}
\caption{Weights of Feynman diagrams of order one}
\label{ord-1-weights}
\end{figure}

Now let us focus on the asymptotic expansion of the integral
(\ref{int-f}). The standard asymptotic analysis applied to this
integral shows that the leading contributions to the asymptotics of the integral as $h\to 0$ come from the infinitesimal (of the diameter of
order $h^{-1/2}$) neighborhoods of critical points of $f(x)$. The
contribution to the integral (\ref{int-f}) from the critical point
$a$ ''localizes`` to the integral (\ref{int-1}) with
$(B_a)_{ij}=\frac{\pa^2f}{\pa x^i \pa x^j}(a)$ and
$(V_a^{(n)})_{i_1,\dots, i_n}=-\frac{\pa^n f}{\pa x^{i_1}\dots
x^{i_n}}(a)$.

Choose local coordinates such that $dx=dx_1\dots dx_N$. Denote by
$F_a(\Ga)$ the state sum on the graph $\Ga$ with such matrices $B$
and $V{(n)}$. The asymptotic expansion of the integral (\ref{int-f})
has the following form:
\begin{equation}\label{as-exp}
\int_\M \exp\left(i\frac{f(x)}{h}\right) dx\simeq \sum_a(2\pi
h)^{\frac{N}{2}}
\frac{1}{\sqrt{|\det(B_a)|}}e^{\frac{if(a)}{h}+\frac{i\pi}{4}\sign(B_a)}
\sum_{\Ga}\frac{(ih)^{-\chi(\Ga)}F_a(\Ga)}{|\Aut (\Ga)|}
\end{equation}
Here $\simeq$ is the asymptotical equivalence when $h\to 0$.
A similar argument applied to the integral
\begin{equation}\label{g-int}
\int_\M \exp(i\frac{f(x)}{h})g(x)dx
\end{equation}
gives the asymptotic expansion as $h\to 0$. It looks exactly as
(\ref{as-exp}) with the only difference that in each Feynman diagram there will be exactly one of the vertices given in Fig
\ref{g-vert}. The order of the diagram  is still $|E|-|V|$, where $V$ is the number of vertices given by derivatives of $f$, i.e. $-\chi(\Ga)$.

\begin{figure}[htb]
\sidecaption
\includegraphics[height=2cm,width=6cm]{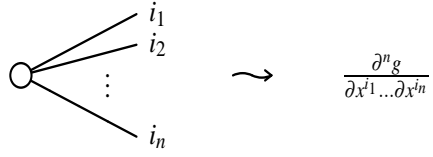}
\caption{Extra vertices in Feynman diagrams for the integral (\ref{g-int})}
\label{g-vert}
\end{figure}

\subsection{Integrals over Grassmann algebras}
\index{integrals over Grassmann algebras} \index{Grassmann algebra}
The {\it Grassmann algebra} $G_n$ is the exterior algebra of
$\CC^n$,
 $G_n=\wedge^{\cdot}\CC^n$ with the multiplication $(a,b) \to a\wedge b$. As an algebra defined in terms of generators and
 relations $G_n$ is generated by $c^1,\dots, c^n$ with defining relations $c^ic^j+c^jc^i=0$. The Grassmann algebra $G_n$ can also be regarded as the space of polynomial functions on the super-vector space $\CC^{0|n}$.

Left derivatives with respect to $c^i$ are defined as
\[
\pa{}_{c^i} c^{i_1}\cdots c^{i_n}=\begin{cases}
 0 & i\not\in \{i_1,\dots, i_n\}\\
 (-1)^kc^{i_1}\cdots \hat c^{i_k}\cdots c^{i_n} & i=i_k\, .
\end{cases}\]
The right derivatives are defined similarly with the sign $(-1)^{n-k}$ instead.

Recall that an {\it orientation} of $\CC^n$ is defined by a basis in $\bigwedge^n \CC^n$. Choose $c^1\wedge\cdots\wedge c^n$ as such orientation. Any element $P\in G_n$ can be written  as $p^{top}c^1\wedge\cdots\wedge c^n+$ lower terms. The integral of $P$ over the super-vector space $\CC^{0|n}$ with the orientation $c_1\wedge\dots \wedge c_n$ is
\[
 \int_{\CC^{0|n}} P\, dc:= p^{top}.
\]

\begin{lemma} Let $(c,Bc)=\sum_{ij=1}^n c^i B_{ij} c^j$, where $B$ is skew-symmetric $B_{ij}=-B_{ji}$. If $n$ is even, then
\begin{equation}\label{Pf}
  \int_{\CC^{0|n}} \exp\Bigl( \frac 12 (c,Bc)\Bigr)\, dc = \operatorname{Pf}(B),
\end{equation}
where $\operatorname{Pf}$ is the Pfaffian of the matrix $B$. If $n$ is odd,
the integral is zero.
\end{lemma}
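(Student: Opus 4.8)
The plan is to expand the exponential and isolate the unique term that can contribute to the Berezin integral. Since $(c,Bc)=\sum_{ij}c^iB_{ij}c^j$ is an even element of $G_n$, its powers commute with everything and $\exp\bigl(\tfrac12(c,Bc)\bigr)=\sum_{k\ge 0}\tfrac{1}{2^k k!}(c,Bc)^k$ makes sense as a finite sum in $G_n$. A monomial $c^{i_1}\cdots c^{i_{2k}}$ has degree $2k$, so it is zero in $G_n$ once $2k>n$ and contributes nothing to $\int_{\CC^{0|n}}\cdot\,dc$ (which only reads off the coefficient of $c^1\wedge\cdots\wedge c^n$) once $2k<n$. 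Hence if $n$ is odd no term survives and the integral is $0$; if $n$ is even, only the $k=n/2$ term matters.

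Next I would write, with $m=n/2$,
\[
(c,Bc)^{m}=\sum_{i_1,j_1,\dots,i_m,j_m} B_{i_1j_1}\cdots B_{i_mj_m}\; c^{i_1}c^{j_1}\cdots c^{i_m}c^{j_m},
\]
the sum running over all indices in $\{1,\dots,n\}$. The product of generators vanishes unless the $2m=n$ indices are pairwise distinct, i.e.\ unless $(i_1,j_1,\dots,i_m,j_m)$ is a permutation $\sigma$ of $(1,\dots,n)$, in which case $c^{i_1}c^{j_1}\cdots c^{i_m}c^{j_m}=\sign(\sigma)\,c^1\wedge\cdots\wedge c^n$. Collecting the surviving terms gives
\[
\int_{\CC^{0|n}}\exp\Bigl(\tfrac12(c,Bc)\Bigr)\,dc=\frac{1}{2^{m}m!}\sum_{\sigma\in S_n}\sign(\sigma)\,B_{\sigma(1)\sigma(2)}\cdots B_{\sigma(2m-1)\sigma(2m)},
\]
which is one of the standard formulas for $\Pf(B)$, so the lemma follows.

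The only genuine subtlety is reconciling this expression with whichever definition of the Pfaffian one takes as primary. If $\Pf(B)$ is defined via perfect matchings of $\{1,\dots,n\}$, I would instead use skew-symmetry of $B$ to write $\tfrac12(c,Bc)=\sum_{i<j}B_{ij}c^ic^j$; the summands are even, pairwise commuting, and square to zero, so $\exp\bigl(\tfrac12(c,Bc)\bigr)=\prod_{i<j}(1+B_{ij}c^ic^j)$, and the coefficient of $c^1\wedge\cdots\wedge c^n$ is immediately $\sum_m \sign(m)\prod_{\{a,b\}\in m}B_{ab}$ over perfect matchings $m$. Either route is a short bookkeeping computation rather than a real obstacle; the points to handle with care are the sign incurred in reordering $c^{i_1}c^{j_1}\cdots$ into $c^1\wedge\cdots\wedge c^n$, and the standard count that each matching arises from exactly $2^{m}m!$ orderings with a consistent sign, which is what reconciles the two formulas for $\Pf(B)$.
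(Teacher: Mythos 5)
Your proposal is correct and follows essentially the same route as the paper: expand the exponential, observe that only the degree-$n$ term $(c,Bc)^{n/2}$ survives the Berezin integral (giving zero for odd $n$), reorder the generators to pick up $\sign(\sigma)$, and identify the resulting sum over permutations with the Pfaffian by grouping the $2^{m}m!$ orderings of each perfect matching. The alternative product-formula argument you sketch at the end is a nice shortcut but not needed.
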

\begin{proof} Recall that
\[
\operatorname{Pf}(B)=\sum_{m} (-1)^m B_{i_1j_1}B_{i_2j_2}\dots B_{i_{n/2}j_{n/2}}
\]
where the sum is taken over perfect matchings $m$. A perfect matching $m$ is the equivalence class
of a collection of pairs $((i_1,j_1),\dots, (i_{n/2},j_{n/2}))$ obtained by a permutation $\sigma$ of $(1,2,\dots, n)$ with respect to
permutations of pairs ($(i_a,j_a)$ with $(i_b,j_b)$) and permutations in a pair ($(i_a,j_a)$ to $(j_a,i_a)$). The sign
$(-1)^m$ is the sign of the permutation $\sigma$, which is
constant on the equivalence class $m$.

Now let us prove the formula (\ref{Pf}). It is clear that only
monomials of degree $n$ in $c$ will give a non-zero contribution to the integral and that they all come from the term
\[
(c,Bc)^{n/2} = \sum_{i_1,\dots, i_{n/2},j_1,\dots, j_{n/2}=1}^n B_{i_1j_1}\cdots B_{i_{n/2}j_{n/2}} c^{i^1}c^{j_1}\cdots c^{i_{n/2}}c^{j_{n/2}}.
\]
Reordering factors we get
\[
c^{i_1}c^{j_1}\cdots c^{i_{n/2}}c^{j_{n/2}} = (-1)^{\sigma(i|j)} c^1 \cdots  c^n\,,
\]
where $\sigma(i|j)$ is the permutation which brings $i_1,j_1,\dots,i_{n/2},j_{n/2}$ to $1,2,\dots, n$. Thus for the Gaussian Grassmann integral we have:
\[
\int_{\CC^{0|n}} \exp\Bigl( \frac 12 (c,Bc)\Bigr)\, dc =   \frac{(1/2)^{n/2}}{(n/2)!} \sum_{\sigma(i|j)} B_{i_1j_1}\cdots B_{i_{n/2}j_{n/2}} (-1)^{\sigma(i|j)}\, .
 \]
Note that the sign doesn't change when  $i_a$ is switched with $j_a$ because the signs come in pairs. Also, the sign doesn't change when  pair $(i_a,j_a)$ and $(i_b,j_b)$ are permuted.
But such equivalence classes of permutations are exactly perfect matchings and therefore the formula becomes
\begin{equation*}
  \sum_{\substack{\sigma(i|j)\\ i_a<j_a\\ i_{a_1}<\cdots < i_{a_n}}} (-1)^{\sigma(i|j)} B_{i_1j_1}\cdots B_{i_{n/2}j_{n/2}}
 = \operatorname{Pf}(B),
 \end{equation*}
 which is the Pfaffian of $B$.
 \end{proof}

This lemma is equivalent to the identity
 \[
  \Bigl(\sum_{i< j} x^i\wedge x^j B_{ij}\Bigr)^{\bigwedge \frac n2} = \operatorname{Pf}(B) x^1\wedge \cdots \wedge x^n
 \]
in the exterior algebra $\bigwedge^\cdot \CC^n$.

Two important identities for Pfaffians:
\[
\det B=\Pf(B)^2, \ \  \ \ \ \ \Pf\mat{0 & A\\ -A^t& 0} = \det A.
\]

The following formula is a Grassmann analog of the formula from
Lemma \ref{Wick} for integrating monomials with respect to the
Gaussian measure
\begin{equation}\label{Pf-corr}
  \int_{\CC^{0|n}} \exp\Bigl( \frac 12 (c,Bc)\Bigr)c^{i_1}\dots c^{i_k}\, dc = \Pf(B)(-1)^{\frac{k}{2}}\sum_m (-1)^m (B^{-1})^{i_{m_1}i_{m_2}}\dots (B^{-1})^{i_{m_{k-1}},i_{i_{m_k}}}.
\end{equation}
Here the sum is taken over perfect matchings $m$ of $1,\dots, k$, and $B$ is assumed to be non-degenerate. The proof of
this formula is parallel to the one for Gaussian oscillating integrals. The only difference is the factor $(-1)^m$ which
appears when left derivatives are applied to the exponent.

Let $P(c)$ be an even element of $G_n$ with monomials of degree at least $4$, $P(c)=\sum_{k\geq 4} \frac{1}{k!}P^{(k)}(c)$ where $P^{(k)}(c)=\sum_{i_1,\dots, i_k=1}^nP^{(k)}_{i_1,\dots, i_k}
c^{i_1}\dots c^{i_k}$.

\begin{theorem} The following identity holds:
\begin{equation}\label{euk-f-int}
\int_{\CC^{0|n}} \exp\Bigl( -\frac 12 (c,Bc)+P(c)\Bigr)\, dc =
 \operatorname{Pf}(-B)\sum_\Ga(-1)^{c(D(\Ga))}\frac{F(D(\Ga))}{|\operatorname{Aut}(\Ga)|},
\end{equation}
where the summation is taken over finite graphs,
$D(\Ga)$  is a mapping of $\Ga$ to $\RR^2$, with the only singular points being crossings of edges ($D(\Ga)$ is a diagram of the graph $\Ga$), and $c(D(\Ga))$ is the number of crossings of edges in the diagram $D(\Ga)$. The number $F(D(\Ga))$ is computed by the same rules as in the previous section. The product $(-1)^{c(D(\Ga))}F(D(\Ga))$ does not depend on the choice of the diagram.
\end{theorem}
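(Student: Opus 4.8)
The plan is to mimic step by step the proof of the preceding (bosonic) theorem, with Lemma~\ref{Wick} replaced by its Grassmann analogue \eqref{Pf-corr}. The anticommutativity of the generators $c^i$ is the only genuinely new feature, and it is precisely what produces the signs $(-1)^{c(D(\Ga))}$.

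First I would expand the exponential as a formal sum
\[
\exp\bigl(P(c)\bigr)=\sum_{n_4,n_5,\dots\ge 0}\ \frac{1}{\prod_{k\ge 4}n_k!\,(k!)^{n_k}}\ \prod_{k\ge 4}\bigl(P^{(k)}(c)\bigr)^{n_k}
\]
(only even $k$ occur, since $P$ is even), so that, writing $|V|=\sum_k n_k$ and $2|E|=\sum_k k\,n_k$, each term is a sum of monomials $P^{(n_1)}_{a_1\dots a_{n_1}}P^{(n_2)}_{a_{n_1+1}\dots}\cdots\, c^{a_1}c^{a_2}\cdots c^{a_{2|E|}}$ in which the generators already stand in the natural order obtained by listing the legs of vertex~$1$, then those of vertex~$2$, and so on. Because $c^ic^j=-c^jc^i$, only the totally antisymmetric parts of the tensors $P^{(k)}$ contribute, so I may assume each $P^{(k)}$ antisymmetric; likewise $B^{-1}$ is antisymmetric because $B$ is skew.

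Applying \eqref{Pf-corr} with $-B$ in place of $B$ (since $-\tfrac12(c,Bc)$ is what stands in the exponent) to the monomial $c^{a_1}\cdots c^{a_{2|E|}}$, and using $(-B)^{-1}=-B^{-1}$ so that the two factors $(-1)^{|E|}$ cancel, one gets
\[
\int_{\CC^{0|n}}\exp\bigl(-\tfrac12(c,Bc)\bigr)\,c^{a_1}\cdots c^{a_{2|E|}}\,dc=\Pf(-B)\sum_m(-1)^m\prod_{e}(B^{-1})^{a_{e_l}a_{e_r}},
\]
the sum running over perfect matchings $m$ of the $2|E|$ ordered legs. Each matching $m$, together with the record of which vertex each leg belongs to, is drawn as a planar diagram $D(\Ga_m)$ of a graph $\Ga_m$: put the vertices on a horizontal line in order, let the legs of each vertex emanate upward in order, and join matched legs by arcs. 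Summing the above product over the indices $\{a\}$ gives precisely the state sum $F(D(\Ga_m))$ of the previous section, while $(-1)^m=(-1)^{c(D(\Ga_m))}$ is the standard identity between the Pfaffian sign of a matching and the parity of the number of crossings of its arc diagram (two arcs above a line cross exactly when their endpoints interleave).

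It then remains, just as in the bosonic proof, to collect the matchings that yield a fixed abstract graph $\Ga$. Their number $N(\Ga)$ satisfies $N(\Ga)\big/\prod_k n_k!\,(k!)^{n_k}=1/|\Aut(\Ga)|$ by the same orbit--stabilizer count, so the total coefficient of $\Ga$ becomes $(-1)^{c(D(\Ga))}F(D(\Ga))/|\Aut(\Ga)|$, which is \eqref{euk-f-int}. This collection is legitimate, and the coefficient is independent of all auxiliary choices as asserted, because $P^{(k)}$ and $B^{-1}$ are antisymmetric: transposing two adjacent legs at a vertex, transposing two adjacent vertices, or sliding one arc across a crossing each changes the crossing number by one and is compensated by exactly one sign, and any two diagrams of $\Ga$ are joined by such moves. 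I expect the sign bookkeeping --- the identity $(-1)^m=(-1)^{c(D(\Ga_m))}$ and its compatibility with these reorderings --- to be the only point demanding care; everything else is parallel to the preceding theorem.
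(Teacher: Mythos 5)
Your proposal is correct and follows essentially the same route as the paper: expand $\exp(P(c))$, apply the Grassmann Wick formula (\ref{Pf-corr}) to reduce each term to a signed sum over perfect matchings, identify $(-1)^m$ with the crossing parity of the arc diagram, and collect matchings into graphs to produce the $1/|\Aut(\Ga)|$ factor. The only differences are cosmetic refinements — you track the $-B$ versus $B$ normalization and spell out the interleaving argument for $(-1)^m=(-1)^{c(D(\Ga_m))}$, which the paper leaves as "easy to show."
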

\begin{proof}
Expand the integral in $P(c)$:
\begin{multline}
\int_{\CC^{0|n}} \exp\Bigl( \frac 12 (c,Bc)+P(c)\Bigr)\, dc =
\sum_{n_4,n_6,\dots \geq 0}\frac{1}{n_4!(4!)^{n_4}n_6!(6!)^{n_6}\dots}\\
\sum_{i_1,i_2,i_3,\dots}P^{(4)}_{i_1,i_2, i_3,i_4}\dots P^{(6)}_{i_{4n_4+1},i_{4n_4+2},i_{4n_4+3}, i_{4n_4+4},i_{4n_4+5},i_{4n_4+6}}\dots\\
\int_{\CC^{0|n}} \exp\Bigl( \frac 12 (c,Bc)\Bigr)c^{i_1}c^{i_2}c^{i_3}\dots dc.
\end{multline}
Using the identity (\ref{Pf-corr}) we arrive at the
formula
\begin{multline}
\Pf(B)\sum_{n_4,n_6,\dots \geq 0}\frac{1}{n_4!(4!)^{n_4}n_6!(6!)^{n_6}\dots}\\
\sum_{i_1,i_2,i_3,\dots}P^{(4)}_{i_1,i_2, i_3,i_4}\dots P^{(6)}_{i_{4n_4+1},i_{4n_4+2},i_{4n_4+3}, i_{4n_4+4},i_{4n_4+5},i_{4n_4+6}}\dots\\
\sum_m (-1)^m (B^{-1})_{i_{m_1}i_{m_2}} (B^{-1})_{i_{m_3},i_{i_{m_4}}}\dots,
\end{multline}
where $m$ is a perfect matching on $1,2,\dots,k$, $k= \sum_{i\geq 3} i n_i$. The summation over $\{ i\}$ gives the number $F(D_m)$,
where $D_m$ is the diagram from Fig. \ref{diag-match}. Some of the
diagrams $D_m$ represent projections of the same graph. It is easy to show that the combination $(-1)^mF(D_m)$ depends only on the graph, but
not on its diagram and is equal to $(-1)^{c(D(\Ga))}F(D(\Ga))$ for
any diagram $D(\Ga)$ of $\Ga$. Thus, if we will change the summation from
$n_i$ and $m$ to the summation over graphs, the factorials
together with the number of perfect matchings corresponding to the same graph produce the
combinatorial factor $1/|\Aut(\Ga)|$.
\end{proof}

Having in mind applications to oscillatory integrals, it is
convenient to have the
formula (\ref{euk-f-int}) in the form
\begin{equation}\label{euk-f-intv2}
\int_{\CC^{0|n}} \exp\Bigl( \frac{i}{2h} (c,Bc)-\frac{i}{h}P(c)\Bigr)\, dc =h^{-\frac{n}{2}}
 \Pf(iB)\sum_\Ga(ih)^{-\chi(\Ga)}(-1)^{c(D(\Ga))}
 \frac{F(D(\Ga))}{|\Aut(\Ga)|}.
\end{equation}

\subsection{Formal asymptotics of oscillatory integrals
over supermanifolds} \index{oscillatory integrals over
supermanifolds} There is a number of equivalent definitions of
super-manifolds. For our goals a super-manifold $M_{(n|m)}$ is a
trivial vector bundle over a smooth $n$-dimensional manifold $M$
(even part) with the fiber which is the exterior algebra of an
$m$-dimensional vector space $V$ (odd part). The algebra of
functions on such a super-manifold is the algebra of sections of
this vector bundle with the point-wise exterior multiplication on
fibers, i.e. if $f,g: M\to M\times \wedge V$ are two sections
$x\mapsto (x, f(x))$ and $x\mapsto (x,g(x))$, their product is the
section
\[
x\mapsto(x,f(x)\wedge g(x)).
\]
In other words, it is the tensor product of the Grassmann algebra of
the fibers with the algebra of smooth functions on $M$, i.e.
\[ C^\infty(M_{(n|m)})=C^\infty(M)\otimes_\RR\left< c^1,\dots, c^m|c^\alpha
c^\beta=-c^\beta c^\alpha\right> .\] Elements of the algebra are polynomials
in anti-commuting variables $c^1,\cdots, c^m$ with coefficients in
smooth functions on $M$:
\begin{equation}\label{s-fncn}
f(x,c)=f_0(x)+\sum_{k=1}^m \sum_{\alpha_1<\dots <\alpha_k}f_{\alpha_1,\dots, \alpha_k}(x) c^{\alpha_1}\dots c^{\alpha_k}.
\end{equation}

Let $dx$ be a volume form for the manifold $M$. Choose the
orientation $c_1\dots c_m$ on the fibers.
By definition, the integral of the function $f(x,c)$ with respect
to the volume form $dx$ and the orientation $c_1\dots c_m$ is
\[
\int_{M_{(n|m)}}f dx dc=\int_M f_{1,\dots, m}(x) dx.
\]

An even function on such a super-manifold has only terms of
even degree in (\ref{s-fncn}).
{\it Critical points of an even function} $f$ on $M_{(n|m)}$ are,
by definition, critical points of $f_0$ on $M$.

Let $f$ be an even function on $M_{(n|m)}$. Consider the following
integral
\begin{equation}\label{s-int}
\int_{M_{(n|m)}}\exp(\frac{if(x,c)}{h}) g(x,c) dx dc.
\end{equation}
Here we assume that $M$ is compact, and that all functions are smooth.

Combining asymptotic analysis and the asymptotic expansion for
oscillating integrals with the formulae for Grassmann integrals
obtained in the previous section we arrive at the following
asymptotic expansion for the integral (\ref{s-int})
\begin{multline}\label{s-int-F}
\int_{M_{(n|m)}} \exp(\frac{if(x,c)}{h}) g(x,c) dx dc\simeq h^{\frac{n-m}{2}}(2\pi)^{\frac{n}{2}}\\
\sum_a\frac{1}{\sqrt{|\det(B(a))|}}\Pf(iL(a))
\exp({\frac{i}{h}f(a)+\frac{i\pi}{4}\sign(B(a))})\\
\left(1+ \sum_{\Ga\neq \O}\frac{(ih)^{-\chi(\Ga)}(-1)^{c(D(\Gamma))}F_a(D(\Ga))}{|\Aut (\Ga)|}\right).
\end{multline}
Here $B(a)_{ij}=\frac{\pa^2 f}{\pa x^i \pa x^j }(a)$ and
$L(a)_{\alpha\beta}=f_{\alpha\beta}(a)$, the summation is over
finite graphs with  two types of edges: fermionic edges (dashed),
and bosonic edges (solid), $c(D(\Ga))$ is the number of crossings of
fermionic edges in the diagram. Weights of edges (propagators) and
of vertices are given in Fig. \ref{super-weights}. An example is given
in Fig. \ref{super-diag}.

\begin{figure}[htb]
\sidecaption
\includegraphics[height=4cm,width=7cm]{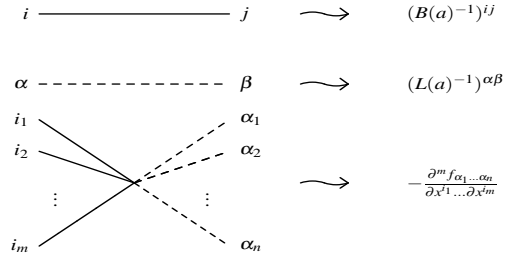}
\caption{Weights for Feynman diagrams in (\ref{s-int-F})}
\label{super-weights}
\end{figure}

\begin{figure}[htb]
\sidecaption[t]
\includegraphics[height=4cm,width=5cm]{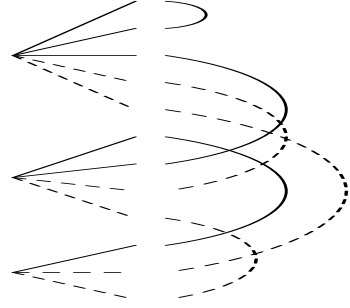}
\caption{An example of the Feynman diagram for super-integrals}
\label{super-diag}
\end{figure}

\subsection{Charged fermions}
Assume that $m=2k$. Denote $c^i=c^i, \bar{c}^i=c^{k+i}$ for $i=1,\dots, k$.  Assume that the
function $f$ in (\ref{s-int}) has the form
\[
f(x,c,\bar{c})=f_0(x)+\sum_{\alpha,\beta=1}^k f_{\alpha\bar{\beta}}(x)c^\alpha \bar{c}^{\bar{\beta}}+\dots\, ,
\]
where $\dots $ denote terms of higher order in $c, \bar{c}$.

In this case the asymptotic expansion of the integral (\ref{s-int})
is given by Feynman diagrams with {\it oriented} fermionic edges:

\begin{multline}\label{s-int-ch-F}
\int_{M_{(n|2k)}} \exp(\frac{if(x,c)}{h}) g(x,c) dx dc=h^{\frac{n-2k}{2}}(2\pi)^{\frac{n}{2}}\\
\sum_a\frac{1}{\sqrt{|\det(B(a))|}}\det(L(a))
\exp(\frac{i}{h}f(a)+\frac{i\pi}{4}\sign(B(a)))\\
\left(1+ \sum_{\Ga\neq \O}\frac{(ih)^{-\chi(\Ga)}(-1)^{c(D(\Gamma))}F_a(D(\Ga))}{|\Aut (\Ga)|}\right),
\end{multline}
where all ingredients are the same as in (\ref{s-int-F}) except the
summation is taken over the graphs with oriented fermionic edges,
and with weights from Fig. \ref{weight-orient}. An example is
given in Fig. \ref{F-diag-orient}.

\begin{figure}[htb]
\sidecaption
\includegraphics[height=4cm,width=7cm]{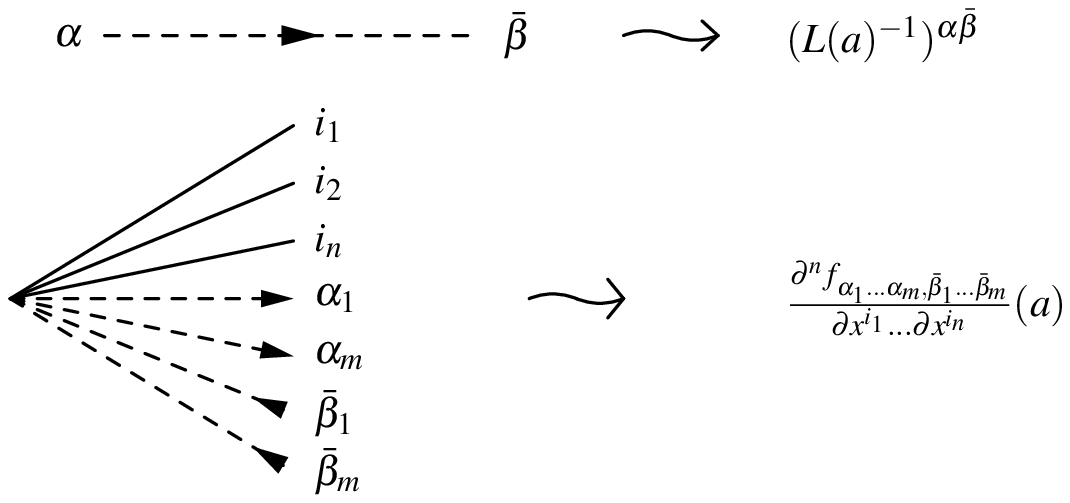}
\caption{Weights for Feynman diagrams in (\ref{s-int-ch-F})}
\label{weight-orient}
\end{figure}

\begin{figure}[htb]
\sidecaption[t]
\includegraphics[height=4cm,width=5cm]{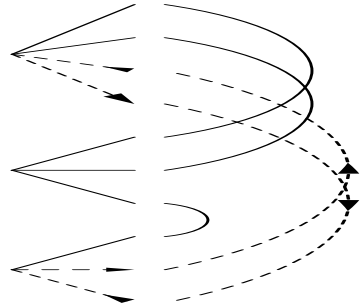}
\caption{Pairing with oriented edges,
producing a Feynman diagram}
\label{F-diag-orient}
\end{figure}

\section{Finite-dimensional Faddeev-Popov quantization and the BRST differential}
\index{Faddeev-Popov} In this section we will study the integral
(\ref{int-f}) when a Lie group $G$ acts on $X$ faithfully (with no
stabilizers) and the function $f$ is invariant with respect to this
action.

\subsection{Faddeev-Popov trick} Let $X$ be a manifold with the action of a  Lie group
$G$. We assume here that the action is free, i.e. that the stabilizer of every point in $X$ is trivial. Assume also that
$X/G$ is a manifold. (Note that what is really important is
the assumption that $X/G$ is smooth near orbits where $f$ is
critical). In this case
\[
\dim(X/G)=\dim(X)-\dim(G).
\]
Assume that the manifold $X$ has a $G$-invariant volume form, and that $X$ is compact. It is clear that such restrictions are
too strong, but we will see in the next section how they can be relaxed to reasonable assumptions.

Let $f(x)$ be a $G$-invariant real analytic function and $dx$ be a $G$-invariant measure on $X$. The goal of
this section is to prepare the set-up for the description of the
asymptotic expansion of the integral
\begin{equation}\label{IFP}
I_h=\int_{X} \exp\left(i\frac{f(x)}{h}\right)dx
\end{equation}
as the sum of Feynman diagrams, just as it was done in section for
functions on $X$ with simple critical points.

Since the function $f$ is $G$-invariant, its critical points are not simple, except when a critical point is a fixed point of the $G$-action, but since we assume faithfulness, there are no
such points.

Instead of assuming the simplicity of critical points of $f$ we assume that critical variety $C_f=\{x\in X|df(x)=0\}$ of $f$ is the disjoint union of finitely many $G$-orbits.

We want to change the integration
over $X$ to the integration over the orbits of the $G$-action.
In practice, it is convenient to describe the space of orbits in terms of a cross-section.

Let us assume that the surface
\[
S_\phi=\{x\in X| \phi^a(x)=0, a=1,\dots, n\},
\]
where $\phi^a(x), a=1,\dots, n$ with $n=\dim(G)$ are independent functions, is a cross-section, i.e. intersects every $G$-orbit
exactly once.

Let $x^i, i=1,\dots, d$ be local coordinates on $U\subset X$, $e_a, a=1,\dots, n$ be a linear basis in the Lie algebra $\g=Lie(G)$.
Denote by $D_a^i(x)$ the matrix
describing the action of $e_a$ as a vector field on $X$ in terms of local coordinates $x^i$:
\[
(e_a f)(x)=\sum_{i=1}^dD_a^i(x)\frac{\p f}{\p x^i}(x)
\]
and by $L_c^b(x)$ the matrix:
\[
L_a^b(x)=\sum_{i=1}^d D_a^i(x)\frac{\p \phi^b}{\p x^i}(x)=e_a\phi^b(x).
\]
Since we assume that $S_\phi$ is a cross-section, $\det(L)\neq 0$ on this surface. Later we will relax this condition requiring only that the determinant is not vanishing in a vicinity of critical points of $f$.

A coordinate free way to formulate this non-degeneracy condition can be phrased as follows.
For $x\in S_\phi\subset X$ let  $L_x\subset T_xX$ be the subspace of the tangent space
spanned by vector field describing the action of the Lie algebra
$\g$ on $X$, and $T_xS_\phi\subset T_xX$ be the tangent
space to $S_\phi$ at this point. The non-degeneracy of $L$ is
equivalent to linear independence of $L_x$ and $T_xS_\phi$ in $T_xX$.

\begin{theorem}(Faddeev-Popov)\footnote{Faddeev and Popov derived the formula (\ref{FP}) in the setting of the Yang-Mills theory,
 where the symmetry group is infinite-dimensional and only the integration over gauge classes may have a meaning, see \cite{FP}. } The integral in question is given by
\begin{equation}\label{FP}
\int_{X} \exp\left(i\frac{f(x)}{h}\right)dx=h^n|G|\int_L \exp\left(i\frac{f_{FP}(x,c,\bar{c},\lambda)}{h}\right)dx \ d\bar{c} \ dc \ d\lambda,
\end{equation}
where the supermanifold $L$ is $X\times \g_{odd}\times \g_{odd}\times \g^*$, $|G|$ is the volume of $G$ with respect to
a left invariant measure $dg$,
and
\begin{equation}\label{FPA}
f_{FP}(x,c,\bar{c},\lambda)=f(x)-ih\sum_{a,b=1}^n c^a L_a^b(x) \bc_b+\sum_{a=1}^n\lambda_a\phi^b(x).
\end{equation}
\end{theorem}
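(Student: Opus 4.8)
The plan is to derive the Faddeev–Popov formula (\ref{FP}) by starting from the right-hand side and integrating out the auxiliary fields in the reverse order, eventually recovering the left-hand side up to the claimed constants. First I would perform the Grassmann (fermionic) integration over $c$ and $\bar c$. Since $f_{FP}$ depends on $c,\bar c$ only through the quadratic term $-ih\sum_{a,b} c^a L_a^b(x)\bar c_b$, and the ghosts are ``charged'' (one conjugate pair), the Berezin integral $\int e^{-ih\,c^aL_a^b\bar c_b/h}\,d\bar c\,dc = \int e^{-i c^a L_a^b \bar c_b}\,d\bar c\,dc$ evaluates by the charged version of the Gaussian Grassmann lemma to $\det(iL(x)) = i^n\det(L(x))$ (up to the standard sign/orientation convention, which I would fix once and for all). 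This produces the factor $\det(L_a^b(x))$ that makes the delta-function change of variables work.

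Next I would integrate over $\lambda \in \g^*$. The dependence of $f_{FP}$ on $\lambda$ is linear: $\sum_a \lambda_a \phi^a(x)$ (I would first note that the exponent in (\ref{FPA}) should read $\sum_a \lambda_a\phi^a(x)$). Hence $\int_{\g^*} \exp\bigl(\tfrac{i}{h}\sum_a\lambda_a\phi^a(x)\bigr)\,d\lambda$ is, after rescaling $\lambda \mapsto h\lambda$, a product of one-dimensional delta functions $(2\pi h)^n\prod_a \delta(\phi^a(x))$ — this is where the factor $h^n$ in (\ref{FP}) partly originates, together with the $(2\pi)^n$ that is absorbed into the normalization conventions for the measures $dc,d\bar c,d\lambda$ (I would be explicit about which normalization of the Berezin and Lebesgue measures makes the constant come out exactly as stated). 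What remains is $\int_X e^{if(x)/h}\,\delta(\phi(x))\,|\det L(x)|\,dx$ (the absolute value being harmless on the cross-section, and more to the point only a neighborhood of $C_f$ matters by the localization discussion preceding the theorem).

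Finally I would recognize the resulting integral as a coarea/Fubini computation. Choosing near each orbit adapted coordinates splitting $X$ locally as $S_\phi \times G$ (possible precisely because $S_\phi$ is a genuine cross-section and $L$ is nondegenerate there), the $G$-invariant volume form factorizes as $dx = J(x)\,d\sigma\,dg$ where $d\sigma$ is a transverse form on $S_\phi$ and $dg$ the left-invariant Haar form; the Jacobian of $\phi$ restricted to the orbit directions is exactly $\det L(x)$, so $\delta(\phi(x))\,|\det L(x)|\,dx = d\sigma\,dg$ on $S_\phi\times G$. Integrating out $dg$ gives the volume $|G|$, and using $G$-invariance of $f$ (so $f(x)=f$ depends only on the orbit, i.e. on the point of $S_\phi$) collapses $\int_{S_\phi} e^{if/h}\,d\sigma = \int_{X/G} e^{if/h}$, which equals the original $\int_X e^{if/h}\,dx$ up to the same $|G|$ — bookkeeping that I would carry through carefully so the two factors of $|G|$ land on the correct side. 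I expect the main obstacle to be precisely this constant-chasing: keeping consistent conventions for the Berezin integral sign, the $(2\pi)$'s from the delta-function representation, and the normalization of Haar measure so that the final identity reads exactly $h^n|G|$ with no stray powers of $2\pi$ or $i$; the rule stated after the previous theorem (replace $B \mapsto iB$, $V^{(n)}\mapsto iV^{(n)}$ to remove powers of $i$) is the tool I would invoke to manage the $i^n$ from the ghost determinant against the $i$'s hidden in the oscillatory normalization.
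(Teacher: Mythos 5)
Your proposal is correct and is essentially the paper's argument run in reverse: the paper starts from the left-hand side, inserts $1=\Delta(x)\int_G\delta(\phi(gx))\,dg$ with $\Delta(x)=\det L(g_0x)$, pulls out $|G|$ by $G$-invariance, and then unfolds $\det L$ and $\delta(\phi)$ as the ghost and $\lambda$ integrals, whereas you collapse those integrals and then do the orbit/coarea computation — the same three ingredients in the opposite order. One bookkeeping note: the prefactor $-ih$ in the ghost term of $f_{FP}$ is chosen precisely so that after multiplication by the overall $i/h$ the Grassmann Gaussian is exactly $\exp(\sum c^aL_a^b\bc_b)$, yielding $\det L$ with no stray $i^n$; your $\det(iL)=i^n\det L$ comes from dropping that overall $i/h$ before integrating out the ghosts.
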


\begin{proof} From the $G$-invariance of $f$:
\begin{equation}\label{FP-int-proof}
\int_{X}\exp\left(i\frac{f(x)}{h}\right)dx=\int_{X\times G}
\exp(i\frac{f(x)}{h}) \Delta(x)\delta(\phi(x))dxdg,
\end{equation}
where $\Delta(x)$ is determined by the identity
\begin{equation}\label{Delta}
\Delta(x)\int_G\delta(\phi(gx))dg=1.
\end{equation}
Here $dg$ is the right-invariant measure on $G$.
The $G$-orbit through $x$ intersects the cross-section $S_\phi$
only once (since it is a cross-section). Denote this point $g_0x$ (such $g_0$ depends on $x$, it exists because $S_\phi$ is a cross-section,
and, in particular, intersects all orbits). Then, we have
\[
\phi^a(g_0x)=0.
\]
In a vicinity of this point
\[
\phi^a(\exp(\sum_b t^be_b)g_0x)=\sum_{b,i} t^bD_b^i(g_0x)\frac{\p\phi^a(g_0x)}{\p x^i}+O(t^2)=\sum_b
t^bL_b^a(g_0x)+O(t^2).
\]
Thus, the identity (\ref{Delta})
is equivalent to
\[
\Delta(x)\int_{\RR^n}\delta(L(g_0x)t)dt=1,
\]
i.e.
\[
\Delta(x)=\det(L(g_0x)).
\]
Here we identified $T_{g_0}G\simeq \RR^n$. Notice that
$g_0$ depends on $x$ and $\Delta(hx)=\Delta(x)$.
Taking this into account we arrive at the formula
\[
\int_{X}\exp\left(i\frac{f(x)}{h}\right)dx=
|G|\int_{X}\exp\left(i\frac{f(x)}{h}\right)
\det(L(g_0x))\delta(\phi(g_0x))dx.
\]
Taking into account that $g_0=1$ when $x\in S_\phi$, we obtain
\[
\int_{X}\exp\left(i\frac{f(x)}{h}\right)dx=
|G|\int_{X}\exp\left(i\frac{f(x)}{h}\right)
\det(L(x))\delta(\phi(x))dx.
\]
Expressing $\det(L(x))$ as a fermionic integral and taking into account
\[
\delta(\phi)=\int_{\RR^n} \exp(i(\phi,\lambda)) d\lambda,
\]
we arrive at the formula (\ref{FP}).

\end{proof}

\subsection{Feynman diagrams with ghost fermions} Now let us use the formula (\ref{FP}) to derive the
Feynman diagram expansion of the integral (\ref{IFP}).

Critical points of the function $f_{FP}$ on the supermanifold $L$ are, by definition,  critical points of
\[
\tilde{f}(x,\lambda)=f(x)+\sum_a\lambda_a \phi^a(x).
\]
This is simply the Lagrange multiplier method and by the
assumption which we made above critical points of this function on
$X\times \g^*$ are simple. In particular the matrix of second derivatives is non-degenerate near each critical point of this function.

Thus, we can describe the asymptotic expansion of the integral
(\ref{IFP}) by Feynman diagrams. Applying the formula
(\ref{s-int-ch-F}) to the integral (\ref{FP}) we obtain the
following asymptotic expansion:
\begin{multline}\label{as-FP}
\int_L \exp\left(\frac{if_{FP}(x,\overline{c},c,\lambda)}{h}\right) g(x,c,\bar{c}) dx dc d\overline{c} d\lambda \simeq |G|h^{\frac{d-n}{2}}(2\pi)^{\frac{d+n}{2}}\\
\sum_a\frac{1}{\sqrt{|\det(B(a))|}}\det(-iL(a))
\exp\left(\frac{i}{h}f(a)+\frac{i\pi}{4}\sign(B(a))\right)\\
\left(1+ \sum_{\Ga\neq \O}\frac{(ih)^{-\chi(\Ga)}(-1)^{c(D(\Gamma))}F_a(D(\Ga))}{|\Aut (\Ga)|}\right),
\end{multline}
Here the first summation is over the set of critical points of
$\tilde{f}$. Feynman diagrams in this formula have bosonic edges and
fermionic oriented edges, $c(D(\Ga))$ is the number of crossings of
fermionic edges. The structure of Feynman diagrams is the same as in
(\ref{s-int-ch-F}). The propagators corresponding to Bose and Fermi
edges are shown in Fig. \ref{FP-diag-edges}. The weights of vertices  are
shown in Fig. \ref{FP-diag-vert}\footnote{Each fermionic propagator
contributes to the weight of the diagram an extra factor $h^{-1}$.
Each vertex with two adjacent fermionic (dashed) edges contributes
the factor of $h$. Because fermionic lines form loops, these factors
cancel each other.}.

\begin{figure}[htb]
\includegraphics[height=2.5cm,width=8cm]{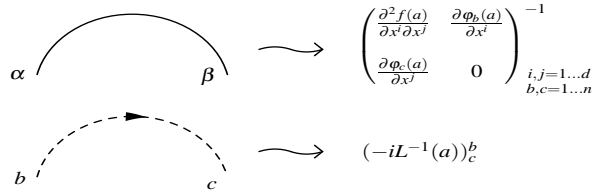}
\caption{Weights of edges for Feynman diagrams in (\ref{as-FP})}
\label{FP-diag-edges}
\end{figure}

\begin{figure}[htb]
\sidecaption[t]
\includegraphics[height=4cm,width=7cm]{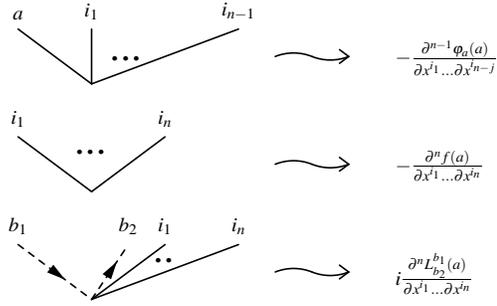}
\caption{Weights of vertices for Feynman diagrams in (\ref{as-FP})}
\label{FP-diag-vert}
\end{figure}

The asymptotic expansion (\ref{as-FP}) depends only on how the cross
section $S_\phi$ intersects $G$-orbits in the infinitesimal
neighborhood of critical points of $f$. In other words, the
expansion is defined as long as the linear operators $B(a)$ and $L(a)$
are invertible at all critical points of the function
$\tilde{f}(x,\lambda)$. This is equivalent to the condition
$T_aS_\phi\cap \g_a=\{0\}$ where $T_aS_\phi\subset T_aX$ is the
tangent space to $S_\phi$ at $a$, and $\g_a$ is the subspace in
$T_aX$ spanned by vector fields describing the infinitesimal action
of the Lie algebra of $G$.

The main moral of this observation is that in order to have the
asymptotic expansion of the integral in terms of Feynman diagrams we just have to choose a constraint which is a cross-section through the orbits in an infinitesimal neighborhood of critical orbits.

\subsection{Gauge independence}\index{gauge independence}   The asymptotic expansion
of the integral (\ref{IFP}) does not depend on the choice of the
constraint $\phi$ (as long as it is  a cross-section through the
$G$-orbits of tangent spaces at critical points).

However, it is not obvious from the Feynman diagram formula for the
asymptotic expansion. Let us check that the semiclassical term of
the expansion does not depend on $\phi$. Till the end of this section we work in a vicinity of a critical point of $f_{FP}$.  The semiclassical term is
\[
\det(B)^{-\frac{1}{2}}\det(L),
\]
where
\begin{equation}\label{B}
B=\left(\begin{array}{cc} \frac{\pa^2 f}{\pa x^i \pa x^j} & \frac{\pa \phi^b}{\pa x^i}\\
\frac{\pa \phi^a}{\pa x^j} & 0
\end{array}\right)
\end{equation}
and
\begin{equation}\label{L}
L_c^b=\sum_i l^i_c\frac{\pa \phi^b}{\pa x^i}.
\end{equation}
Let us make an infinitesimal variation of the constraint $\phi^a(x)\mapsto \phi^a(x)+\ep^a(x)$. The product of the determinants
will change as
\[
\det(B)^{-\frac{1}{2}}\det(L)\mapsto\det(B)^{-\frac{1}{2}}\det(L)\left(1-\frac{1}{2}\tr (B^{-1}\delta B)+ \tr (L^{-1}\delta L)+\dots\right),
\]
where $\dots$ are higher order terms.
We have to prove that the first order terms vanish.
The matrix $B$ has the block form, so is the matrix $B^{-1}$.
Both of these matrices are symmetric, therefore
\[
-\frac{1}{2}\tr (B^{-1}\delta B)=-\tr ((B^{-1})_{12}\delta B_{21})=
-\sum_{i,c}b_c^i \frac{\pa \epsilon^c}{\pa x^i},
\]
where $b_a^i$ are matrix elements of the block $(B^{-1})_{12}$.
They satisfy the
identity $\sum_i  \frac{\pa \phi^b}{\pa x^i}b_c^i=\delta_c^b$.

The second term can be written as
\[
\tr (L^{-1}\delta L)=\sum_{b,c,i}(L^{-1})^c_b l^i_c\frac{\pa \epsilon^b}{\pa x^i}.
\]
Using the identity $\sum_i(L^{-1}l)^i_b\frac{\pa \phi^c}{\pa x^i}=\delta^c_b$ and the corresponding identity for $b$
we conclude that
\[
-\frac{1}{2}\tr (B^{-1}\delta B)+\tr (L^{-1}\delta L)=0,
\]
which proves that the semiclassical factor does not depend on the
choice of the gauge condition.

We will leave the exercise
of verifying this fact in all orders $\geq 1$ to the reader.

\subsection{Feynman diagrams for linear constraints}\label{FP-gost-restr} Because the asymptotic expansion depends only on the formal neighborhood of critical points of $f(x)$ on the
surface of the constrains and does not depend on the
particular choice of the constraint (as long as it is
a local cross-section in the neighborhood of each critical
point), we can choose them at our convenience at each neighborhood.

In particular, if $X$ is linear, we can deform $\phi$ to a linear
cross-section in a formal neighborhood of each critical point. Now let us find the
asymptotic expansion of the integral
\begin{equation}\label{int-lin}
\int_{X_{x_0}}\exp\left(i\frac{f(x)}{h}\right)
\det(L(x))\delta(\phi(x))dx,
\end{equation}
where $X_{x_0}$ is an infinitesimal neighborhood of $a\in X$.

Choose coordinates $x^i=x_0^i+\sum_al_a^i(x_0)X^a+\sum_\alpha\psi^i_\alpha s^\alpha$
where $\{\psi_\alpha\}$ is a basis in $ker(\phi)$. Taking into account that $x_0\in ker(\phi)$, for the integral
(\ref{int-lin}) we obtain
\[
\int_{X_a}\exp\left(i\frac{f(x)}{h}\right)
\det(L(x))\delta(\phi(x))dx=\int_{ker(\phi)}\exp\left(i\frac{f(s)}{h}\right)\det(L(s))\det(\hat{\phi})^{-1}ds,
\]
Since the constraint is linear, $L_a^b(s)=\sum_i l^i_a(x) \phi^a_i$.
Here $l_a^c(s)$ is the matrix describing the action of
$\g$ on $\g(a) \subset T_aX$.

Thus, up to a constant, the integral in question can be written as
\[
\int_{X_\phi(a)}\exp\left(i\frac{f(s)}{h}\right)\det(L(s))ds.
\]

\begin{figure}[htb]
\sidecaption
\includegraphics[height=4cm,width=7cm]{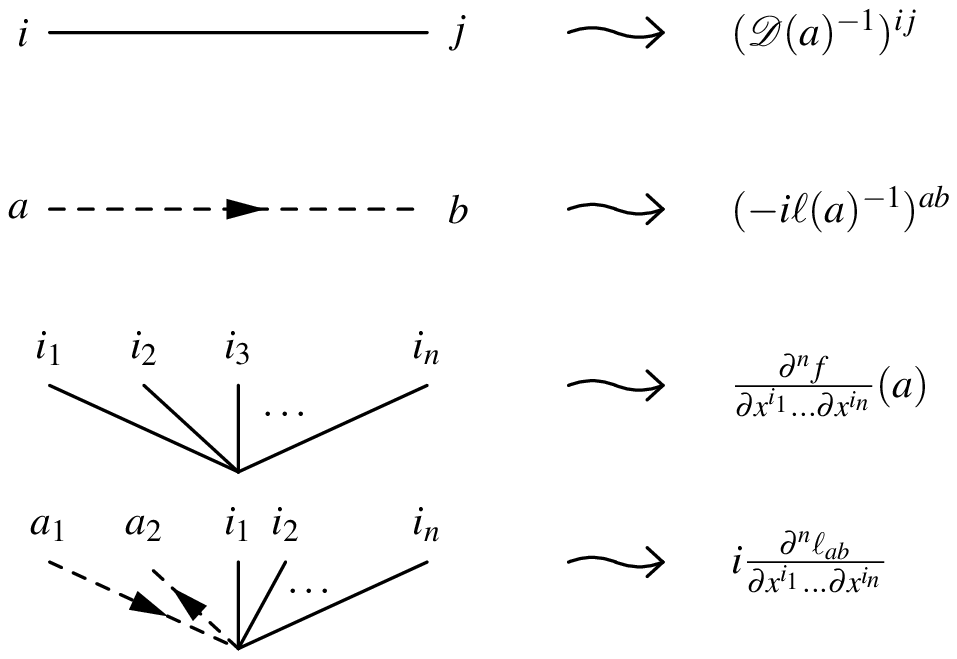}
\caption{Weights of Feynman diagrams in (\ref{red-FP})}
\label{reduced-FP-diag}
\end{figure}

Finally, we can write the asymptotic expansion of (\ref{FP}) as
\begin{multline}\label{red-FP}
\int_{X}\exp\left(i\frac{f(x)}{h}\right)dx\simeq |G|h^{\frac{d-n}{2}}
(2\pi)^{\frac{d+n}{2}}\\
\sum_a \frac{1}{\sqrt{|\det(D(a))|}}\det(-il(a))
\exp({\frac{i}{h}f(a)+\frac{i\pi}{4}\sign(D(a))})\\
\left(1+ \sum_{\Ga\neq \O}\frac{(ih)^{-\chi(\Ga)}(-1)^{c(D(\Gamma))}F_a(D(\Ga))}{|\Aut (\Ga)|}\right).
\end{multline}
Here $D(a)_{ij}=\frac{\pa^2 f}{\pa s^i \pa s^j}$ where $s^i$
are coordinates on $X_\phi$. The coefficients are given by Feynman diagrams with
weights of edges and vertices described in Fig. \ref{reduced-FP-diag}, and all other ingredients
are as before.

The factor $\exp(\frac{i\pi}{4}\sign(B))$ can also be written as
$i^N\exp(-\frac{i\pi}{2}n_-(B))$ where $n_-(B)$ is the number of
negative eigenvalues of $B$. This is more or less how the {\it Morse
index} appears in the semiclassical asymptotic of the propagator in
quantum mechanics.

\subsection{The BRST differential}
\index{BRST differential} The appearance of fermionic variables
(Faddeev-Popov ghost fields) in the asymptotic expansion of
(\ref{FP}) looks as a bit of a mystery and as a technical trick. In
the BRST approach these non-commutative variables attain a natural
meaning.

The key observation of Becchi, Rouet, Stora \cite{BRS} and of Tuytin \cite{T}\footnote{The original formulation uses the supersymmetry
concept and has a slightly different appearance.} is that
the odd operator $Q$
\[
Q=\sum_{a,i=1}^{n,d} c^a L_a^i\frac{\p}{\p x^i}-\frac{1}{2}\sum_{a,b,c} f_{bc}^ac^bc^c\frac{\p}{\p c^a}+\sum_a\lambda_a \frac{\p}{\bc_a}
\]
acting on the space $C^\infty(L)=Fun(X\times \g^*)\otimes \CC[c^a,\bc_a]=C^\infty(X\times \g^*)\otimes \wedge(\g\oplus \g^*)$
of functions on the super-manifold $L=X\times \g_{odd}\times \g_{odd} \times \g^*$ has the properties
$$Q^2=0,$$
$$Qf_{FP}=0.$$

The first property means that the pair $(C^\infty(L),Q)$ is a co-chain
complex. Because we assumed that the action of $G$ on $X$
is faithful, its zero cohomology can be naturally identified with $C^\infty(X/G)$, and other the cohomologies vanish.
Note that $Q=Q_{Ch}+Q_{dR}$, where the first term is the differential in the  Chevalley complex for $\g$ with the coefficients in $C^\infty(X)$. The second term $Q_{dR}=\sum_a\lambda_a \frac{\p}{\bc_a}$ is the de Rham differential for functions on $\g^*$.

The second property means that
the Faddeev-Popov action is a cocycle in the BRST complex.
The function $f_{FP}$ is not a co-boundary and therefore defines
a non-trivial zero-cohomology class in $H^0(L)\simeq C^\infty(X/G)$.
This class is simply the initial function $f$ considered as
a function on $G$-orbits. Indeed, the function $f_{FP}$ can be written as $f_{FP}=f+Q(\sum_a \phi^a \bc_a)$.

To see how the integral over the
super-space $L$ appears in this setting, consider first
a simple fact in linear algebra.

Let $C$ be a super-vector space and $d: C\to C$ be an odd linear
operator with $d^2=0$. Assume $D$ is another super-vector space with an odd differential $d^*:D\to D, {d^*}^2=0$ and a non-degenerate pairing $\left< .,.\right> : D\otimes C\to \CC$ such that
$\left< d^*l,a\right> =(-1)^{\bar{l}}\left< l, da\right> $.

We will think of $(D, d^*)$ and $(C,d)$ as co-chain complexes and
say that $l\in D$ and $a\in C$ are cocycles, if $d^*l=0$ and $da=0$.
Denote by $[l]\in H(D)=\Ker(d^*)/\im(d^*)$ and $[a]\in
H(C)=\Ker(d)/\im(d)$ the cohomology classes of the cocycles $l$ and
$a$.

\begin{lemma} If  $l$ and $a$ are cocycles, then
\[
\left< l,a\right> =\left< [l],[a]\right> ,
\]
where $\left< [l],a]\right> $ is the induced pairing on the
cohomology spaces.
\end{lemma}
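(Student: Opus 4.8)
The plan is to reduce the statement to a single well-definedness check: I will show that the bilinear form $\langle\,\cdot\,,\,\cdot\,\rangle$, restricted to cocycles, vanishes as soon as one of its arguments is a coboundary. Once this is done, the number $\langle l,a\rangle$ depends only on the classes $[l]\in H(D)$ and $[a]\in H(C)$, so it descends to a bilinear pairing $H(D)\otimes H(C)\to\CC$ — and this descended pairing is precisely what is meant by $\langle [l],[a]\rangle$, so the asserted identity holds by definition.

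First I would verify independence of the representative of $[a]$. Fix a $d$-cocycle $a$ and an arbitrary $b\in C$; I want $\langle l,a+db\rangle=\langle l,a\rangle$, i.e. $\langle l,db\rangle=0$. Applying the adjunction relation $\langle d^*l,b\rangle=(-1)^{\bar l}\langle l,db\rangle$ and using that $l$ is a cocycle, $d^*l=0$, gives $\langle l,db\rangle=0$. Symmetrically, for independence of the representative of $[l]$: given $m\in D$, the adjunction relation gives $\langle d^*m,a\rangle=(-1)^{\bar m}\langle m,da\rangle$, which vanishes because $a$ is a cocycle, $da=0$; hence $\langle l+d^*m,a\rangle=\langle l,a\rangle$. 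Thus $\langle l,a\rangle$ is unchanged when $l$ or $a$ is modified by a coboundary, so $([l],[a])\mapsto\langle l,a\rangle$ is a well-defined pairing on cohomology.

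Since by definition the induced pairing on $H(D)\otimes H(C)$ is the one represented on cocycles by $\langle\,\cdot\,,\,\cdot\,\rangle$, the equality $\langle l,a\rangle=\langle [l],[a]\rangle$ is immediate from the previous step, completing the argument.

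The only thing to watch is the bookkeeping of the parity signs $(-1)^{\bar l}$ and $(-1)^{\bar m}$ coming from the super-structure; but in both uses the offending sign multiplies a quantity that already equals zero (namely $\langle l,db\rangle$ is paired against $d^*l=0$, and $\langle m,da\rangle$ uses $da=0$), so the signs play no role and there is no genuine obstacle here — the entire content of the lemma is the well-definedness verified above.
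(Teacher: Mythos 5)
Your proposal is correct and is essentially the paper's own argument: the paper likewise expands $\left< l+d^*m,\,a+dc\right> =\left< l,a\right>$ using the adjunction relation together with $d^*l=0$ and $da=0$ to kill the cross terms, which is exactly your well-definedness check in both arguments. No gaps; your extra remark about the parity signs is harmless since, as you note, each sign multiplies a term that vanishes anyway.
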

Indeed, the cocycle properties imply that
\[
\left< l+d^*m,a+dc\right> =\left< l,a\right> ,
\]
which defines the pairing on the cohomology spaces and proves the lemma.

Now we should identify ingredients of this lemma in the
FP-BRST setting.
The $G$-invariance of the measure of integration $dxd\bar{c}dcd\lambda$ in the
FP integral which we will denote by $dl$ implies\footnote{ The operator $Q$ can be regarded as an super-vector field on $L$. The invariance of the measure $dl$ is equivalent to the zero-divergence condition of the vector-field (with respect to the measure $dl$). Recall that for any vector field $Q$ we have:
\[
\int_LQg dl=\int_L g \Div_{dl}(Q) dl
\]
where $\Div_{dl}(Q)$ is the divergence of the vector field $Q$ with
respect to the volume measure $dl$.}
\[
\int_L Qg dl=0
\]
Considering the integral as a linear functional on $C=C^\infty(L)$
with the differential we can think of it as an element of $D$ which is annihilated by $d^*$.

Applying the lemma to a cocycle $g\in C^\infty(L)$, i.e.
to a function, such that $Qg=0$ we arrive at the identity
\begin{equation}\label{lefsh}
\int_L g dl= \int_Y [g] dy.
\end{equation}
Here $Y$ is the super-manifold such that $H^0(C^\infty(L))=
C^\infty(Y)$, i.e. the appropriate topological version of
$X/G$. If we were in algebro-geometric setting,
the variety $Y$ would be the spectrum of the commutative algebra
$H^0$. We also made an assumption that all cohomologies except $H^0$ are vanishing, which is in our setting equivalent to the faithfulness of the $G$-action on $X$.

The equation (\ref{lefsh}) implies, in particular, that
if $Qg=0$ (i.e. if $g$  is $G$-invariant) and if the measure is
$G$-invariant, then
\[
\int_L \exp\left(\frac{if_{FP}}{h}\right)g dl= \int_Y \exp\left(\frac{if}{h}\right)[g] dy.
\]

This puts the Faddeev-Popov method into a natural algebraic setting and
''explains`` the algebraic meaning of fermionic ghost fields. It also
shows that the method can be extended to any complex which has
$C^\infty(X/G)$ as its cohomology space. Because of the formula
(\ref{lefsh}) it does not matter with which complex
$(C^\infty(L),Q)$ we started, as long as the cohomology space is
$C^\infty(X/G)$. This observation leads to an important notion of
cohomological field theories \cite{CohFTh} and to a natural notion
of quasi-isomorphic field theories.

Perhaps one of the most important developments along these lines is the extension of the BRST observations to a more general
class of degenerate Lagrangians (i.e. degenerate critical points of $f$). This generalization known as Batalin-Vilkovisky quantization (BV) works even in the case
when the Lagrangian is invariant with respect to
the action of vector fields, which do not necessary form
a Lie algebra. One of the most striking applications of this technique was the quantization of the Poisson sigma model and
the construction of the *-product for an arbitrary Poisson
manifold. But this subject goes beyond the goal of the present
lectures.

\section{Semiclassical quantization of a scalar Bose field}

The classical theory of a scalar Bose field is described in Section
\ref{cBose} Let us define the amplitude $Z(M)$ as a semiclassical
expansion of a (non-existing) path integral given by Feynman
diagrams similar to how the asymptotic expansion looks for finite-dimensional integrals.

This definition can be motivated by finite-dimensional approximations to the path integral, which provide
an acceptable definition of {\it infinite-dimensional integrals}
such as Wiener integral and path integrals in low-dimensional
Euclidean quantum field theories \cite{GJ}.

In semiclassical quantum field theory, path integrals are defined
as formal power series which have the same structure as if they were
asymptotical expansions of existing integrals. The coefficients in
these expansions are given by Feynman integrals. We will show how it
works in quantum mechanics, and how it compares with the
semiclassical analysis of the Schr{\"o}dinger equation for $d=1$. We will have a brief discussion of th e$d>1$ case, as well.

\subsection{Formal semiclassical quantum mechanics}
\index{semiclassical quantum mechanics}
\subsubsection{Semiclassical asymptotics from the Schr{\"o}dinger equation}
To be specific, we will consider here quantum mechanics of a point particle on a Riemannian manifold $N$ in a potential $V(q)$(see sections \ref{cMech}).

Let $\{\gamma_c(t)\}_{t_1}^{t_2}$ be a solution to the Euler-Lagrange equations for a classical Lagrangian $\LL(\xi,q)$
with Dirichlet boundary conditions $\gamma(t_1)=q_1, \gamma(t_2)=q_2$. Denote by $S^{(c)}_{t_2-t_1}(q_2,q_1)$
the value of the classical action functional on $\gamma_c$:
\[
S^{(c)}_{t_2-t_1}(q_2,q_1)=\int_{t_1}^{t_2}\LL(\dot\gamma_c(t),\gamma_c(t))dt.
\]

Let $U_t(q_2,q_1)$ be the kernel of the integral operator
representing the evolution operator (\ref{evol-kernel}). Solving
Schr{\"o}dinger equation (\ref{Sch-eqn}) in the limit $h\to 0$ we
obtain the following asymptotics of the evolution kernel as $h\to
0$

\begin{multline}\label{s-cl-schr}
U_t(q_2,q_1)\sim\sum_{\gamma_c}(2\pi i)^{-\frac{n}{2}}\exp\left(\frac{i}{h}S^{(c)}_t(q_2,q_1)+\frac{i\pi \mu(\gamma_c)}{2}\right)\\
\left|\wedge^n\left(\frac{\pa^2S^{(c)}_t(q_2,q_1)}{\pa q_2\pa q_1} dq_2\wedge dq_1\right)\right|^{\frac{1}{2}}\left(1+\sum_{n\geq 1}h^n U_c^{(n)}(q_2,q_1)\right).
\end{multline}
Here
\begin{multline}
\wedge^n \left(\frac{\pa^2S(a,b)}{\pa a\pa b} da\wedge db\right)\\
=\wedge^n d_{a}d_{b}S(a,b)=
\det\left(\frac{\p^2S(a,b)}{\pa a^i\pa b^j}\right)da^1\wedge\dots da^n\wedge
db^1\wedge\dots db^n\, .
\end{multline}
$\mu(\gamma_c)$ is the Morse index of $\gamma_c$. The coefficients $a_k^{(c)}=(2\pi i)^{-\frac{n}{2}}(\det(\frac{\p^2S(a,b)}{\pa a^i\pa b^j}))^{\frac{1}{2}}U^{(n)}_c$ satisfy the transport equation
\[
\frac{\pa a^{(c)}_k}{\pa t}+\frac{1}{2m}\Delta S^{(c)} a^{(c)}_k+\frac{1}{m}\sum_{j=1}^n\frac{\pa S}{\pa q_j}\frac{\pa a^{(c)}_k}{\pa q_j}+\frac{i}{2m}\Delta a^{(c)}_{k-1}=0.
\]
However the initial condition $\lim_{t\to +0}U_t(q,q')=\delta(q,q')$
can no longer be imposed since we consider the asymptotical
expansion when $h\ll t$. Instead, to determine the coefficients $a^{(c)}_k$, one should use the semi-group property of the propagator:
\[
U_tU_s=U_{s+t}.
\]
The kernel of the integral operators representing the evolution operator satisfies
the identity
\begin{equation}\label{semigroup}
\int_N U_t(q_3,q_2)U_s(q_2,q_1)=U_{s+t}(q_3,q_1).
\end{equation}
Here the first factor is a half-density in $q_3,q_2$, the
second is a half-density in $q_2, q_1$. The product is a
density in $q_2$ and it is integrated over $N$.

As $h\to 0$ the semigroup property implies that the asymptotical
expansion should satisfy the identity
\[
\sum_{k,l\geq 0} \int_N \exp\left(\frac{i(S^{(c')}_t(q_3,q_2)+S^{(c")}_s(q_2,q_1))}{h}\right)
a_k^{(c')}a_l^{(c")} =\sum_k  \exp\left(\frac{iS^{(c)}_t(q_3,q_1)}{h}\right)
a_k^{(c)}.
\]
Here by the integral of the product of two half-densities on $N$ we
mean the formal asymptotic expansion (\ref{as-exp}), and
$\gamma_{c'},\gamma_{c"}$ are parts of the path
$\{\gamma_c\}_{t=0}^{t+s}$ when $0<\tau<s$ and $s<\tau< s+t$
respectively.

It is not difficult to derive the first coefficients of
the asymptotical expansion (\ref{s-cl-schr}) from this
equation. Moreover, this equation alone defines all higher order
terms in the semiclassical expansion.

For more details on the semiclassical analysis see for example \cite{Takht}.

\subsubsection{Semiclassical expansion from the path integral}
\index{path integral} Looking at the expression (\ref{s-cl-schr}),
it is natural to imagine that it may be interpreted as a
semiclassical asymptotics of an oscillating integral over the space
of paths connecting the points $q_1$ and $q_2$. Critical points in
this integral are classical trajectories.

This point of view was put forward in quantum mechanics by R.
Feynman and it can be supported by many very convincing arguments
\cite{FH}. Eventually, a mathematically meaningful definition of a
path integral for the Euclidian version (when the integral is
rapidly decaying instead of oscillating) emerged, and was developed
further in the framework of constructive field theory. The {\it Wiener
integral}, which was introduced in probability theory, even earlier,
is an example of such an object.

Here we will not try to make the definition of the integral
rigorous. Instead of this we will define its semiclassical expansion
in such a way that it has an appearance of a semiclassical
expansion of an infinite-dimensional integral. After this we will
check that it satisfies the semigroup property. This is an
illustration of  a semiclassical quantum field theory, where the
partition function $Z_M$ depends on the boundary condition, and
integrating over possible boundary conditions has the replicating
gluing property (\ref{gluing-prop}). The difference is that in
quantum mechanics we have the Schr{\"o}dinger equation as a
reference point to compare any definition of the path integral. In the
more complicated models of quantum field theory, the gluing axioms
seem to be the only major structural requirement (beyond unitarity
and causality, which we do not discuss here).

So, we are looking for a formal power series which would look like
the asymptotic expansion of the integral
\[
Z_t(q_2,q_1)=\int_{\gamma(0)=q_1, \gamma(t)=q_2}
\exp({\frac{i}{h}S[\gamma]}) D\gamma.
\]
We will focus in this section on the point particle of
mass $m$ in $\RR^n$ in the potential $V(q)$ (\ref{Lpp}).
By analogy with the finite-dimensional case, we define the asymptotic expansion as
\begin{multline}\label{qm-feyn}
Z_t(q_2,q_1)=C\sum_{\gamma_c}\exp\left(\frac{i}{h}S^{(c)}_t(q_2,q_1)
-\frac{i\pi}{2} n_-(K^{(c)})\right)\\
|{\det}'(K^{(c)})|^{-\frac{1}{2}}|dq_1|^{\frac{1}{2}}|dq_2|^{\frac{1}{2}}\left(1+\sum_{\Ga\neq
\emptyset}(ih)^{-\chi(\Ga)} \frac{F_c(\Ga)}{|\Aut(\Ga)|}\right).
\end{multline}
Here
\[
(K^{(c)})_{ij}=-mh^2\frac{d^2}{d\tau^2}\delta_{ij}+\frac{\pa^2 V}{\pa x^i \pa x^j}(\gamma_c(\tau))
\]
is the matrix differential operator which acts on the space of
functions on $[0,t]$ with values in $\RR^n$ (trivialized tangent
bundle to $N$ restricted to $\gamma^{(c)}$ in local coordinates)
with the Dirichlet boundary conditions $f(0)=f(t)=0$. The
half-density $|dq|^{\frac{1}{2}}$ is the ''square root`` of the
Riemannian volume density on $N$. The sum is taken over classical
trajectories connecting $q_1$ and $q_2$, and $n_-(K^{(c)})$ denotes
the number of negative eigenvalues of the operator $K^{(c)}$, $C$ is
some constant. The weights for Feynman diagrams in (\ref{qm-feyn})
are given in Fig. \ref{QM-F-weights}, where $G^{ij}(x,y)$ is the
kernel of the integral operator which is the inverse to $K^{(c)}$.

The expansion is not the result of computation. It is a definition, which is based on the idea that the path integral exists in some
sense and its asymptotical expansion as $h\to 0$ is
given by a formula similar to the finite-dimensional case.
It turns out that despite very different appearance the
semiclassical expansion of $U_t$ coincidea with this series.

\begin{figure}[htb]
\includegraphics[height=3cm,width=8cm]{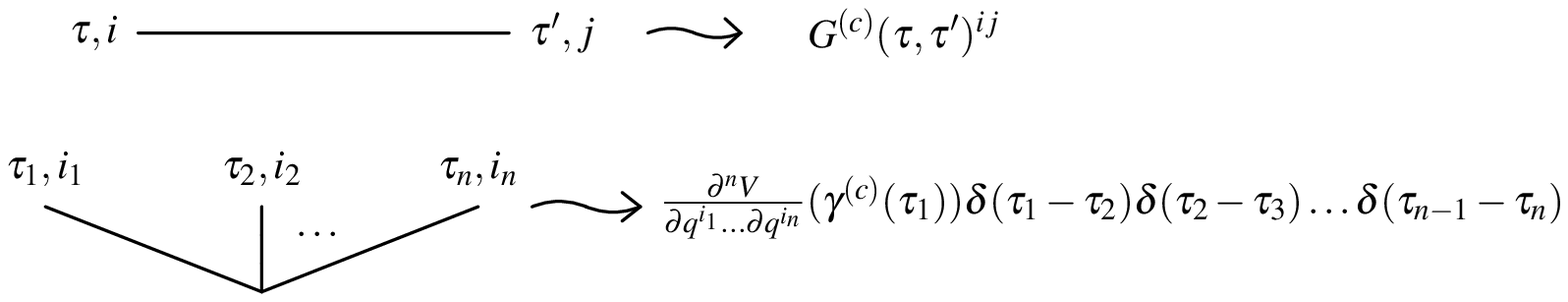}
\caption{Weights of Feynman diagrams in (\ref{qm-feyn})}
\label{QM-F-weights}
\end{figure}

One can show easily (see for example \cite{Takht}) that
\[
|{\det}'(K^{(c)})|=\left|\det\left(\frac{\p^2S(a,b)}{\pa a_i\pa b_j}\right)\right|^{-1},
\]
as well as that $\mu(\gamma^{(c)})$ can be identified with
$n_-(K^{(c)})$. This shows that the leading terms of (\ref{qm-feyn})
and  (\ref{s-cl-schr}) are the same. Now the question is whether the
two power series are the same.

We will state without proof the following theorem.

\begin{theorem} The expansion $Z_t(q_2,q_1)$ is equal to the asymptotic expansion of the kernel of the propagator and it satisfies the gluing property.
\end{theorem}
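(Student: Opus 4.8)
The plan is to establish the two assertions in the reverse of the order in which they are stated: the substantive step is to show directly that the formal series $Z_t(q_2,q_1)$ defined by (\ref{qm-feyn}) obeys the gluing (semigroup) identity $\int_N Z_t(q_3,q_2)\,Z_s(q_2,q_1)=Z_{s+t}(q_3,q_1)$, the integral being understood through the stationary-phase/Feynman-diagram rule (\ref{as-exp}). Once this is in hand, equality with the asymptotic expansion of the propagator kernel is a corollary: the text has already recorded that $Z_t$ and (\ref{s-cl-schr}) share the same leading term, because $|{\det}'(K^{(c)})|=|\det(\pa^2S/\pa a\,\pa b)|^{-1}$ and $\mu(\gamma_c)=n_-(K^{(c)})$; and, as was used for (\ref{s-cl-schr}) itself, the semigroup identity expanded order by order in $h$ determines all higher coefficients uniquely from the leading one. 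Two formal series with the same leading term, both satisfying the semigroup identity, must therefore coincide.

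For the gluing step I would reduce everything to finite-dimensional stationary phase in the intermediate configuration $q_2$. Fixing a pair of classical trajectories, $\gamma_{c''}$ on $[0,s]$ from $q_1$ to $q_2$ and $\gamma_{c'}$ on $[s,s+t]$ from $q_2$ to $q_3$, the $q_2$-integrand has phase $S^{(c')}_t(q_3,q_2)+S^{(c'')}_s(q_2,q_1)$, whose $q_2$-derivative, by the boundary-variation formula (\ref{b-var}), is the momentum of $\gamma_{c'}$ leaving $q_2$ minus the momentum of $\gamma_{c''}$ arriving at $q_2$. Hence the critical points of the $q_2$-integral are precisely the configurations at which the two pieces fit into a single classical trajectory $\gamma_c$ on $[0,s+t]$, and the critical value is $S^{(c)}_{s+t}(q_3,q_1)$; letting the pair $(\gamma_{c'},\gamma_{c''})$ and $q_2$ vary, this reproduces the sum over $\gamma_c$ on the right-hand side of (\ref{qm-feyn}). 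It then remains to match, at each such critical point, three pieces of bookkeeping: (i) the Gaussian $q_2$-integral of the product of the two WKB prefactors $|{\det}'(K^{(c'')})|^{-1/2}|{\det}'(K^{(c')})|^{-1/2}$ returns $|{\det}'(K^{(c)})|^{-1/2}$ --- the multiplicativity of the Van Vleck prefactor, which is a Forman/Burghelea--Friedlander--Kappeler-type gluing formula expressing ${\det}'(K^{(c)})$ on $[0,s+t]$ through the two Dirichlet determinants and the Hessian in $q_2$ at the junction, and is also packaged in the identity $|{\det}'(K^{(c)})|=|\det(\pa^2S/\pa a\,\pa b)|^{-1}$; (ii) additivity of the Morse data, $n_-(K^{(c)})=n_-(K^{(c'')})+n_-(K^{(c')})+(\text{number of negative eigenvalues of the }q_2\text{-Hessian})$, so the phases $\exp(-\frac{i\pi}{2}n_-)$ combine correctly; (iii) gluing of the diagram series, i.e.\ the Dirichlet Green's function of $K^{(c)}$ on $[0,s+t]$ is assembled from the Dirichlet Green's functions on the two sub-intervals together with the contribution of the junction value, so that the Wick combinatorics generated by the $q_2$-integration reorganizes the product of the two series $1+\sum_{\Ga\neq\emptyset}(ih)^{-\chi(\Ga)}F_{c''}(\Ga)/|\Aut(\Ga)|$ and $1+\sum_{\Ga\neq\emptyset}(ih)^{-\chi(\Ga)}F_{c'}(\Ga)/|\Aut(\Ga)|$, augmented by the extra $q_2$-vertices coming from Taylor-expanding the prefactors, into the single series for $\gamma_c$. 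Item (iii) is the concrete incarnation, in this model, of ``Fubini for formal Gaussian integrals.''

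The hard part is precisely this last round of bookkeeping at the junction, carried out uniformly in the order of $h$. The determinant gluing (i) is a genuine theorem about functional determinants that has to be invoked or proved, and it is not contained in the excerpt; and making (iii) airtight --- tracking where the junction propagator and the new $q_2$-vertices attach and checking the automorphism-factor combinatorics --- is the laborious core of the argument, even though at the level of formal, undefined path integrals the statement $\int_N Z_t(q_3,q_2)Z_s(q_2,q_1)=Z_{s+t}(q_3,q_1)$ is ``obvious'' from Fubini. A parallel route, requiring comparable effort, is to verify instead that $Z_t$ satisfies the Schr\"odinger equation (\ref{Sch-eqn}) coefficient by coefficient, i.e.\ that the data built from (\ref{qm-feyn}) obey the transport equations; combined with the matching leading term and the uniqueness of WKB solutions this again yields the first assertion, and then the gluing property follows because the genuine kernel $U_t$ satisfies (\ref{semigroup}).
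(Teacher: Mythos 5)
The paper states this theorem explicitly \emph{without proof} (``We will state without proof the following theorem''), deferring all details to the forthcoming paper of Johnson-Freyd \cite{JF} for $N=\RR^d$ with flat metric; so there is no argument in the text to compare yours against line by line. That said, your strategy is exactly the one the surrounding discussion points to: the leading-order match via $|{\det}'(K^{(c)})|=|\det(\pa^2S/\pa a\,\pa b)|^{-1}$ and $\mu(\gamma_c)=n_-(K^{(c)})$ is recorded in the text, and your closing uniqueness step (two formal series with the same leading term, both satisfying the semigroup identity, coincide) is precisely the remark after (\ref{gluing-QM}) that the gluing identity determines not only the higher coefficients but also the leading factor. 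Your identification of the three junction computations --- the Forman/Burghelea--Friedlander--Kappeler gluing of Dirichlet determinants, the additivity of the Morse data including the $q_2$-Hessian, and the reassembly of the Dirichlet Green's function on $[0,s+t]$ from the two sub-interval Green's functions plus the junction term --- is the correct anatomy of the proof, and your alternative route through the transport equations and (\ref{Sch-eqn}) is also viable and is how (\ref{s-cl-schr}) itself is derived in the text.

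The caveat is that what you have written is a proof \emph{plan}, not a proof, and you say so yourself: items (i)--(iii) are each nontrivial theorems that you invoke rather than establish. Item (i) requires either a proof of the determinant gluing formula for one-dimensional Schr\"odinger operators with Dirichlet conditions or a citation to it; item (ii) requires an argument that no eigenvalue crosses zero in the wrong way at the junction (this is where the relation between the Morse index of the concatenated trajectory and the conjugate-point count enters, and it can fail to be a naive sum if $q_2$ is conjugate to an endpoint); and item (iii) --- checking that the $q_2$-Wick contractions convert the product of two diagram series, together with the vertices produced by Taylor-expanding the prefactors and the actions beyond quadratic order in $q_2$, into the single series over diagrams for $\gamma_c$ with the correct $1/|\Aut(\Ga)|$ factors --- is an induction on diagram order that has to be carried out, not merely named. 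Since the paper also declines to carry these out, your proposal is an acceptable and well-targeted outline, but it should not be mistaken for a complete argument; the content of \cite{JF} is precisely the execution of your steps (i)--(iii).
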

The details will appear in a paper by T. Johnson-Freyd \cite{JF} when $N=\RR^d$ with flat metric.

As an immediate corollary to this theorem we have
\begin{corollary} The functions
\[
U_c^{(n)}(q_2,q_1))=\sum_{-\chi(\Ga)=n} \frac{F_c(\Ga)}{|\Aut(\Ga)|}
\]
are coefficients of the asymptotical expansion of the
propagator, and, after being properly normalized, satisfy the transport
equation. Here $\chi(\Ga)=|V|-|E|$ is the Euler characteristic.
\end{corollary}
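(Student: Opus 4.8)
The plan is to deduce the corollary directly from the theorem by comparing, term by term in powers of $h$, the two asymptotic expansions of the propagator kernel $U_t(q_2,q_1)$ that occur in this section.

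First I would invoke the theorem itself: it asserts that the formal series $Z_t(q_2,q_1)$ of (\ref{qm-feyn}) equals the asymptotic expansion of $U_t(q_2,q_1)$. On the other hand, solving the Schr{\"o}dinger equation (\ref{Sch-eqn}) as $h\to 0$ produces the WKB expansion (\ref{s-cl-schr}) of the same kernel. Since the asymptotic expansion of a function is unique, the two series must agree term by term, separately for each classical trajectory $\gamma_c$.

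Next I would match the leading (semiclassical) prefactors, using the two identities recorded just before the theorem: $|{\det}'(K^{(c)})|=|\det(\pa^2 S/\pa a_i\pa b_j)|^{-1}$, and the identification $\mu(\gamma^{(c)})=n_-(K^{(c)})$ of the Morse index with the number of negative eigenvalues of $K^{(c)}$. Substituting these into (\ref{qm-feyn}) shows that its leading term coincides with the leading term of (\ref{s-cl-schr}); in particular this pins down the constant $C$ together with the powers of $2\pi$ and of $i$. With the prefactors identified, equating the remaining factors $1+\sum_{\Gamma\neq\emptyset}(ih)^{-\chi(\Gamma)}F_c(\Gamma)/|\Aut(\Gamma)|$ and $1+\sum_{n\geq 1}h^nU_c^{(n)}$ and reading off the coefficient of $h^n$ yields
\[
U_c^{(n)}(q_2,q_1)=\sum_{-\chi(\Gamma)=n}\frac{F_c(\Gamma)}{|\Aut(\Gamma)|},
\]
up to the powers of $i$ supplied by $(ih)^{-\chi(\Gamma)}$, which I would absorb into the normalization of the weights $F_c$ (or else carry explicitly).

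Finally, for the transport equation: by construction the normalized coefficients $a_k^{(c)}=(2\pi i)^{-n/2}\bigl(\det(\pa^2 S/\pa a_i\pa b_j)\bigr)^{1/2}U_c^{(n)}$ are exactly the coefficients of the WKB expansion (\ref{s-cl-schr}), and those were shown to satisfy the transport equation written there. Having just identified these $a_k^{(c)}$ with the (normalized) diagrammatic sums, the corollary follows. I expect the only genuinely delicate point to be this prefactor bookkeeping --- making the phrase ``properly normalized'' precise by fixing $C$ and the powers of $2\pi$ and $i$ so that the diagrammatic $U_c^{(n)}$ occupies exactly the slot of the WKB coefficient; everything else is a term-by-term reading of the (unique) asymptotic expansion delivered by the theorem.
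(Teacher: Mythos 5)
Your proposal is correct and follows exactly the route the paper intends: the paper derives the corollary as an ``immediate'' consequence of the theorem identifying $Z_t(q_2,q_1)$ with the asymptotic expansion of the propagator, combined with the uniqueness of asymptotic expansions, the identities $|{\det}'(K^{(c)})|=|\det(\pa^2S/\pa a_i\pa b_j)|^{-1}$ and $\mu(\gamma_c)=n_-(K^{(c)})$ for matching prefactors, and the previously established transport equation for the normalized WKB coefficients $a_k^{(c)}$. You have simply spelled out the term-by-term matching that the paper leaves implicit.
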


Let us write the semiclassical propagator as
\[
Z_t(q_2,q_1)=\sum_c\exp\left(\frac{i}{h}S^{(c)}_t(q_2,q_1)\right)J^{(c)}_t(q_2,q_1).
\]

The semigroup property of the propagator implies that this
power series satisfies the following gluing/cutting identity:
\begin{multline}\label{gluing-QM}
\exp\left(\frac{i}{h}S^{(c)}_t(q_3,q_1)\right)J^{(c)}_t(q_3,q_1)=\\
\int_{q_2\in N}
\exp\left(\frac{i}{h}\left(S^{(c)}_s(q_3,q_2)+S^{(c)}_{t-s}(q_2,q_1)\right)
\right)J^{(c)}_s(q_3,q_2)J^{(c)}_{t-s}(q_2,q_1).
\end{multline}
Here the integral is taken in a sense of the semiclassical expansion as the sum of corresponding Feynman diagrams.
It is easy to check that the identity (\ref{gluing-QM})
determines uniquely not only the higher order coefficients
but also the leading order factor.

\subsection{$d>1$ and ultraviolet divergencies}
\index{ultraviolet divergencies} In the semiclassical theory of
scalar Bose field on a compact Riemannian manifold the partition
function for the theory is given by the formal power series
\begin{multline}\label{B-f-diag}
Z_M(b)=C\sum_{\phi_c}\exp\left(\frac{iS_M(\phi_c)}{h}-\frac{i\pi}{2}
n_-(K_{\phi_c})\right)|{\det}'(K_{\phi_c})|^{-\frac{1}{2}}\\
\left(1 +\sum_{\Ga\neq \emptyset} (ih)^{-\chi(\Ga)}\frac{F_{\phi_c}(\Ga)}{|\operatorname{Aut}(\Ga)|}\right).
\end{multline}
Here we assume that there are finitely many solutions $\phi_c$ to
the Euler-Lagrange equation (\ref{EUB}) with the Dirichlet boundary
conditions $\phi_{c|_{\pa M}}=b$. The number $n_-(K_{\phi_c})$ denotes the number of
negative eigenvalues of the differential operator
\[
K_{\phi_c}=\Delta+ V''(\phi_c(x))
\]
acting on the space of functions on $M$ with the boundary condition $f(x)=0, x\in \pa M$,
and $\det '(K_{\phi_c})$ is its regularized determinant ($-\frac \pi 2 n_-(K_{\phi_c})$ is
the phase of the square root of the determinant).
The $\zeta$-function regularization is one of the standard ways to define $\det'$ (see for example \cite{AS}).
The weights of Feynman diagrams are given in Fig. \ref{B-Feyn-diag}, where $G^{(c)}(x,y)$ is the kernel of the integral operator
which is inverse to $K_{\phi_c}$.

\begin{figure}[htb]
\sidecaption
\includegraphics[height=2cm,width=8cm]{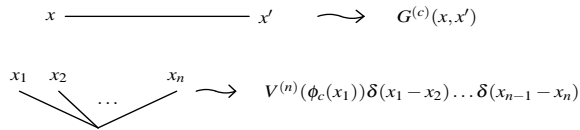}
\caption{Weights of Feynman diagrams in (\ref{B-f-diag})}
\label{B-Feyn-diag}
\end{figure}

An example of an order $1$ Feynman diagram is given in Fig. \ref{B-ord-1}.

\begin{figure}[htb]
\includegraphics[height=1.5cm,width=8cm]{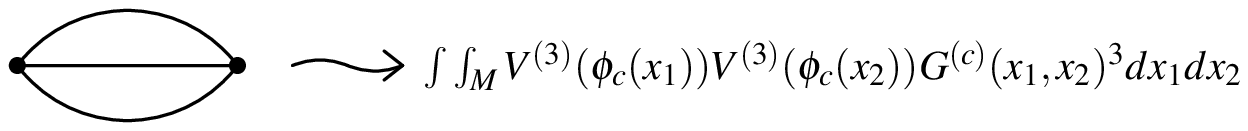}
\caption{An example of the Feynman diagram of order one}
\label{B-ord-1}
\end{figure}

The kernel $G(x,y)$ behaves at the diagonal as
\[
G(x,y)\sim |x-y|^{2-d}\, ,
\]
which means that the Feynman integrals converge
for $d=1$ (quantum mechanics), diverge logarithmically  for
$d=2$, and diverge as a power of the distance for $d>2$.

This is a well known problem of ultraviolet divergencies in the
perturbation theory. The usual way to deal with divergencies is
a two step procedure.

{\bf Step 1}. The theory is replaced by a family of theories
where the Feynman integrals converge (regularized theories).
There are several standard ways to do this:
\begin{itemize}
\item {\it Higher derivative regularization} replaces the theory
with one where the quadratic part of the action
has terms with higher derivatives. In the regularized theory
the propagator $G(x,y)$ is not singular at the diagonal.
For more details see \cite{IZ}.

\item {\it Lattice regularization} replaces the theory on
a smooth Riemannian manifold $M$ by a metrized cell
approximation of $M$. The path integral becomes finite-dimensional and Feynman diagrams describing
the semiclassical expansion become finite sums.

\item {\it Dimensional regularization} is more exotic. It
replaces Feynman $d$-dimensional integrals, where
$d$ is an integer, by formal expressions, where $d$ is
not an integer. It is very convenient computationally
for certain tasks (see for example \cite{CK} and
references therein).

\end{itemize}

{\bf Step 2.} After the theory is replaced by a family
of theories where Feynman integrals converge, one should
compute them and pass to the limit corresponding to
the original theory. Of course the limit will not exist
since some terms will have singularities. In some cases
it is possible to make the parameters in the regularized theory (for
example, coefficients in $V(\phi)$) depend on the parameters of the
regularization in such a way that the sum of Feynman diagrams of
order up to $n$ remains finite when the regularization is removed.
Such theories are called {\it renormalizable in orders up to} $n$.

The compatibility of the gluing/cutting axiom, i.e. an analog of the
identity (\ref{gluing-QM}), and the renormalization is, basically, an
open problem for $d>1$, which requires further investigation. Notice
that for $d=2$ the integration over the boundary fields does not
introduce Feynman diagrams with ultraviolet divergencies, but these
diagrams will diverge for $d>2$. This problem was addressed in the case
of Minkowski flat space-time by K. Symanzik in \cite{Sy}.

\section{The Yang-Mills theory}
\index{Yang-Mills} The classical Yang-Mills theory with Dirichlet
boundary conditions was described in Section \ref{cYM}.

In this section we will define Feynman diagrams for
the Yang-Mills theory following the analogy with the
finite-dimensional case. In these notes we will do it ''half-way``,
leaving the most important part concerned with the ultraviolet
divergencies aside.

Naively, the path integral quantization of the classical $d$-dimensional Yang-Mills theory can be constructed as
follows. Let $G$ be a compact Lie group.
\begin{itemize}
\item To a closed oriented $(d-1)$-dimensional Riemannian manifold
with a principal $G$-bundle $P$ we assign the space of functionals on the space of connections on $P$.
\item To a $d$-dimensional Riemannian manifold $M$ with a
principal $G$-bundle on it, we define the functional $Z$
on the space of connections on $P|_{\pa M}$ as
\[
Z_M(b)=\int_{i^*(A)=b} \exp\left(\frac{i}{h}S_{YM}(A)\right) DA.
\]
\end{itemize}
where $i:\pa M \hookrightarrow M$ is the tautological inclusion
of the boundary and $i^*(A)$ is the pullback of the connection $A$ to the boundary.

Now we can use the Faddeev-Popov Feynman diagrams to {\it define }
the semiclassical expansion of this integral. In the finite-dimensional case, Feynman diagrams were derived as an asymptotic
expansion of the existing integral. To define such expansion, we
should do the gauge fixing and then define the Feynman rules. The Feynman diagrams for the Yang-Mills are
divergent because the propagator is singular at the diagonal
(ultraviolet divergence). Nevertheless, the theory is
{\it renormalizable}, as in the previous example, even better, it is
{\it asymptotically free} \cite{Gr}.
We will not go into the details of the discussion of renormalization
but will make a few remarks at the end of this section.
For more details about quantum Yang-Mills theory and Feynamn diagrams see for example \cite{FS}.
\subsection{ The gauge fixing}
\index{gauge fixing} As we have seen in the finite-dimensional case,
the constraint (gauge fixing) needed to construct the asymptotic
expansion of the integral (\ref{FP}) has to be a cross-section
through the orbits only in the vicinity of critical points (critical
orbits) of the action functional. To define Feynman diagrams for the
Yang-Mills theory, we can follow the same logic. In particular, we
can choose a linear Lorenz gauge condition for connections in the
vicinity of a classical solution $A$.

For a connection $A+\alpha$, where $\alpha$ is a $1$-form (quantum
fluctuation around $A$), the Lorenz gauge \index{Lorenz gauge}
condition is
\begin{equation}\label{gauge}
d_A*\alpha=0,
\end{equation}
where $*$ is the Hodge operation. This condition defines a subspace
in the linear space of $\g$-valued $1$-forms, so we can use the
formula (\ref{red-FP}) which uses no Lagrange multipliers. The
contribution to the path integral from a vicinity of $A$ is then an
''integral`` over the space of $1$-forms $\alpha$ from $Ker(d^*_A)$.

\subsection{ The Faddeev-Popov action and Feynman diagrams}
\index{Faddeev-Popov action} \index{Feynman diagrams} Following
the analogy with the finite-dimensional case define the
Faddeev-Popov action for pure Yang-Mills theory as the following
action with fields $\alpha(x), \bc(x), c(x)$:
\begin{multline}\label{FPYM}
S_A(\alpha)=S_{YM}(A)+\int_M\frac{1}{2}\operatorname{tr}\left< F_A(\alpha),
F_A(\alpha)\right> dx\\- \frac{ih}{2} \int_M *d_A\bc\wedge d_A c-
\frac{ih}{2} \int_M *d_A\bc\wedge [\alpha,c].
\end{multline}
Here $A$ is a background connection which is the
solution to the classical Yang-Mills equations and
$\alpha $ is a $\g$-valued  $1$-form on $M$. The
bosonic part of this action is simply $S_{YM}(A+\alpha)$.

The quadratic part in $\alpha$ and the quadratic part in $\bc, c$ of
the action (\ref{FPYM}) are given by the differential operator
$d_A^*d_A$ which is invertible on the space $Ker(d_A^*)$ with
Dirichlet boundary conditions. Other terms define the weights of
Feynman diagrams. The weights are shown in Fig. \ref{red-YM-FP}.
The functions $G_1^A$ and $G_0^A$ are Green's functions of the
Laplace-Beltrami operator $\Delta=d^*d+dd^*$ on $1$- and $0$-forms
respectively.

\begin{figure}[htb]
\includegraphics[height=5cm,width=9cm]{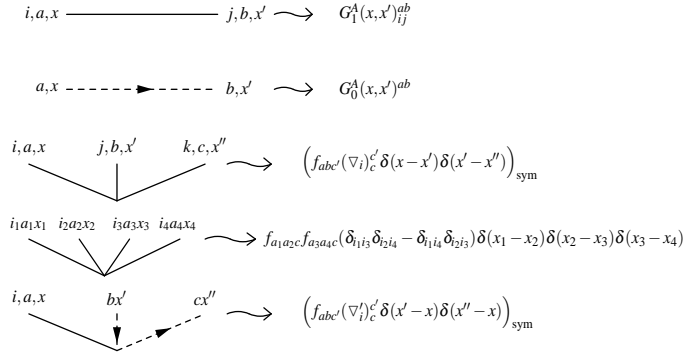}
\caption{Weights of Feynman diagrams in the semiclassical expansion for the Yang-Mills theory}
\label{red-YM-FP}
\end{figure}

\subsection{The renormalization}
The propagator in the Yang-Mills theory is singular at the diagonal
for $d>1$, and just as in the scalar Bose field contributions from
Feynman diagrams to the partition function diverge. However, just as
in the scalar Bose field, when $d\leq 4$, after the renormalization
procedure Feynman diagrams become finite and there is a well defined
semiclassical formal power series for the Yang-Mills theory given by
renormalized diagrams. This fact was discovered by t'Hooft and Veltman \cite{tH} who showed that
using the dimensional regularization of Feynman diagrams and
taking into account Faddeev-Popov ghost fields makes
Yang-Mills into a renormalizable theory.

Moreover, the renormalization in the Yang-Mills theory is remarkable because it gives an asymptotically free theory.
This was discovered in \cite{Gr} and it means that particles in such theory have to behave as non-interacting, free particles at high energies. This {\it prediction} perfectly agrees with {\it experimental data} and this is why the Yang-Mills theory is part of the
Standard Model, unifying the theory of strong, weak, and electromagnetic interactions.

The super-symmetric $N=4$ Yang-Mills theory is expected to have a
particularly remarkable renormalization. It turns out that the divergent
contributions from Feynman diagrams cancel each other in each order
of the expansion in $h$. This was proven in the light-cone gauge
and is believed to be true for other gauges. This Yang-Mills theory
is particularly important for Topological Quantum Field Theories
\cite{KW}\cite{Samson} and in particular to the quantum field
theoretical interpretation of the geometric {\it Langlands program}.

Finally, few words on {\it correlation functions}. Since the
Yang-Mills theory is gauge invariant, natural observables should
also be gauge invariant. Such observables are known as {\it Wilson
loops} \index{Wilson loop}  or, more generally, as {\it Wilson
graphs}.

Recall the definition of Wilson loops.\index{Wilson loop} Let $V$ be
a finite-dimensional representation of a Lie group $G$. The
Yang-Mills potential $A$ (the field in the Yang-Mills theory) is a
connection in a principal $G$-bundle $P$. It induces a connection in
the vector bundle $V_P=P\times_G V$. Let
\begin{equation}\label{W-hol}
h_A(C_x)=P\exp\left(\int_{C_x}A\right)
\end{equation}
be the parallel transport in $V_P$ along a path $C_x$
which starts and
ends at $x\in M$ defined by the connection $A$. Here $P$ stands for the iterated path ordered integral.

The Wilson loop observable is
\begin{equation}\label{W-loop}
W_A^V(C)=\operatorname{Tr}_{V_x}(h_A(C_x)).
\end{equation}
Here the trace is taken over the fiber $V_x$ of $V_P$
over $x\in M$.
The definition of more general gauge invariant
observables, Wilson graphs, will be given later, when
we will discuss observables in the Chern-Simons theory.

An important conjecture about the Yang-Mills theory, and
another fundamental fact expected from this theory, is the
{\it dynamical mass generation}. In terms of expectation values of Wilson
loops, this conjecture means that
\begin{equation}\label{W-asym}
\left< W_A(C)\right> \propto \exp(-ml(C)),
\end{equation}
as $l(C)\to \infty$. Here on the left side we have the expectation value of
the Wilson loop and on the right side $l(C)$ is the length of $C$ in the Riemannian metric on $M$. This conjecture is based on the conjecture that the Yang-Mills theory can be defined non-perturbatively.

The parameter $m$ in (\ref{W-asym}) is characterizing the radius
of correlation. In a massless theory, such as Yang-Mills theory,
there are no reasons to expect that $m\neq 0$.
The appearance of such a parameter with the scaling dimension
of the mass is known as dynamical mass generation.
For more details about this conjecture see \cite{JW}.

\section{The Chern-Simons theory}
\index{Chern-Simons theory} In this section $M$ is a compact
oriented manifold. The classical Chern-Simons theory with a compact
simple Lie group $G$ was described in Section \ref{cCS}. As in the
pure Yang-Mills theory, fields in the Chern-Simons theory are
connections in a principal $G$-bundle over the space-time $M$. In
contrast with the Yang-Mills theory, the Chern-Simons action is the
first order action. One of the implications of this is the
difference in Hamiltonian formulations. The other is that the path
integral quantization for the Chern-Simons theory for manifolds with
boundary is more involved. Some of the aspects of this theory on
manifolds with boundaries can be found in references \cite{Fr},
\cite{ADPW}.

From now on we assume that the space-time $M$ is a  compact, oriented, and closed 3-manifold.
The Chern-Simons action is topological, i.e. its definition does not require a choice of metric on $M$.
This is why it is natural
to expect that the result of quantization, the partition function
$Z(M)$, also depends only on the homeomorphism class of $M$. This gives a powerful criterium for consistency of the definition of Feynman diagrams: the sum of Feynman diagrams for any given order should depend only on the
topology of the manifold.

The path integral formulation of the Chern-Simons theory on manifolds with a boundary is a bit more involved then the
one for the Yang-Mills. This is because the Chern-Simons is a first order theory. The space of states assigned to the boundary is the space of holomorphic sections of the geometric quantization line bundle over the moduli space of flat connections in a trivial principal $G$-bundle over the boundary (provided we made a choice of complex structure).  This space is a quantum
counterpart to the boundary conditions for the Chern-Simons theory when the pull-back to the boundary is required to be holomorphic.
For more details on the quantization of the Chern-Simons theory on
manifolds with boundary see for example \cite{ADPW}.

So, the goal of this section is to make sense of the expression
\begin{equation}\label{CS-pathint}
Z_{M}=\int \exp({ikCS(A)}) DA,
\end{equation}
or, more generally, of
\begin{equation}\label{CS-Wloop}
Z_{M,\Ga}=\int \exp({ikCS(A)}) W_\Ga(A) DA,
\end{equation}
where $W_\Ga(A)$ is a gauge invariant functional
(Wilson graph, or any other gauge invariant
functional) which will be defined later,  and $k$ is an integer which guarantees  that the exponent is gauge invariant. The integral is supposed to be taken
over the space of all connections on a principal $G$-bundle
on $M$.

The integrals (\ref{CS-pathint}),(\ref{CS-Wloop}) are not defined
as mathematical objects. However, one can
try (as in the previous examples of the scalar Bose field and of the
Yang-Mills theory) to define a combination of formal power series in $k^{-1}$ resembling the expansion of finite-dimensional
integrals studied in the previous section. In the case of the Chern-Simons theory, there is a natural requirement
for such expansion: every term should be an invariant of $3$-manifolds.
Remarkably, such a power series exists and is more or
less unique. This program was originated by Witten in \cite{W} who
outlined the basic structure of the expansion. It was developed in a
number of subsequent works, in particular, in
\cite{Kon}, \cite{AS1}, \cite{AS2}, \cite{BC1}, \cite{BC2}, \cite{C} for the
partition function for closed 3-manifolds and in
\cite{DBN}, \cite{GMM}, \cite{AF} for (\ref{CS-Wloop}), and others, when
$\Ga$ is a link.

\subsection{The gauge fixing}
\index{gauge fixing} Let us use the same gauge fixing as in the
Yang-Mills theory. For this we need to choose a metric on $M$.

Since classical solutions in the Chern-Simons theory are flat
connections, the covariant derivative $d_A=d+ A$ is a differential,
i.e. $d_A^2=0$ (twisted de Rham differential) acting on $\g$-valued
forms on $M$ \footnote{Because a principal $G$-bundle over any
compact oriented 3d-manifold is trivializable, we choose a
trivialization and identify $\Omega(M, ad(E))$ with $\Omega(M,\g)$.}.
Denote the cohomology spaces by $H^i_A$. Because of the Poincar{\'e}
duality we have natural isomorphisms $H^0_A\simeq H^3_A$ and
$H^1_A\simeq H^2_A$.

In a neighborhood of a classical solution $A$, connections
can be written as $A+\alpha$ where $\alpha$ is a $\g$-valued $1$-form on $M$.  As in the Yang-Mills theory, the {\it Lorenz gauge condition} for such connections is:
\[
d_A^*\alpha=0.
\]
We will use this gauge condition in the rest of the paper.

\subsection{The Faddeev-Popov action in the Chern-Simons theory}
According to our finite-dimensional example we should add fermionic
ghost fields $c(x)$ and $\bc(x)$ and the Lagrange multipliers
$\lambda(x)$ to the action, if we want to define Feynman diagrams in
this gauge theory. However, as we argued in Section
\ref{FP-gost-restr}, the gauge condition can be chosen linearly near
each critical point of the action, and therefore when we 
use Lorenz gauge condition we can use the
version without Lagrange multipliers. In this case we just have to
add fermionic ghost fields to the action.

According to the rules described in Section \ref{FP-gost-restr}, the Faddeev-Popov action for the Chern-Simons theory is the sum of the classical Chern-Simons action and the ghost
terms which are identical to those for the Yang-Mills theory:
\begin{multline}\label{FPCS}
CS_A(\alpha)=CS(A)+\int_M\frac{1}{2}tr\left(\alpha\wedge d_A\alpha-\frac{2}{3} \alpha\wedge \alpha\wedge \alpha\right)\\-
\frac{ih}{2} \int_M *d_A\bc\wedge d_A c-
\frac{ih}{2} \int_M *d_A\bc\wedge [\alpha,c],
\end{multline}
where $h$ stands for $\frac{1}{k}$, and $d_A^*\alpha=0$. We will focus in the discussion below mostly on the special case of
isolated flat connections, when $H^1_A=\{0\}$.
Quite remarkably \cite{AS1}, the field $\alpha$ and the ghost fields in the Chern-Simons theory can be  combined
into one odd ''super-field``:
\[
\Psi=c+\alpha+ ih *d_A\bar{c}.
\]
Here $c$, $\alpha$, and $*d_A\bar{c}$ are $0$, $1$, and $2$ forms
respectively. The action (\ref{FPCS}) can be written entirely in
terms of $\Psi$ \footnote{This form of the Faddeev-Popov action for the Chern-Simons theory has a simple explanation in the framework of the Batalin-Vilkovisky formalism, see for example \cite{CM}.
However, we will not discuss it in these notes.}:
\[
CS_A(\alpha)=CS(A)+\frac{1}{2}\int_M \tr \left(\Psi\wedge d_A\Psi -\frac{2}{3} \Psi\wedge\Psi\wedge\Psi\right).
\]
The quadratic part of the action is the de Rham differential
twisted by the flat connection $A$.

If $H_A^2(M, \g)=\{0\}$ (equivalently, $H^1_A=\{0\}$) the gauge condition
$d^*_A\alpha=0$ together with the special
form of the last term in $\Psi$,  is
equivalent to  $d^*_A\Psi=0$. The inverse is also true:
$d^*_A\Psi=0$ implies $d^*_A\alpha=0$ together with
$\Psi^{(2)}$ being the Hodge dual to an exact form.

The quadratic part of this action is
$(\Psi, *d_A\Psi)$ where
\[
(\Phi, \Psi)=\int_M \tr (\Phi\wedge *\Psi).
\]
The surface of the constraint $d_A\Psi=0$ is the super-space
$\Omega^0(M,\g)\oplus \Ker((d_A^*)^*_0)\oplus \Im((*d_A)_0)\subset
\Omega(M,\g)[1]$ where the first and the third summands are
odd and the second is even. The operator $D_A=*d_A+d_A*$ restricted
to this subspace describes the quadratic part of the action.
Indeed, we have
\[
\int_M \tr ( \Psi\wedge d_A\Psi)=\frac{1}{2}(\Psi,
(*d_A+d_A*)\Psi)\,.
\]
The operator $D_A$ maps even forms to even and odd form to odd,
$D_A: \Omega^i\to \Omega^{2-i}\oplus \Omega^{4-i}$. It plays a prominent place in index theory
\cite{AS}. It can be considered as a {\it Dirac operator}
in a sense that
\[
D_A^2=\Delta_A=d_A^*d_A+d_A d_A^*,
\]
where $\Delta_A$ is {\it Hodge Laplace operator}. The operator $D_A$ effectively appears in the quadratic part being restricted to odd forms. This operator will be denoted by
\[D_A^-: \begin{cases} \Omega^1\to \Omega^1\oplus \Omega^3 \\
    \Omega^3\to \Omega^1
         \end{cases}.
\]

Now the question is whether the operator $D_A^-$
is invertible on the surface of the constraint.
In other words, whether the Lorenz gauge is really a cross-section through gauge orbits.

\subsubsection{The propagator} \index{propagator} First, assume that the complex $(\Omega^i(M,\g), d_A)$
is {\it acyclic}, i.e. $H^i(M,\g)=\{0\}$ for all $i=0,1,2,3$ (by
Poincar{\'e} duality $H^i\simeq H^{3-i}$, so it is enough to assume
the vanishing of $H^0$ and $H^1$). In this case, the representation
of $\pi_1(M)$ in $G$ defined by holonomies of a flat connection $A$
is irreducible (implied by $H^0=\{0\}$) and isolated (implied by
$H^1=\{0\}$).

Since
the spaces $H^i$ can be naturally identified with harmonic
forms and therefore with zero eigenspaces of Laplace operators, in this case all Laplace operators are
invertible and so is $D_A$.
Denote by $G_A$ the inverse to $\Delta_A$, i.e. the Green's function, then
\[
P_A={(D_A^-})^{-1}=D_A^- G_A= G_A D_A^-\, .
\]
Thus, in this case the quadratic part is non-degenerate and we can
write contributions from Feynman diagrams as multiple integrals of
the kernel of the integral operator $({D_A^-})^{-1}$. The analysis
of the contributions of Feynman diagrams to the partition function
was studied in this case by Axelrod and Singer in
\cite{AS1}, \cite{AS2}, and by Kontsevich \cite{Kon}.

Another important special case arises when the flat connection is reducible but still isolated. For example, a trivial
connection for rational homology spheres \cite{BC1}, \cite{BC2}, \cite{C} has such property. In this case, we still have $H^1=H^2=\{0\}$
and the Lorenz gauge for $\alpha$ together with the exactness of $*\Psi^{(2)}$ is still equivalent to the Lorenz gauge for
$\Psi$, i.e. $d^*\Psi=0$.
However now there are harmonic forms in $\Omega^0(M)$
and $\Omega^3(M)$ corresponding to the fundamental class of $M$ and because of this, $D^-_A$ is not invertible on the space of all forms.

Nevertheless, in this case (and in a more general case when $H^1\neq \{0\}$) one can construct an operator which is ''almost inverse`` to $D_A^-$. Such an operator is determined by the
chain homotopy $K: \Omega^i\to \Omega^{i-1}$ and the Hodge decomposition of $\Omega$. For  details about such operator $P$ see \cite{AS1}, \cite{AS2}, \cite{BC1}, \cite{BC2}, \cite{C} and Section 3.2 of \cite{CM}.

An important example of a rational homology sphere is $S^3$ itself.
In this case, the inverse to $D^-$ for trivial a connection can be constructed explicitly by puncturing of $S^3$ at one point (the infinity). The punctured $S^3$ is homeomorphic to $\RR^3$ where the fundamental class is vanishing and
$D^-$ is invertible. The restriction of $(D^-)^{-1}$ to 1-forms is the integral operator with the kernel
\begin{equation}\label{flat-prop}
\omega(x,y)=\frac{1}{8\pi}\sum_{ijk=1}^3\epsilon^{ijk}\frac{(x-y)^i dx^i\wedge dy^k}{|x-y|^3}I,
\end{equation}
where $\epsilon^{ijk}$ is the totally antisymmetric tensor with $\epsilon^{123}=1$, and $I$ is the identity in $\operatorname{End}(V)$. It acts on the form $\sum_i \alpha_i(x)dx^i$ as
\begin{equation}\label{prop-flat}
\omega\circ\alpha(x)=\frac{1}{8\pi}\sum_{ijk=1}^3 \varepsilon^{ijk}\int_{\RR^3}\frac{(x-y)^i }{|x-y|^3}\alpha_j(x)d^3y dx^k\,.
\end{equation}

In all cases, the propagator $P_A$ is defined as the restriction of the ''inverse`` to $D^-_A$  (a chain homotopy, when $D^-$ is not invertible)
to one-forms. It is an integral operator with the kernel
being an element of the skew-symmetric part of $\Omega^2(M\times M, \g\times \g)$. If $e_a$ is an orthonormal basis in $\g$ and $x^i$
are local coordinates, we have
\[
P_A(x,y)=P^{ab}_A(x,y)_{ij} e_a\otimes e_b dx^i\wedge dy^j, \ \
 \ \  P^{ab}_A(x,y)_{ij}=P^{ba}_A(y,x)_{ji}\, .
\]

\subsection{Vacuum Feynman diagrams and invariants of $3$-manifolds}
\index{invariants of $3$-manifolds}

\subsubsection{Feynman diagrams} As in other examples of quantum field theories such as the
scalar Bose field and the Yang-Mills field we want to
define the semiclassical expansion of the partition function and
of correlation functions imitating the semiclassical expansion of
finite-dimensional integrals.

Following this strategy and the computations of the
Faddeev-Popov action for the Chern-Simons theory in Lorenz gauge
presented above, it is natural to define the partition function $Z(M)$ (the ''integral`` (\ref{CS-pathint})) for the Chern-Simons theory as the following combination of formal power series
\begin{equation}\label{CS-pert-series}
\sum_{[A]}\exp\left(i\frac{CS_M(A)}{h}+\frac{i\pi}{4}\eta(A)\right)
|{\det}'(D_A^-)|^{-1/2}{\det}'(\Delta_A^0) \left(1+\sum_{n\geq
1}(ih)^nI^{(n)}_A(M,g)\right),
\end{equation}
where $h$ stands for $k^{-1}$,  $\det'$ are regularized determinants
of corresponding differential operators, $\Delta_A^0$ is the
Laplace-Beltrami operator acting on $C^\infty(M,\mathfrak{g})$,
$D_A^-$ is the operator $D_A$ acting on odd forms, and $\eta(A)$ is
the index of the operator $D_A^-$. The sum is taken over gauge
classes of flat connections on $M$ (we assume that there is a finite
number of such isolated flat connections). The $n$-th order
contribution is given by the sum of Feynman diagrams
\begin{equation}\label{CS-ferm}
I^{(n)}_A(M,g)=\sum_{\Ga, -\chi(\Ga)=n} \frac{I_A(D(\Ga),M,g)(-1)^{c(D(\Ga))}}{|\operatorname{Aut}(\Ga)|}.
\end{equation}
In the Chern-Simons case, these are graphs with $2n$ vertices (each of them being $3$-valent). The contribution $I_A(D(\Ga),M,g)$ is an appropriate trace
of the integral over $M^m$ of the product of propagators. In other words this is the contribution from the Feynman diagram $D(\Ga)$
with weights from Fig. \ref{CS-F-diag}.\footnote{ The weights in Feynman diagrams for the Chern-Simons theory are the same
as we would have without the ghost fields (without the Faddeev-Popov
determinant). For the Chern-Simons theory the
ghost fields change the bosonic Feynman diagrams (which
we would have in the naive perturbation theory) to the
fermionic one (with the sign $(-1)^{c(D(\Ga))}$). It happens because $\Psi$ is an odd field and therefore the Feynman diagrams have fermionic nature.
The orientation of graphs used in \cite{Kon} is another way to
encode the fermionic nature of Feynman diagrams for the
Chern-Simons theory.

With the fermionic sign the sum of Feynman diagrams is finite in
each order \cite{AS1}. Without this sign the sum would diverge
because of the singularity of the propagator at the diagonal. It is
similar to the effect of ghost fields in the Yang-Mills theory.
Without ghost fields the Yang-Mills theory is not renormalizable.
With ghost fields, as it was shown by t'Hooft it becomes
renormalizable.}\label{foot_feyn}

\begin{figure}[htb]
\sidecaption[t]
\includegraphics[height=3cm,width=6cm]{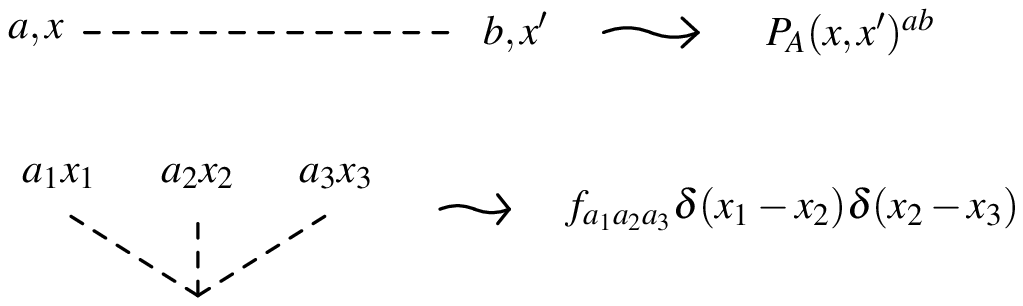}
\caption{Weights in Feynman diagrams for the Chern-Simons theory,
i.e propagators and vertices for the $\Psi$-field}
\label{CS-F-diag}
\end{figure}

Because in this case we have only $3$-valent vertices,
only two first order diagrams Fig. \ref{ord-1} will survive.
Among these two, only the ''theta graph`` will give a non-zero
contribution due to the skew-symmetry of the propagator.
The contribution from the theta graph is
\begin{equation}\label{CS-1-ord}
\int_M\int_M \sum_{\{a\}, \{b\}} f_{a_1a_2a_3}f_{b_1b_2b_3}
P^{a_1b_1}(x,y)P^{a_2b_2}(x,y)P^{a_3b_3}(x,y) dxdy.
\end{equation}

According to what we expect from the heuristic formula (\ref{CS-pathint}),
the expression  (\ref{CS-pert-series}) should depend only on
the homeomorphism class of $M$ and should not depend on the choice of the metric (gauge condition). But first of all, we should make
sure that every term in this series is defined. The problem
is that individual integrals in the definition of $I^{(n)}$
diverge.

As noticed in Footnote 12,
a remarkable property of
Feynman diagrams in the Chern-Simons case is that the sum of Feynman
diagrams of any given order is finite. It is relatively easy to see
that (\ref{CS-1-ord}) is finite because of the skew-symmetry of the
propagator and because it is asymptotically equivalent to
(\ref{prop-flat}) near the diagonal (i.e. when $x\to y$). The
finiteness of the sum of Feynman diagrams in each order was proven
in all orders by Axelrod and Singer in \cite{AS1}\cite{AS2} for
acyclic connections, and by Kontsevich in\cite{Kon} for trivial
connections in rational homology spheres. This illustrates that the
Chern-Simons theory is very different from the Yang-Mills theory
where the renormalization procedure is necessary.

\subsubsection{Metric independence} Now let us focus on the metric dependence
of (\ref{CS-pert-series}). Because we expect the quantum field
theory to be topological, the leading terms and each coefficient
in the expansion in the powers of $h$ should not depend on the metric.
First, assume that $A$ is an isolated irreducible flat connection.

The most singular term in the exponent is $CS_M(A)$ which
is clearly metric independent. Taking into account that $\Delta=
D^2$ and the natural isomorphism $\Omega^3\simeq \Omega^0$, the absolute value of the determinant of $D^-_A$ can be written
as
\[
|{\det}'(D_A^-)|={\det'}(\Delta^1_A)^{\frac{1}{2}}{\det'}(\Delta^3_A)^{\frac{1}{2}}=
{\det'}(\Delta^1_A)^{\frac{1}{2}}{\det'}(\Delta^0_A)^{\frac{1}{2}}\,,
\]
where $\Delta^i_A$ is the action of the Laplacian twisted by $A$ on $i$-forms. Using this identity we can rearrange the determinants as
\[
|{\det'}(D_A)|^{-1/2}\det'(\Delta_A^0)=\frac{{\det'}(\Delta_A^0)^{\frac{3}{4}}}
{{\det'}(\Delta_A^1)^{\frac{1}{4}}}\,.
\]
This expression is exactly the square root of the {\it Ray-Singer
analytical torsion} \cite{RS}, which is also the {\it Reidemeister
torsion}, and is known to be independent of the metric \footnote{When
$A$ is an isolated irreducible flat connection, the Ray-Singer
torsion is defined as the positive number $\tau(M,A)$ such that
$$
\tau(M,A)=\prod_{i\geq 1} {\det'}(\Delta^i_A)^{i(-1)^{i+1}/2}\, .
$$
Here and in the main text ${\det'}(\Delta)$ is the zeta function regularization of the determinant: ${\det'}(\Delta)=\exp(-\zeta_\Delta'(0))$,  where
\[
\zeta(s)=\Tr(\Delta^{-s})=\frac{1}{\Gamma(s)} \int_0^\infty t^{s-1} \Tr(e^{t\Delta})dt,
\]
Taking into account that for Riemannian manifolds
we have natural isomorphisms $\Omega^0\simeq \Omega^3$ and
$\Omega^1\simeq \Omega^2$ we obtain
$$
\tau(M,A)={\det}'(\Delta_A^0)^{\frac{3}{2}}
{\det}'(\Delta_A^1)^{-\frac{1}{2}}.
$$
When the $H^i$ are not all zeroes, $\tau(M,A)^{\frac{1}{2}}$ can be
regarded as a volume element on ''zero`` modes, i.e. on the space
$H^0\oplus H^1$.}.

The exponent $\frac{i\pi}{4}\eta(A)$, which involves the index of $D_A^-$
can be written as
\begin{multline}
\frac{2\pi\eta(A)}{8}=d\frac{2\pi\eta(g,M)}{4}+c_2(G)CS(A)-\frac{2\pi}{4}I_A-
\frac{d\pi(1+b^1(M))}{4}\\
+\frac{2\pi(\dim(H^0)+\dim(H^1))}{8} \ \ (mod \  2),
\end{multline}
Here $\eta(g,M)$ is the {\it index} of the operator $D=*d+d*$ acting on odd forms on $M$, $d=\dim(G)$,  $c_2(G)$ is the value of the
{\it Casimir element} for $\mathfrak{g}=Lie(G)$ on the adjoint
representation (also known as the dual {\it Coxeter number} $h^\vee$ for the
appropriate normalization of the {\it Killing form} on $\mathfrak{g}$),
and $b^1(M)$ is the first {\it Betti number} for $M$. The quantity $I_A\in \ZZ/8\ZZ$ is the {\it spectral flow} of the operator
\[
\left(\begin{array}{cc} *d_{A_t}& -d_{A_t}*\\
d_{A_t}* & 0
\end{array}\right)
\]
acting on $\Omega^1(M,\g)\oplus \Omega^3(M,\g)$. Here $A_t$, $t\in [0,1]$, is a path in the space of connections joining $A$ with the
trivial connection. The spectral flow $I_A$ depends neither on the metric on $M$ nor on the choice of the path.

The index $\eta(g,M)$ depends on the metric $g$ on $M$. Recall that
a {\it framing} of $M$ is the homotopy class of a trivialization of
the tangent bundle $TM$. Given a  framing $f: M\to TM$ of $M$ we can
define {\it the gravitational Chern-Simons action}
\index{gravitational Chern-Simons action}
\begin{equation}\label{GCS}
I_M(g,f)=\frac{1}{4\pi}\int_M f^*Tr(\omega\wedge d\omega -\frac{2}{3}\omega\wedge\omega\wedge\omega),
\end{equation}
where $g$ is the metric on $M$, $\omega$ is the Levi-Civita
connection on $M$, and the integrand is the pullback via $f^*$ of
the Chern-Simons form on $TM$.

According to the Atiyah-Patodi-Singer theorem the
expression
\[
\frac{1}{4}\eta(g,M)+\frac{1}{12}\frac{I_M(g,f)}{2\pi}
\]
depends only on the homeomorphism class of the manifold
$M$ with the framing $f$, but not on the metric,
and this is true for any framing $f$.

These arguments suggest \cite{W}, \cite{FG}  that for manifolds with only irreducible and isolated flat connections, the leading term in the expression (\ref{CS-pert-series}) should be proportional to
\begin{multline}
\exp\left( d\frac{i\pi}{4}\eta(g,M)+i\frac{d}{24}I_M(g,f)-\frac{di\pi}{4}\right)\\
\sum_{[A]}
\exp\left(-\frac{2\pi i I_A}{4}+i(\frac{1}{h}+c_2(G))CS_M(A)\right)
\tau(M,A)^{1/2}(1+O(1/k)),
\end{multline}
where $\tau(M,A)$ is the Ray-Singer torsion. This expression differs from the original guess (\ref{CS-pert-series}) by the extra
factor $\exp(i\frac{d}{24} I_M(g,f))$.

Let us emphasize that this formula is {\it not a computation of the
path integral}, as there is nothing to compute. It is a rearrangement and adjustment of the natural guess for the leading
terms of the semi-classical expansion of the quantity to be defined.
The adjustment was made on the base of the concept that the
expression should not depend on the metric. Remarkably, at the
end it does not depend on the metric, though it still depends on the
framing.

Now let us look into higher order terms.

In the finite-dimensional case, when Feynman diagrams represent an
asymptotic expansion of an existing (convergent) integral, the sum of
Feynman diagrams in each order does not depend on the choice of the gauge
condition simply by the nature of these coefficients.

In the {\it infinite-dimensional case} we are defining the integral
as a sum of Feynman diagrams. Therefore, the independence of the
sum of Feynman diagrams on the choice of the gauge condition (a
metric on $M$ in the case of the Lorenz gauge condition for the Chern-Simons
theory) should be checked independently in each order. This was done
by Axelrod and Singer in \cite{AS1}, \cite{AS2} for acyclic
connections and by Bott and Cattaneo \cite{BC1}, \cite{BC2} for
trivial connections and rational homology spheres. One of the
important tools for the proof of such fact is the {\it graph complex} by
Konstevich \cite{Kon}.

More precisely the following has been proven. First write
the sum of higher order contributions as
\[
1+\sum_{n\geq 1}(ih)^nI^{(n)}_A(M,g)=\exp\left(\sum_{n\geq 1}(ih)^nJ^{(n)}_A(M,g)\right),
\]
where
\[
J^{(n)}_A(M,g)=\sum^{(c)}_{-\chi(\Ga)=n}
\frac{I_A(D(\Ga),M,g)(-1)^{c(D(\Ga))}}{|\Aut(\Ga)|}\,.
\]
Here the sum is taken over connected graphs only.
As it follows from \cite{AS1}, \cite{AS2}, \cite{BC1}, \cite{BC2}
this expression can be written as
\[
J^{(n)}_A(M,g)=F^{(n)}_A(M,f)+ \beta_n I_M(g,f),
\]
for some $F^{(n)}_A(M,f)$ and constants $\beta_n$. Here $I(g,f)$ is the gravitational Chern-Simons action (\ref{GCS}).

Thus, the sum of contributions of connected Feynman diagrams of
fixed order, after the substraction of the
gravitational Chern-Simons action with an appropriate numerical coefficient, does not depend on the metric, and, therefore, is an invariant of framed rational homology spheres
(in the works of Bott and Cattaneo) or of
a 3-manifold with an acyclic flat connection in a trivial principal $G$-bundle over it (in the works of Axelrod and Singer, and Kontsevich).

Finally, all these results can be summarized as the following
proposal for the partition function of the semiclassical
Chern-Simons theory. It depends on the framing and is proportional
to

\begin{multline}\label{CS-closed}
\exp\left(
c(h)(\frac{i\pi}{4}\eta(g,M)+i\frac{1}{24}I_M(g,f))\right)e^{-\frac{id\pi(1+b^1(M))}{4}}\sum_{[A]}
e^{i(\frac{1}{h}+c_2(G))CS_M(A)}\\ \exp({-\frac{2\pi i
I_A}{4}})\tau(M,A)^{1/2} \exp\bigl(\sum_{n\geq
1}(ih)^nF^{(n)}_A(M,f)\bigr),
\end{multline}
where $c(h)=d+O(h)$. Witten suggested \cite{W} the exact form of $c(h)$:
\[
c(h)=\frac{d}{1+hc_2(G)}=\frac{kd}{k+h^\vee},
\]
where $k=\frac{1}{h}$. This is the central charge of the corresponding {\it Wess-Zumino-Witten theory}.

In order to define the full TQFT from this proposition one should
define the partition function in the case when flat connections are not necessarily irreducible and when they are not isolated. For most the recent progress in this direction see \cite{CM}. Also, in order to have a TQFT we should define
partition functions for manifolds with boundaries. In the semiclassical framework this is an open problem.

\subsection{Wilson loops and invariants of knots}
Arguing ''phenomenologically`` one should anticipate that expectation values of topological\footnote{Topological observables do not require a metric in their definition.} gauge invariant observables in Chern-Simons theory, which do not require metric in their definition, should depend only on topological data and, therefore, give some {\it topological
invariants}.

\subsubsection{Wilson graphs}
\index{Wilson graphs} An example of topological  observables are
Wilson loops (\ref{W-loop}), or, more generally, Wilson graphs. Let
us clarify the notion of the Wilson loop observable in the
perturbative Chern-Simons theory. Wilson loops are defined for a
collection of circles embedded into $M$ otherwise known as a {\it
link}. Our goal is to define the power series which would be similar
to the perturbative expansion (\ref{as-FP}), as (\ref{CS-closed}) is
similar to the perturbative expansion (\ref{red-FP}). Most
importantly, such power series should not depend on the choice of a
metric on $M$ (the choice of the gauge condition). As we have seen
above, this is possible but one should  choose a framing $f: M\to
TM$ of the 3-manifold.

A framed Wilson graph observable (or simply a Wilson graph)
is a gauge invariant functional on connections
defined as follows. Let $\Ga$ be a framed graph\footnote{
The embedding $C\subset M$ induces the embedding $TC \subset TM$. Therefore a framing on $M$ induces a framing on $C$,
i.e the mapping $C\to (T_CM/TC)^{perp}$. A metric on $M$
defines the splitting $T_CM=NC\oplus TC$ where $NC$ is a
normal bundle to $C$. A framing $f: M\to TM$ defines
the framing $f_C: C\to NC$ of $C$ by attaching a normal vector $f_C(x)\in N_xC$ for every $x\in C$.}
embedded in $M$. Here by the framing we mean a section
of the co-normal bundle $x\in \Ga\to TM/T_x\Ga$ for
a generic point $x\in \Ga$ which agrees on vertices.

Framing, together with the orientation of $M$  defines a cyclic ordering of edges adjacent to each vertex. It is illustrated
in Fig. \ref{vert-frame}.

\begin{figure}[htb]
\sidecaption
\includegraphics[height=3cm,width=6cm]{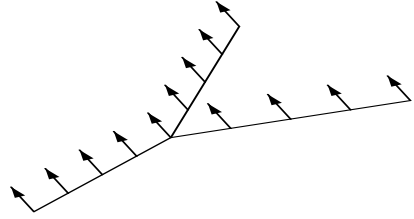}
\caption{Parallel framing at a trivalent vertex}
\label{vert-frame}
\end{figure}

To define a Wilson graph we should make the following choices:

\begin{enumerate}
\item  Choose a total ordering of edges adjacent to each
vertex which agrees with the cyclic ordering defined by the framing.
\item Choose an orientation of each edge.
\item Choose a total ordering of vertices of $\Ga$.
\item Choose a finite-dimensional representation $V$ for
each edge of $\Ga$.
\item Choose a $G$-invariant linear mapping $\nu: \CC\to V_1^{\ep_1}\otimes \dots \otimes V_k^{\ep_k}$ for each vertex. Here numbers $1,\dots, k$ enumerate edges adjacent
to the vertex, $\ep_i=+ $ if the edge $i$ is incoming to the vertex, $\ep_i=-$ if the edge $i$ is outgoing from the vertex,
$V_i^+=V_i$, $V_i^-=V_i^*$,  $V_i$ is the representation assigned to the edge $i$, and $V_i^*$ is its dual.
\end{enumerate}

As in the case of Feynman diagrams, the ordering of vertices, and on the edges adjacent to each vertex, defines a perfect matching on endpoints of edges. Choose such total ordering.

Use the coloring of edges by finite-dimensional $G$-modules and
the orientation of edges to define the space $ V_{a_1}^{\alpha_1}
\otimes V_{a_2}^{\alpha_2}
\otimes V_{a_3}^{\alpha_3}
\otimes V_{b_1}^{\beta_1}
\otimes V_{b_1}^{\beta_2}
\otimes \cdots $. Here indices $1,2,\cdots$ enumerate vertices,
letters $a_i,b_i,c_i,\dots$ enumerate edges adjacent to the vertex $i$, and $\alpha, \beta, \cdots =\pm$ indicate the orientations of edges $a,b,c,\cdots$ ($+$ if the edge is incoming, and $-$ if the edge is outgoing). The number of factors in the tensor product
is equal to the number of endpoints of edges.

The coloring of vertices defines the vector
\[
\nu_1\otimes \nu_2 \otimes \cdots \in V_{a_1}^{\alpha_1}
\otimes V_{a_2}^{\alpha_2}
\otimes V_{a_3}^{\alpha_3}
\otimes V_{b_1}^{\beta_1}
\otimes V_{b_1}^{\beta_2}
\otimes \cdots.
\]

The holonomy $h_e(A)$  along the edge $e$ is an element of $\End(V_e)$,
and therefore, it is a vector in $V_e\otimes V_e^*$, where
$V_e$ is the finite-dimensional $G$-module assigned to the edge. The tensor product of holonomies defines a vector $\otimes_e h_e(A)$ in the space
dual to $V_{a_1}^{\alpha_1}
\otimes V_{a_2}^{\alpha_2}
\otimes V_{a_3}^{\alpha_3}
\otimes V_{b_1}^{\beta_1}
\otimes V_{b_1}^{\beta_2}
\otimes \cdots.$.

The Wilson graph observable is the functional
on the space of connections defined
as
\[
W_\Ga(A)=\left< \otimes_e h_e(A),\nu_1\otimes \nu_2 \otimes \dots\right>
\]

Here is an example of the Wilson graph observable for the ''theta
graph``:
\[
\sum_{i_1,i_2,i_3}(h_{e_1}(A))^{i_1}_{j_1}(h_{e_1}(A))^{i_1}_{j_1}
(h_{e_1}(A))^{i_1}_{j_1}\nu_{i_1,i_2,i_3}\nu_{j_1,j_2,j_3}.
\]
The indices $i_k,j_k$ enumerate a basis in the representation
$V_k$ assigned to the edge $k=1,2,3$ and $\nu, \mu$ are $G$-invariant vectors in the corresponding tensor products.
Here we used an orthonormal basis in $\g$ which explains upper and lower indices.

\subsubsection{Feynman diagrams for Wilson graphs} As in the case of the partition function, define
the expectation value (\ref{CS-Wloop}) of the Wilson graph $\Ga$ as
a combination of formal power series, similar to the formula
(\ref{as-FP}) for the asymptotic expansion of corresponding finite-dimensional integrals.

Taking into account all we know for the partition
functions of the Chern-Simons theory we arrive to
the following proposal. The semiclassical ansatz for
the expectation value of the Wilson graph $W_\Ga$ is
\begin{multline}\label{CS-pert-loops}
\sum_{[A]}\exp\left(i\frac{CS(A)}{h}+\frac{id\pi}{4}\eta(A)\right) |{\det}'(D_A)|^{-1/2}\\
{\det}'((\Delta_A)_0)
\left(W_\Ga(A)+\sum_{n\geq 1}(ih)^nI^{(n)}_A(M, \Ga)\right).
\end{multline}
Here we assume that all flat connections are irreducible and isolated. All quantities are the same as in (\ref{CS-pert-series})
except
\[
I^{(n)}_A(M, \Ga)=\sum_{\Ga'\,,\, \chi(\Ga)-\chi(\Ga')=n}
\frac{I_A(\Ga',\Ga)}{|\Aut(\Ga)|}.
\]
The Feynman diagram rules in the presence of Wilson graphs are
essentially the same as for the partition function with weights
given in Fig. \ref{CS-F-diag}. The difference is that now there are
two types of edges, and two types of propagators (linear operators
assigned to edges). As for the partition function we have dashed
edges with 3-valent vertices. But now we also have solid edges, see
an example in Fig. \ref{W-examples}, vertices where only solid edges
meet, and vertices where two solid edges (with opposite orientations)
meet a dashed edge. The subgraph formed by solid edges is always
$\Ga$. The weights of vertices where only solid edges meet is given
by the coloring of this edge in $\Ga$. The weights of vertices where
two solid edges meet a dashed edge  are
described in Fig. \ref{w-weight}. The weight of a solid edge is
the parallel transport for the connection $A$ along this
edge.

\begin{figure}[htb]
\sidecaption
\includegraphics[height=3cm,width=6cm]{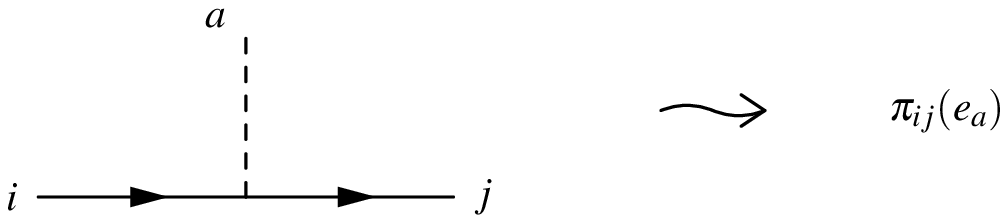}
\caption{Weights of trivalent vertices where two solid
edges meet one dashed edge}
\label{w-weight}
\end{figure}

\begin{figure}[htb]
\sidecaption[t]
\includegraphics[height=3cm,width=6cm]{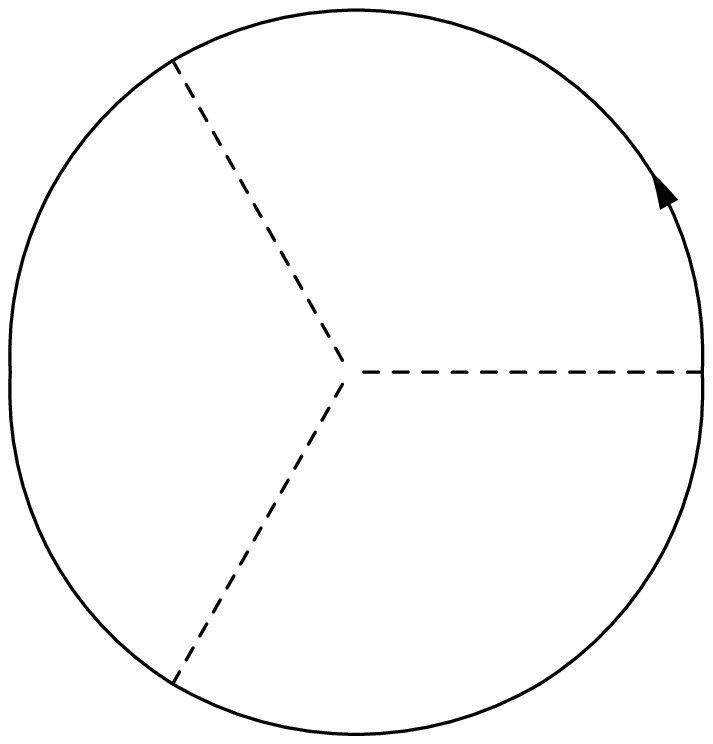}
\caption{An example of an order one graph}
\label{W-examples}
\end{figure}

One can show \cite{DBN}, \cite{GMM}, \cite{AF}, \cite{BT}, \cite{DT},
\cite{BC1}, \cite{BC2} that the sum of integrals corresponding to Feynman diagrams of order $n$ is finite for each $n$. Similarly to the vacuum partition function from the previous section, the semiclassical ansatz for the expectation
value of the Wilson graph depends on the framing, but remarkably not on the metric. When flat connections are irreducible
and isolated we arrive at the following expression

\begin{multline}\label{CS-closed}
\exp\left(c(h)(\frac{i\pi}{4}\eta(g,M)+\frac{i}{24}I_M(g,f))-\frac{id\pi(1+b^1(M))}{4}\right)\\
\sum_{[A]}
\exp\left(i(\frac{1}{h}+c_2(G))CS_M(A)-\frac{2\pi i I_A}{4}\right)\tau(M,A)^{1/2}\\
\left(W_\Ga(A)+
\sum_{n\geq 1}(ih)^nJ^{(n)}_A(M,\Ga,f)\right).
\end{multline}
Here the coefficients $J^{(n)}_A(M, \Ga, f)$ do not
depend on the metric but depend on the framing $f$ of $M$.
This formula defines the path integral semiclassically. Let us
emphasize again, that it is
not a result of computation of an integral. It is a definition, modeled after the semiclassical expansion of integrals
in terms of Feynman graphs. A remarkable {\it mathematical} fact is that every term is defined (the integrals do not diverge), and that it does not depend on the metric.

More careful analysis includes powers of $h$. A conjecture for counting powers of $h$ when $H_A^0, H_A^1 \neq \{0\}$ was proposed in \cite{FG}, \cite{J}, \cite{Ro1}. It agrees with the finite-dimensional analysis from previous sections and states that, in general, we should expect that the partition function is proportional to

\begin{multline}
\exp\left(d\frac{i\pi}{4}\eta(g,M)+i\frac{c(h)}{24}I_M(g,f)-\frac{d\pi i(1+b^1(M))}{4}\right)\\
\sum_{A} (2\pi (k+h^\vee)^{\frac{\dim(H^0_A)-\dim(H^1_A)}{2}} \frac{1}{Vol(G_A)}
\\ \exp\left(i(k+h^\vee)CS_M(A)-\frac{2\pi i I_A}{4}-i\pi\frac{\dim(H^0_A)+\dim(H^1_A)}{2}\right)\\
\int_{M_A}\tau^{1/2}
\left(W_\Gamma(A)+\sum_{n\geq 1}(ih)^{n}J^{(n)}_A(M,\Ga,f)\right).
\end{multline}

Here the sum is taken over representatives $A$ of connected components $M_A$ of the moduli space of flat connections in a principal $G$-bundle over $M$. The torsion $\tau$ is an element of $\otimes_i det(H^i_A)^{\otimes (-1)^i}\simeq (det(H^0_A)\otimes det(H^1_A)^*)^{\otimes 2}$.
The Lie algebra $\g$ has an invariant scalar product and therefore $H^0_A\subset \g$ has an induced volume form. Pairing this
volume form with the square root of the torsion gives a volume
form on $H^1_A$. Assuming the connected component is smooth
we can integrate functions with respect to this volume form.
The factor $Vol(G_A)$ is the volume of the stabilizer
of the flat connection.

\subsection{Comparison with combinatorial invariants}

Invariants of $3$-manifolds with framed graphs also can be constructed combinatorially (as a {\it combinatorial topological quantum field theory}). In \cite{ReshTur} such invariants were constructed using modular categories and the representation of $3$-manifolds as a surgery on $S^3$ or on a handlebody along a framed link.
Another combinatorial construction, based on the triangulation, was developed in \cite{TV}. This construction uses a certain class of
monoidal categories which are not necessarily braided.

These two constructions are related:
\[
Z^{RT}_M({\mathcal C}) Z^{RT}_{\overline{M}}({\mathcal C})=Z^{TV}_M({\mathcal C})=Z^{RT}_M(D({\mathcal C})).
\]
Here $Z^{RT}_M({\mathcal C})$ is the invariant obtained by the surgery \cite{ReshTur}, $Z^{TV}_M({\mathcal C})$ is the invariant obtained by the
triangulation, and the category $D({\mathcal C})$ is the
center (the double) of the category ${\mathcal C}$,
see for example \cite{Ka}, and $\overline{M}$ is
the manifold $M$ with the reversed orientation.

Most interesting known examples of modular categories are quotient
categories of finite-dimensional modules over quantized universal
enveloping algebras at roots of unity, see
\cite{ReshTur}, \cite{HA}, \cite{GK}. Such categories are parametrized
by pairs $(\epsilon, \g)$, where $\epsilon=\exp(\frac{2\pi i m}{r})$
with mutually prime $m$ and $r$ and $\g$ is a simple Lie algebra.
Denote the truncated category of modules over $U_\epsilon(\g)$ by
${\mathcal C}_\ep(\g)$ (see \cite{ReshTur}, \cite{HA}, \cite{GK} for details). When $m=1$ and $r=k+c_2(\g)$ this category
is naturally equivalent to the braiding-fusion category of the WZW
conformal field theory at level $k$, i.e. to the category of
integrable modules over the affine Lie algebra $\hat{\g}$ at level
$k$ with the fusion tensor product \cite{KL}. This conformal field
theory is directly related to the Chern-Simons theory at level $k$.
The arguments in favor of this are not perturbative \cite{ADPW}.
They are based on ideas of geometric quantization.

For other values of $m$, the category ${\mathcal C}_\ep(\g)$ is also
equivalent to the braiding-fusion category of a conformal field
theory, but this conformal field theory is not directly related to
the Chern-Simons theory.

The main conjecture relating the combinatorial
and geometric approaches is that the {\it following power series
are identical}:

\begin{itemize}

\item The asymptotic expansion of the combinatorial
TQFT based on the category ${\mathcal C}_\ep(\g)$ when
$\ep=\exp(\frac{2\pi i}{k+c_2(\g)})$ and $k\to \infty$.

\item The semiclassical expansion for the Chern-Simons path integral in terms of Feynman diagrams.
\end{itemize}
Of course this is an outline of a number of conjectures rather than a conjecture. The main reason is that the semi-classical partition functions for the Chern-Simons theory in terms of Feynman integrals are not worked out  yet.

The precise statement about the correspondence between these formal power series was outlined in \cite{W} \cite{FG}, and then many results were obtained in \cite{J}, \cite{Ga}, \cite{Ro1}, \cite{Ro2}, \cite{AH}.

To compare these invariants one should first choose a canonical 2-framing on $M$ \cite{At}. The 2-framing on $M$ is a section of
$TM\times TM$. The Levi-Civita connection on $TM$ defined by the Riemannian structure on $M$ induces a connection on $TM\times TM$.
The canonical 2-framing defines the branch of the gravitational Chern-Simons action with the property
\[
d\frac{\pi}{4}\eta(g,M)+\frac{c(h)}{24}I_M(g,f)=0.
\]
One should expect that the choice of such 2-framing presumably fixes the framing
in higher order corrections, though this part is still conjectural.

When the moduli space of flat connections on a principal $G$-bundle over $M$ is a collection of isolated points, each corresponding to an irreducible flat connection, one
should expect the following:
\[
Z^{RT}_M\sim \frac{1}{|Z(G)|} \exp({-\frac{d\pi i}{4}})\sum_{[A]}
\exp({\frac{-2i\pi I_A}{4}}) \exp({(k+h^\vee)CS_M(A)})(1+O(1/k)),
\]
where $|Z(G)|$ is the number of elements in the center of
$G$, and $Z^{RT}$ is the combinatorial invariant corresponding
to the category ${\mathcal C}_\ep(\g)$. When connected components of the moduli space have non-zero-dimension and are smooth, the expected asymptotic behavior is
\begin{multline}
Z^{RT}_M\sim \exp\left(-\frac{d\pi i(1+b^1(M))}{4}\right)\sum_{[A]} (2\pi(k+h^\vee))^{\frac{\dim(H^0_A)-\dim(H^1_A)}{2}}\frac{1}{Vol(G_A)}
\\ \exp\left(i(k+h^\vee)CS_M(A)-\frac{2\pi i I_A}{4}-i\pi\frac{\dim(H^0_A)+\dim(H^1_A)}{2}\right)\\
\int_{M_A}\tau^{1/2}
W_\Gamma(A)\left(1+O(1/k)\right).
\end{multline}

Many examples confirming this prediction were analyzed in
\cite{FG}, \cite{J}, \cite{Ga}, \cite{Ro1}, \cite{AH}.

\printindex
\end{document}